\def\dOi{10(4:5)2014}
\subjclass{D.1.3 Concurrent Programming,
  D.2.4 Software/Program Verification -- Formal methods.}
\lstdefinelanguage{Actor}
{morekeywords={method, fields, actor, var, self, void, int, boolean, if, else, for, case, and, or, type},
sensitive=true,
morecomment=[l]{//},
morecomment=[s]{/*}{*/},
morestring=[b]",
literate={--}{{$--$}}2 {==}{{$==$}}2 {epsilon}{{$\epsilon$}}1 {neg}{{$\sim$}}1
}
\newcommand{\cal}{\mathcal}
\theoremstyle{plain}\newtheorem{proposition}[thm]{Proposition}
\theoremstyle{plain}\newtheorem{lemma}[thm]{Lemma}
\theoremstyle{plain}\newtheorem{theorem}[thm]{Theorem}
\theoremstyle{plain}
\newif\iftype \typefalse
\newif\ifconf \conffalse
\newif\ifcamera \camerafalse
\newcommand{\bigfract}[2]{\frac{^{\textstyle #1}}{_{\textstyle #2}}}
\newcommand{\rulename}[1]{{\sc(#1)}}
\newcommand{\rulenamex}[1]{\mbox{\scriptsize\sc(#1)}}
\def \mathrule #1#2#3{\begin{array}{l} 
                       {\rulenamex{#1}}
                       \\ \bigfract{#2}{#3}
                      \end{array}}
\def \mathax #1#2{\begin{array}{l} 
                   {\mbox{\scriptsize {{\sc (#1)}}} } 
                   \\ #2
                  \end{array}}
\newcommand{\subst}[2]{[\raisebox{.5ex}{\footnotesize$#1$}  /
                        \raisebox{-.5ex}{\footnotesize$#2$} ]}
\newcommand{\combinator}[2]{\begin{array}{c}{\small #1} \\ {\small #2} \end{array}}
\newcommand{\pinull}{{\tt 0}}
\newcommand{\invk}{\mbox{\tt !}}
\newcommand{\prefix}{\mbox{\tt .}}
\newcommand{\newact}[1]{{\tt new} \; #1}
\newcommand{\letin}[3]{{\tt let} \; #1 = #2 \; {\tt in} \; #3}
\newcommand{\State}{{\tt S}}
\newcommand{\StateT}{{\tt T}}
\newcommand{\lred}[1]{\stackrel{#1}{\longrightarrow}}
\newcommand{\lleadsto}[1]{\stackrel{#1}{\leadsto}}
\newcommand{\eqdef}{\stackrel{{\it def}}{=}}
\newcommand{\f}{{\tt f}}
\newcommand{\g}{{\tt g}}
\newcommand{\ite}{\mbox{{\tt ;}}}
\newcommand{\false}{{\it f\hspace{-3pt}f}}
\newcommand{\true}{{\it t\hspace{-2.3pt}t}}
\newcommand{\upd}{\mapsfrom~}
\newcommand{\var}[1]{{\it var}(#1)}
\newcommand{\fresh}[1]{{\it fresh}(#1)}
\newcommand{\free}[1]{{\it free}(#1)}
\newcommand{\fields}[1]{{\it fields}(#1)}
\newcommand{\pred}[1]{{\it Pred}(#1)}
\newcommand{\dom}[1]{{\it dom}(#1)}
\newcommand{\proj}[2]{{#1\!\!\downharpoonright_{#2}}}
\newcommand{\wt}[1]{\widetilde{#1}}
\newcommand{\adef}[1]{{\tt #1}}
\newcommand{\semantics}[1]{[\![ #1 ]\!]}
\newcommand{\xbar}{\wt{x}}
\newcommand{\eqdot}{\stackrel{\bullet}{=}}
\newcommand{\actor}{${\tt Actor}$}
\newcommand{\actro}{${\tt Actor^{\tt ro}}$}
\newcommand{\actba}{${\tt Actor_{\tt ba}}$}
\newcommand{\actroba}{${\tt Actor^{\tt ro}_{\tt ba}}$}
\newcommand{\actsl}{${\tt Actor^{\tt sl}}$}
\begin{document}

\title[Decidability Problems for Actor Systems]
      {Decidability Problems for Actor Systems\rsuper*}

\author[F.~S.~de Boer]{Frank S.~de Boer\rsuper a}
\address{{\lsuper a}CWI/LIACS,  The Netherlands}
\email{{\tt f.s.de.boer@cwi.nl}}
\thanks{{\lsuper{a-d}}This work has been has been supported by the HATS Project No.~FP7-231620 
(Highly Adaptable and Trustworthy Software using Formal Models) and by the
ENVISAGE Project No.~FP7-610582 (Engineering Virtualized Services) of the EC}

\author[M.~M.~Jaghoori]{Mohammad Mahdi Jaghoori\rsuper b}
\address{{\lsuper b}Leiden University, The Netherlands}
\email{{\tt jaghoori@cwi.nl}}

\author[C.~Laneve]{Cosimo Laneve\rsuper c}
\address{{\lsuper{c,d}}Department of Computer Science and Engineering, University of Bologna, INRIA Focus, Italy}
\email{\{cosimo.laneve,gianluigi.zavattaro\}@unibo.it}

\author[G.~Zavattaro]{Gianluigi Zavattaro\rsuper d}

\keywords{Actors, RESTful services, decidability problems, 2-Counter
  Machines, well-structured transition systems, embedding relation.}

\titlecomment{{\lsuper*}This paper is a full version of an extended
  abstract that appears in~\cite{DJLZ2012}. With respect to the
  conference paper, this one contains examples and discussions, see
  Section~\ref{sec.examples}, and complete proofs of statements.}

\begin{abstract}
We introduce a nominal actor-based language and study its expressive power.
We have identified the presence/absence of fields as a crucial feature:
the dynamic creation of  names
in combination with fields gives rise to Turing completeness.
On the other hand, restricting  to stateless actors 
gives rise to systems  for which properties such as termination are decidable.
This decidability result still holds for actors with states when
the number of actors is bounded and the state is read-only.

%

\end{abstract}

\maketitle              

\ifcamera
\else
\pagestyle{plain}
\fi

\section{Introduction}

Since their introduction in~\cite{Hewitt69}, actor languages have  evolved as a powerful computational model for defining distributed and concurrent
systems~\cite{Agha90,Agha97}. 
Languages based on actors have been also designed for modelling embedded systems \cite{LeeLN09,LeeACtorEmbedded03}, 
wireless sensor networks \cite{CheongSensor05,RazaviBSKSS10}, multi-core programming \cite{KarmaniSA09}, and web services \cite{Chang-web-sac2007,Chang-DAIS07}.
The underlying concurrent model of actor languages also forms the basis of the   programming languages
Erlang \cite{Armstrong10Erlang} and Scala \cite{haller09tcs} that have recently gained in popularity, in part due to their support for scalable concurrency.

In actor languages~\cite{Agha90,Hewitt69,Sirjani06}, actors use a queue
for storing the invocations to their methods in a FIFO manner. The queued invocations
are processed sequentially by executing the corresponding method bodies.
The encapsulated memory of an actor is represented by a finite number of \emph{fields}
that can be read and set by its methods
and as such exist throughout  its life time.

In this paper we introduce a nominal actor-based language and study its expressive power. This language,  besides dynamic creation of actors, also supports
the dynamic creation of variable names that can be stored in fields and 
communicated in method calls.
As such our nominal actor-based language gives rise to unboundedness in
(1) internal queues of the actors, (2) dynamic actor creation/activation and (3) dynamic creation of variable names.

\emph{Statelessness} has recently been adopted as a basic principle 
of service oriented computing, in particular by RESTful services. 
Such services are designed to be stateless, and contextual information should be added to messages, so a service can customize replies simply by looking at the received request messages. 
In service oriented computing read-only fields (which are initialized upon activation)  are used to provide configuration/deployment information that distinguishes the distinct instances of the same service.
We have identified the presence/absence of fields as a crucial  feature
of our language:
 (1) and (3) in combination with fields gives rise to a Turing complete calculus.
On the other hand, restricting to stateless actors gives rise to systems for which properties such as termination 
\ifconf
\else
and process reachability 
\fi
are decidable. In order to preserve this decidability
result to actors with states we have to restrict the number of actors to be finite and the state to be read-only.

More specifically,  we model a system  consisting of finitely many actors with 
read-only fields as a  well-structured transition system \cite{Finkel:2001} -- 
henceforth 
\ifconf
termination.
\else
termination and process reachability are decidable.
\fi
Further, we show that an
abstraction of  systems of unboundedly many stateless actors 
(i.e., actors without fields) which preserves
termination and process reachability
is also an instance of well-structured transition system.
It turns out that, in the context of unbounded actor creation, 
this restriction to stateless actors is necessary by
a reduction to the halting problem for 2 Counter Machines.

To the best 
of our knowledge, the technique we use to establish the decidability results
for the above languages
is original since (\emph{i})  these systems respectively admit the creation of unboundedly many variables
 and actor names; (\emph{ii}) actors in general
are sensitive to the identity of names because of the presence of a name-match operator. 
In particular, in the case of finitely many actors with read-only fields,
we define an equivalence on process instances in terms of 
renamings of the variables that \emph{generate the same partition}.
This equivalence allows us to compute an upper bound to the
instances of method bodies, which is the basic argument for the model being a
well-structured transition system.
In case of systems 
with unboundedly many stateless actors, 
the reasonable extensions of this equivalence on process
instances have been unsuccessful 
because of the required abstraction of
the identity of actor names. Therefore we decided to apply our
arguments to an abstract operational model where messages
may be enqueued in every actor of the same class. The above equivalence can be  successfully used in this  model, thus yielding again the
upper bounds for the number of method body instances.  Further, the  abstract
model still provides enough information to derive
decidable properties of the language.

\paragraph{\bf Related Works.}
There exists a vast body of  related work on decidability of infinite-state systems 
(see  \cite{abdulla:96}).
However, to the best of our knowledge, 
the specific characteristics of 
the pure  asynchronous mechanism of queued and dequeued method calls in actor-based languages
has not been addressed.
%
It is interesting to observe that 
the most expressive known fragment of the  pi-calculus for which interesting verification problems are still decidable is the depth-bounded 
fragment~\cite{meyer08}. 
In~\cite{wies10} the theory of well-structured transition
systems is applied to prove the decidability of coverability problems for bounded depth pi-calculus.
Our nominal actor language also features the creation and communication of  new names. In our decidable fragments however, differently from the
depth-bounded pi-calculus fragment, 
we do not 
restrict the creation and communication of names. For instance, in the queue of an actor we might have 
unboundedly many messages (representing process continuations) where each message shares one name with the previous message in the queue.
\ifcamera
\else
Decidability of asynchronous communication via a shared data space, 
read operation and non-blocking test operators on the shared space
of the coordination language Linda has been studied in \cite{Busi20009}.
In \cite{BouajjaniFQ07} the authors introduce an algorithm for reachability analysis of multithreaded object-oriented programs.
This algorithm requires both a bound on the number of context switches (between the threads) and the visible heap.
Further their approach is specific to object-oriented programs and is not applicable to  actor-based languages,
which are based on a different concurrency model.
As another final example, a system composed of finitely many actors  can be easily modeled
as a communicating finite state machine \cite{Brand83}, a formalism that is 
known to be Turing complete \cite{Brand83,Memmi85}.
However, this modelling does not scale to infinite actors and does not
display the execution model of actors.
\fi
%
Recent work on  actor-based language focusses on deadlock analysis:
In \cite{GiachinoL11}, a technique for the deadlock analysis has been introduced for a version of Featherweight Java
which features asynchronous method invocations and a synchronization mechanism based on futures variables.
The approach followed in \cite{deBoer2012} 
for detecting deadlock in an actor-like subset of Creol \cite{Johnsen07} is based on suitable over-approximations.

\section{The language {\actor}}
\label{sec.thelanguage}
Four disjoint infinite sets of names are used:
\emph{actor classes}, ranged over by $\adef{C}$, $\adef{D}$, $\cdots$,
\emph{method names}, ranged over by $m$, $m'$, $n$, $n'$, $\cdots$, 
\emph{field names}, ranged over by ${\f}$, ${\g}$, $\cdots$, 
and \emph{variables}, ranged over by $x$, $y$, $z$, $\cdots$.
For notational convenience, we use $\wt{x}$ when we refer to a list of variables $x_1,\dots,x_n$ (and similarly for other kinds of terms).

The syntax of the language {\actor} uses \emph{expressions} $E$ and 
\emph{processes} $P$ defined by the rules
\[
\begin{array}{rl}
E \quad ::=  &   \quad \f \quad |  \quad x \quad | \quad \newact{\adef{C}}(\wt{E})
\\
P \quad ::= & \quad \pinull  \quad | \quad  (\f \upd E) \prefix P 
 \quad | \quad \letin{x}{E}{P}
 \quad | \quad x \invk m(\wt{E}) \prefix P \quad |
 \\  & \quad
  [E = E] P \ite P \quad 
  |  \quad  P+P 
\end{array}
\]
%
An expression $E$ either  denotes 
a value stored in a field ${\f}$, or a variable $x$,
or a new actor of class $\adef{C}$ with fields initialized to the values of
$\wt{E}$.
%
A process may be either the terminated one, denoted by $\pinull$, or a field update 
$({\f} \upd E) \prefix P$, or the assignment $\letin{x}{E}{P}$ of a value to a variable, or an invocation 
$x \invk m(\wt{E}) \prefix P$ of a method $m$ of the actor $x$
with arguments $\wt{E}$, or a check $[E = E'] P \ite P'$ of the identity of 
expressions
with positive and negative continuations, or, finally  a nondeterministic process $P+P'$. 
We never write the tailing $\pinull$ in processes; for example 
$(\f \upd x) \prefix \pinull$ will be always shortened into $(\f \upd x)$.
We will also shorten $[E = E'] P \ite \pinull$ into $[E = E'] P$.
Following the tradition of process calculi like CCS~\cite{Mil89},
we model sequences of actions by exploiting the 
so-called prefix notation (like, e.g., in $(\f \upd E) \prefix P $). 

The operation $\letin{x}{E}{P}$ is a binder of the occurrences of the
variable $x$ in the process $P$ that are not already bound by a nested
${\tt let}$ operation of $x$; the occurrences of $x$ in $E$ are \emph{free}.
Let $\free{P}$ be the set of variables of $P$ that are not bound. 
As usual, we identify processes $P$ and $P'$ that are equal up-to 
\emph{alpha-conversion} of bound names, written $P =_\alpha P'$.
The 
substitution operation $P \subst{y}{x}$ returns the process $P$ where the
free occurrences of $x$ are replaced by $y$. 
The 
substitution operation $P \subst{\wt{y}}{\wt{x}}$ returns the process $P$ where the
free occurrences of $\wt{x}$ are \emph{simultaneously} replaced by $\wt{y}$. In case,
alpha-renaming is required for avoiding name clashes. For example 
$\letin{z}{x}{\newact{\adef{C}}(y,z)}\subst{z,u}{x,y}$ returns
$\letin{z'}{z}{\newact{\adef{C}}(u,z')}$.

In the following examples and
encodings we shorten $\letin{x}{\f}{x \invk m(\wt{E}) \prefix P}$ into 
$\f\invk m(\wt{E}) \prefix P$ (we have preferred the simpler syntax to ease
the descriptions).

A \emph{program} is a \emph{main process} $P$ and a finite set of \emph{actor class 
definitions} $\adef{C} \prefix 
m(\xbar) = P_{\adef{C},m}$, where $P_{\adef{C},m}$ may contain the special 
variable ${\it this}$ (which can be seen as an implicit formal parameter of each method). In the following we restrict to programs that are
\begin{enumerate}
\item
\emph{unambiguous}, namely, every pair $\adef{C}$, $m$ has at most one definition;
\item
\emph{correct}, namely, let $\fields{\cdot}$ be
a map that associates a tuple of 
field names to every actor class. Then, 
(\emph{i}) in every expression $\newact{\adef{C}}(\wt{E})$, the length of the 
tuples $\wt{E}$ and $\fields{\adef{C}}$ are the same;
(\emph{ii}) in every definition $\adef{C} \prefix 
m(\xbar) = P_{\adef{C},m}$, the field names 
occurring in $P_{\adef{C},m}$ are in the tuple  $\fields{\adef{C}}$.
\end{enumerate}
In this paper, we abstract from types and type-correctness because we are
only interested in expressive power issues. However, it is 
straightforward to equip the above language with a type discipline.

\ifcamera
\else
\subsection{Examples}
\label{sec.examples}
To illustrate the features of the actor language we discuss four
examples.

\begin{exa}
The \emph{merger} service is an application that forwards 
to a back office server the values carried by two invocations of 
clients to methods ${\it first}$ and ${\it second}$. The \emph{merger}
freezes the forward as long as there is no invocation of either  
${\it first}$ or ${\it second}$ -- in the process calculi community,
the \emph{merger} is modelled by a join pattern~\cite{FG96}.
In our actor language the \emph{merger} is modelled by a class actor
$\adef{Merger}$ with six fields: ${\tt t}$ and ${\tt f}$ storing the  values \emph{true} and \emph{false}, respectively; ${\tt fst}$ and ${\tt snd}$
storing \emph{true} or \emph{false} according to the method 
${\it first}$ and ${\it second}$ have been invoked or not, respectively;
${\tt g}$ storing the  argument of the invocation; ${\tt srv}$ storing the
actor name of the back office server.

The class actor ${\adef{Merger}}$ 
includes the following three methods:
{\small
\[
\begin{array}{rll}
\adef{Merger} \prefix 
{\it init}(x) & = &
({\tt t} \upd \true) \prefix 
({\tt f} \upd \false) \prefix 
({\tt fst} \upd \false) \prefix
({\tt snd} \upd \false) \prefix
({\tt srv} \upd x)
\\
\\
\adef{Merger} \prefix 
{\it first}(a) & = &
[{\tt snd} = {\tt t}] \; {\tt srv} \invk {\it m}(a,{\tt g}) \prefix
 ({\tt snd} \upd {\tt f}) \ite 
 \\
 && \qquad \qquad 
\; [{\tt fst} = {\tt f}] \; ({\tt fst} \upd {\tt t}) \prefix 
 ({\tt g} \upd a) \ite {\it this} \invk {\it first}(a)
\\
\\
\adef{Merger} \prefix 
{\it second}(a) & = &
[{\tt fst} = {\tt t}] \; {\tt srv} \invk {\it m}({\tt g},a) \prefix
 ({\tt fst} \upd {\tt f}) \ite 
\\
&& \qquad \qquad 
\; [{\tt snd} = {\tt f}] \; ({\tt snd} \upd {\tt t}) \prefix 
 ({\tt g} \upd a) \ite {\it this} \invk {\it second}(a)
\end{array} 
\]
}%
The method ${\it init}$ manifests a basic feature of our actor language: 
the creation of new variables. In particular, $\true$ and $\false$ are free
variables in the method definition of ${\it init}$. When ${\it init}$ will be invoked, they 
will be replaced by two different fresh variables
(said in pi-calculus jargon~\cite{PIC}, we are 
assuming an implicit $(\nu \; \true,\false)$- at the beginning of the method body). These 
variables are stored in the fields ${\tt t}$ and ${\tt f}$, respectively and they 
will be used to 
update the fields ${\tt fst}$ and ${\tt snd}$ appropriately
.
The merger forwards two messages to the server, which is stored in the field 
${\tt srv}$. These messages have been received through invocations of {\it first} and {\it second}.
No forward occurs if one of the two methods has not been invoked. In particular,
{\it first} (the method {\it second} behaves in a similar way) performs the
forward if {\it second} has been evaluated \emph{and} its message has not
been already forwarded  (field ${\tt snd}$ equal to $\true$); in this case
the parameter of the invocation of {\it second} has been stored into ${\tt g}$. Otherwise, if there is no
previous invocation to {\it first}, still to be forwarded, (field ${\tt fst}$ equal 
to $\false$) then ${\tt fst}$ is set to $\true$ and the parameter stored into ${\tt g}$.
There is a possibility that {\it first} is evaluated and a previous evaluation 
of {\it first} has still to be forwarded (field ${\tt fst}$ equal 
to $\true$). In this case,  the invocation is bounced back (it is enqueued in the 
actor queue -- see the operational semantics).

%
%
%
%
\end{exa}
\fi

\ifconf
\else
\begin{exa}
\newcommand{\username}{{\tt username}}
\newcommand{\password}{{\tt password}}
\newcommand{\auth}{{\tt auth}}
\newcommand{\ctoken}{{\tt ctoken}}
\newcommand{\atoken}{{\tt atoken}}
\newcommand{\server}{{\tt server}}
\newcommand{\sfield}{{\tt server}}
As a second example we model the
OpenID authentication protocol~\cite{openID}.
Three actors are considered: a {\it client}, a {\it server},
and an {\it  authProvider}.
The {\it client} sends a {\it login} request to the {\it server}.
The {\it server} generates two secure tokens {\it clientToken}
and {\it authToken}, used for secure communication with the 
{\it client} and the {\it  authProvider}, respectively,
and then sends the two tokens to the corresponding actors.
Subsequently, the {\it client} sends an {\em authentication}
message to the {\it  authProvider} that (after
checking the correctness of the username and password)
communicates to the {\it server} whether the authentication
{\it succeeds} or {\it fails}.
In the following, we abstract from the management of username
and password and we simply assume that they are stored in the
fields ${\tt u}$ and ${\tt p}$ of the {\it client} 
when it sends the {\it authenticate} message
to the {\it authServer}.

The class ${\adef{C}}$ of the {\it client}
includes the following two definitions:
$$
\begin{array}{rll}
\adef{C} \prefix 
{\it init}({\it server}) & = &
{\it server} \invk {\it login} ({\it this}) \\

\adef{C} \prefix 
{\it token}({\it authServer},{\it clientToken}) & = &
{\it authServer} \invk {\it authenticate} ({\tt u}, {\tt p},{\it clientToken})
\end{array} 
$$

The class ${\adef{S}}$ of the {\it server}
includes the following four definitions:
$$
\begin{array}{rll}
\adef{S} \prefix 
{\it init}({\it authServer}) & = &
(\auth \upd {\it authServer}) 
\\
\adef{S} \prefix 
{\it login}({\it client}) & = &
\auth \invk {\it open} ({\it this},{\it clientToken},{\it authToken}) \prefix \\
& &
{\it client} \invk {\it token} (\auth, {\it authToken})

\\
\adef{S} \prefix 
{\it succeeds}({\it authToken}) & = & \ldots
\\
\adef{S} \prefix 
{\it fails}({\it authToken}) & = & \ldots
\end{array} 
$$

The class ${\adef{A}}$ of the {\it authServer}
includes the following two definitions (we leave unspecified
the check of the correctness of username and password):
$$
\begin{array}{rll}
\adef{A} \prefix 
{\it open}({\it server},{\it cToken},{\it aToken}) & = &
(\server \upd {\it server}) \prefix (\ctoken \upd {\it cToken}) \prefix \\
& & (\atoken \upd {\it aToken})
\\
\adef{A} \prefix 
{\it authenticate}(u,p,{\it cToken}) & = & 
[\ctoken = cToken] \, 
\sfield \invk {\it succeeds}(\atoken) \ite \\
& &  \qquad \qquad \qquad \qquad  \ \sfield \invk {\it fails}(\atoken)
\end{array} 
$$

The main program that instantiates the {\it client},
the {\it server} and the {\it authProvider} is as follows:
$$
\begin{array}{l}
\letin{{\it client}}{\newact{\adef{C}}}{ }
\\
\qquad 
\letin{{\it server}}{\newact{\adef{S}}}{ }
\\
\qquad \qquad 
\letin{{\it authProvider}}{\newact{\adef{A}}}{}
\\
\qquad \qquad \qquad {\it client} \invk {\it init}({\it server}) \prefix 
{\it server} \invk {\it init}({\it authProvider})
\end{array}
$$
\end{exa}

\begin{exa}
We illustrate the modeling of a register that stores natural numbers and supports the
operations of increment and decrement, when its value is positive.
In the absence of data-types in our nominal actor language, we model a register by an actor
and implement the value of the register by the
number of invocations of the method ${\it item}$ stored in its  queue.
Below we describe the corresponding class $R$, assuming a class $Ctrl$ which encodes the control of the register machine.
When an  operation
is performed, the register replies with an invocation ${\it run}({\it pc}, \true, \false)$, where ${\it pc}$ is a suitable continuation. 
The continuation is unique when the increment is invoked; the 
decrement has two possible continuations: a positive one, in case of success, and
a negative one (the register was 0, no decrement is performed).

\medskip

\; {\tt // $R$ has a field called} \ ${\tt dec}$, \quad {\tt ${\it Ctrl}$ is an actor
in the context}
\[\eqalign{
 R.{\it inc}({\it pc}, \true,\false) &=R \invk {\it item}(\true,
\false) \prefix {\it Ctrl} \invk {\it run}({\it pc}, \true,\false)
\cr
 R.{\it dec}({\it pc}, {\it pc}', \true,\false) &=({\tt dec} \upd \true)
 \prefix R \invk {\it checkzero}({\it pc}, {\it pc}', \true,\false)
\cr
 R.{\it checkzero}({\it pc}, {\it pc}', \true,\false) &=[{\tt dec} = \true]
 {\it Ctrl} \invk {\it run}({\it pc}', \true,\false) \prefix ({\tt dec}\upd \false)
\ite {\it Ctrl} \invk {\it run}({\it pc}, \true,\false)
\cr
 R.{\it init}(\true,\false) &=({\tt dec} \upd \false)
\cr
 R.{\it item}(\true,\false) &=[{\tt dec} = \false] R \invk {\it item}(\true,
\false) \ite ({\tt dec} \upd \false)
  }
\]
Method ${\it inc}$ simply gives rise to the storage of a new message  ${\it item}$ and 
a trigger of the continuation.
Executing a message ${\it item}$, amounts to bouncing the invocation back if the 
value of the field ${\tt dec}$ is false. In this case the invocation is enqueued again. 
Otherwise the invocations is ``consumed'' because there was a pending 
decrement to perform (see below) and ${\tt dec}$ is set to true (indicating that there is no pending decrement
anymore). Method ${\it dec}$ is the tricky one: it sets ${\tt dec}$ to true
and add an invocation of ${\it checkzero}$ (we are assuming that the register queue
always contains at most one invocation of ${\it dec}$). That is, we are postponing the
triggering of the continuation to the evaluation of  ${\it checkzero}$ that will
occur \emph{after} the evaluation of any other method in the queue of the register.
When ${\it checkzero}$ will be evaluated either (i) the ${\tt dec}$ field is true,
this means that the register did not contained any ${\it item}$ message -- its
value was 0 -- and the continuation ${\it pc}'$ is triggered; or (ii) 
the ${\tt dec}$ field is false, that is the register has been decremented and 
the  continuation ${\it pc}$ is triggered.

A refinement of the above register is used in Theorem~\ref{thm.undecidablestatefull} 
to demonstrate the Turing completeness of a sublanguage of {\actor}.
\end{exa}

\begin{exa}\label{ex:taskManager}
In Sections~\ref{sec.decidability} and~\ref{sec.stateless}
we will respectively prove decidability results for two fragments of our actor language:
in the first fragment only the main program can create new actors (so 
boundedly many actors can be created) and fields are read-only, while in the
second fragment actors have an empty state.
To illustrate these two fragments we show how they can be used to model
a system in which
$n$ workers are used to execute $n$ distinct tasks indicated by a client.

In the fragment with bounded actors and read-only fields the model can be described 
by considering a class $C$ for the client with an {\it init}
method responsible for invoking the task manager, passing him the description
of the $n$ tasks to be executed:

\; {\tt // $C$ has no fields} 
\[C.{\it init}({\it TManager}) =  {\it TManager} \invk {\it tasksExec}(t_1,\cdots,t_n)
\]

The task manager is an instance of a class providing a method called
{\it tasksExec} able to pass to the workers the description of their 
relative tasks:

\; {\tt // $TM$ has $n$ fields called} \ ${\tt W1},\cdots,{\tt Wn}$,
\[TM.{\it tasksExec}(p_1,\cdots,p_n) =  {\tt W1} \invk {\it exec}(p_1)
\prefix \cdots \prefix {\tt Wn} \invk{\it exec}(p_n)
\]

Finally, a worker is an instance of a class $W$ with one method
{\it exec} able to execute the passed task (we leave this method
unspecified):

\; {\tt // $W$ has no fields and one method ${\it exec(p)}$} 

The main program first instantiates the {\it workers},
then the {\it task manager} by storing in its state the reference
to the workers, and finally the {\it client}; after the {\it init}
method of the client is invoked passing him a reference to the
task manager:
$$
\begin{array}{l}
\letin{{\it w_1}}{\newact{\adef{W}}}{}
\quad\cdots\quad 
\letin{{\it w_n}}{\newact{\adef{W}}}{ }
\\
\qquad \qquad 
\letin{{\it taskManager}}{\newact{\adef{TM}}s({\it w_1},\cdots,{\it w_n})}{}
\\
\qquad \qquad \qquad 
\letin{{\it client}}{\newact{\adef{C}}}{}
{\it client} \invk {\it init}({\it taskManager}) 
\end{array}
$$

In the fragment with stateless actors we present an alternative specification
of the system in which the classes for the client and the workers are
as above, while the task manager is defined as follows:

\; {\tt // $TM$ has no fields} 
\[\eqalign{
 TM.{\it tasksExec}(p_1,\cdots,p_n) &= {\it this} \invk {\it nextTask}
(p_1,\cdots,p_n,x_1,x_2,\cdots,x_n,x_1)\cr
 TM.{\it nextTask}(p_1,\cdots,p_n,v_1,\cdots,v_{n+1}) &= 
[v_1 = v_2] \pinull \ite 
\letin{{\it w}}{\newact{\adef{W}}}{w \invk{\it exec}(p_1)}\prefix\cr
&\phantom{={\ }}
{\it this} \invk {\it nextTask}
(p_2,\cdots,p_n,p_1,v_2,\cdots,v_{n+1},v_1)\cr
  }
\]
Upon reception of the $n$ tasks to be executed, the 
task manager passes the task descriptions to the method
{\it nextTask}. This method instantiates one new
worker, passes to it the corresponding task, and then
recursively invokes itself to continue with
worker instantiations. In order to execute exactly $n$ recursive calls,
$n+1$ additional parameters are considered: in the first
call all such parameters are distinct excluding the first
and the last one, at every call the parameters are
circularly shifted, and when the first two parameters coincide
the chain of calls is terminated.
\end{exa}

\fi

\iftype
For simplicity in presentation, usually the method definitions and fields associated to an actor are collected together and referred to as the actor definition.
A program is then a collection of actor definitions.
We may use the syntax exemplified in Fig. \ref{fig:actor_def} to give the definition of an actor (on top of the syntax defined above for method definitions).
This figure defines an actor $A$, with fields $f$ and $g$, and methods $m$ and $m'$.
Method $m$ takes two variables $x$ and $y$ as its formal parameters, while $m'$ has no parameters.%
\footnote{Assuming bounded or unbounded variables has not effect on the decidability results.}.%
\sidenote{Does having unbounded variables affect the decidable results?} 
$P_{A,m}$ and $P_{A,m'}$ represent the definitions of the methods.

\begin{figure}
\begin{lstlisting}[frame=single]
actor $A$ {
	fields { $f$, $g$ }
	method $m$($x$, $y$) = $P_{A,m}$
	method $m'$() = $P_{A,m'}$
}
\end{lstlisting}
\caption{An actor defined as the set of its fields and method definitions.}\label{fig:actor_def}
\end{figure}
\else

\fi

\ifconf
\paragraph{The operational semantics.}
\else
\subsection{The operational semantics}
\fi
The operational semantics of the language {\actor} will use an infinite set of 
\emph{actor names}, ranged over $A$, $B$, $\cdots$. This set
is partitioned by the actor classes in such a way that every partition retains 
infinitely many actor names. We write $A \in \adef{C}$ to say that $A$ 
belongs to the partition of $\adef{C}$.
In the following, the (\emph{run-time}) expressions 
will also include actor names and, with an abuse of notation, this extended
set of expressions
will be ranged over by $E$. The set of terms that are 
variables or 
actor names, called \emph{values}, will be addressed by $U$, $V$, $\cdots$.
Similarly for processes that, at run-time, may also have actor names. The extended
set will be addressed by $P$. We notice that $\free{P}$, when $P$ belongs to this
extended set, may returns variables \emph{and} actor names. We will also apply
$\free{\cdot}$ to tuples of (extended) expressions: $\free{  \wt{E}}$ returns
the set of variables and actor names in  $\wt{E}$.

The semantics is defined in terms of a \emph{transition relation}
$\State \lred{} \State'$, where $\State$, $\State'$, called \emph{configurations}, are
 sets of terms $A \triangleright (P, \varphi, q)$ with 
$A$ being an actor name,
$\varphi$, the \emph{state} of $A$, being a map from $\fields{\adef{C}}$ to values, 
where $A \in \adef{C}$,
and $q$ being a queue of terms $m(\wt{U})$. 
The empty queue will be denoted with $\varepsilon$.
Configurations contain at most one $A \triangleright (P, \varphi, q)$
for each actor name $A$. As usual, let $\lred{}^*$ be the transitive closure of 
$\lred{}$ and $\lred{}^+$ be $\lred{}\lred{}^*$.

 
The operational semantics of {\actor} is defined in 
Table~\ref{tab.trans}, where the \emph{evaluation function} 
$E \lleadsto{\varphi} U \; ; \; \State$ is used (defined in Table~\ref{tab.eval}). This function 
takes an expression $E$ and a
store $\varphi$ and returns a value $U$ and a possibly empty configuration $\State$
of terms $A \triangleright (\pinull, \varphi, \varepsilon)$. These 
terms represent actors  created during the evaluation -- the names $A$ are \emph{fresh} --
and $\varphi$ records the initial values of the fields of $A$.
%
%
The auxiliary function $\fresh{\cdot}$ used in the evaluation function
takes a class actor and returns an actor name of that class that is fresh.
In order to have a finitely branching transition system (see the remark
at the end of this section), we assume actor names 
in classes are totally ordered and $\fresh{\adef{C}}$ always
returns the first unused name of the class $\adef{C}$. In this way,
fresh names are selected in a deterministic manner.  
The same auxiliary function is used in rule \rulename{inst} 
 on a tuple of variables. In this case it returns a 
tuple of the same length of variables that are fresh. Also in this case, we
assume a total order of variables and $\fresh{\wt{x}}$, where $\wt{x}$ has length $n$, returns the least $n$ unused variables.  
%
For notational convenience, we  always
omit the standard curly brackets in the set notation
and we use $``,"$ both to separate elements inside 
sequences and for set union (the actual meaning is made
clear by the context).
\begin{table} 
\[
\begin{array}{l}
U \lleadsto{\varphi} U \; ; \; \varnothing
\qquad \f \lleadsto{\varphi} \varphi(\f) \; ; \; \varnothing
\\
\\
\bigfract{ \wt{E} \lleadsto{\varphi} \wt{U}\; ; \; \State 
	\quad \wt{\f} = \fields{\adef{C}}
	\quad A = \fresh{\adef{C}}
	}{\newact{\adef{C}}(\wt{E}) \lleadsto{\varphi} A\; ; \; A \triangleright (\pinull, 
	[\wt{\f} \mapsto \wt{U}], \varepsilon), \State
	}
\\
\\
\bigfract{ E_i \lleadsto{\varphi} U_i \; ; \; \State_i, \quad \mbox{\rm for} \quad {i \in 1..n}
	}{
	E_1, \cdots , E_n \lleadsto{\varphi} U_1, \cdots , U_n \; ; \; \State_1, 
	\cdots , \State_n
	}
\\
\\
\end{array}
\]
\caption{\label{tab.eval} The evaluation relation $E \lleadsto{\varphi} U \; ; \; \State$}
\end{table}

\begin{table} 
\[
\begin{array}{l}
\mathrule{upd}{E \lleadsto{\varphi} U \; ; \; \State
	}{ 
	\begin{array}{l}
	A \triangleright ((\f \upd E) \prefix P, \varphi, q) 
	\lred{} A \triangleright (P, \varphi[\f \upd U], q), \State
	\end{array}
	} 
\\
\\
\mathrule{let}{E \lleadsto{\varphi} U \; ; \;\State
	}{ 
	\begin{array}{l}
	A \triangleright (\letin{x}{E}{P}, \varphi, q) \lred{} 
	 A \triangleright (P\subst{U}{x}, \varphi, q), \State
	 \end{array}
	} 
\\
\\
\mathrule{invk-s}{ \wt{E} \lleadsto{\varphi} \wt{U} \; ; \;\State
	}{
	\begin{array}{l}
 	A \triangleright (A \invk m(\wt{E}) \prefix P, \varphi, q)
         \lred{} 
	A \triangleright (P, \varphi,q \cdot  m(\wt{U})) , \State
	\end{array}
	}
\\
\\
\mathrule{invk}{ \wt{E} \lleadsto{\varphi} \wt{U} \; ; \; \State
	}{
	\begin{array}{l}
  	A \triangleright (A' \invk m(\wt{E}) \prefix P, \varphi, q) , 
	A' \triangleright (P', \varphi', q')
        \lred{} 
	A \triangleright (P,\varphi,q), A' \triangleright (P',\varphi', q' \cdot m
	(\wt{U})) , \State
	\end{array}
	}
\\
\\
\mathrule{inst}{A \in \adef{C} \quad \adef{C} \prefix m(\wt{x}) = P 
	\quad 
	\wt{y} = \free{P} \setminus \wt{x}
	\quad
 	\wt{y'} = \fresh{\wt{y}}
	}{
	A \triangleright (\pinull, \varphi, m(\wt{U}) \cdot q)
	\; \lred{} \; 
	A \triangleright(P\subst{A}{{\it this}}\subst{\wt{y'}}{\wt{y}} 
	\subst{\wt{U}}{\wt{x}}, \varphi, q)
	}
\\
\\
\mathrule{match}{E , E' \lleadsto{\varphi} U,U \; ; \; \State
	}{
	A \triangleright ([E=E'] P \ite Q , \varphi, q)
	\; \lred{} \; A \triangleright (P , \varphi, q) , \State
	}
\\
\\
\mathrule{mmatch}{E,E' \lleadsto{\varphi} U,V \; ; \; \State
	  \quad U \neq V 
	}{
	A \triangleright ([E=E'] P \ite Q , \varphi, q)
	\; \lred{} \; A \triangleright (Q , \varphi, q) , \State
	}
\\
\\
\mathax{plus-l}{A \triangleright (P+Q, \varphi, q) 
	\; \lred{} \; A \triangleright (P, \varphi, q)
	}
\\
\\
\mathax{plus-r}{A \triangleright (P+Q, \varphi, q) 
	\; \lred{} \; A \triangleright (Q, \varphi, q)
	}
\\
\\
\mathrule{context}{\State 
	\; \lred{} \; 
	\State'
	}{
	\State , \State'' \; \lred{} \; 
	\State', \State''
	}
\\
\end{array}
\]
\caption{\label{tab.trans} The transition relation $\State \lred{} \State'$}
\end{table}

The initial configuration of a program with main process $P$  is
$\aleph \; \triangleright \; (P, \varnothing, \varepsilon)$, where $\aleph$ is a
name of the \emph{root}, an actor of a class
without fields and methods. We assume that the class of $\aleph$ does not
belong to the classes of the program. 
Note that the root actor is guaranteed to terminate because its queue remains empty
(no method invocation may be enqueued) and the main process (as any other
one) terminates.

We finally remark that transition systems of the language {\actor} are 
\emph{finitely branching} (every state has a finite number of successor states)
because the choices of fresh actor names (in the evaluation of 
$\newact{\adef{C}}$) and of fresh variables (in the instantiation of the bodies
of methods) are deterministic. 

\subsection{Relevant sublanguages}
We will consider the following fragments of {\actor}
whose relevance has been already discussed in the Introduction:
\begin{itemize}[label=\actba]
\item[{\actba}] is the sublanguage where the ${\tt new}$ expression 
only occurs in the main process (the number of actor names that it is
possible to create is bounded).
\item[{\actro}] is the sublanguage without the field
update operation $(\f \upd E)$ (fields are read-only 
as they cannot be modified after the
initialization). 
\item[{\actroba}] is the intersection of {\actba} and {\actro}.
\item[{\actsl}] is the sublanguage with classes without fields
(objects are stateless).
\end{itemize}

\section{Undecidability results for {\actba} and {\actro}}
\label{sec.undecidability}
In this section we establish the main undecidability results 
for the actor language in Section~\ref{sec.thelanguage}.
In particular, we will prove the undecidability of \emph{termination}
and \emph{process reachability}. 

\begin{defi}
\label{def.termandreach}
Consider an actor program.
The \emph{termination problem} is to decide whether it has no infinite computation;
the \emph{process reachability problem} is to decide,
given a process $P$, whether there exists  a computation of the program
traversing a configuration having a term $A \triangleright (P', \varphi, q)$
with $P'$ being equal
to $P$ up-to renaming of variables and actor names.
\end{defi}


We will use a reduction technique 
from a Turing-complete model to 
our actor model.
In particular, the Turing-complete models we consider are the 2 Counter Machines 
(2CMs)~\cite{Minsky67}
and a faulty variation of them.
A 2CM  is a machine with \emph{two registers} 
$R_1$ and $R_2$ holding arbitrary large natural numbers and a 
\emph{program} $P$ consisting of a finite sequence of numbered 
instructions of the following type:
\begin{itemize}
 \item $j: {\sf Inc}(R_i)$: increments $R_i$ 
 and goes to the instruction $j+1$;
 \item $j: {\sf DecJump}(R_i, l)$: if the content of $R_i$ is not 
 zero, then decreases it by 1 and goes to the instruction $j+1$, otherwise jumps to the instruction $l$;
 \item $j: {\sf Halt}$: stops the computation and returns the value in 
 the register $R_1$.
\end{itemize}
A state of the machine is given by a tuple $(i, v_1, v_2)$ where 
$i$ indicates the next instruction to execute (the program counter)
and $v_1$ and $v_2$ are the contents of the two registers. The
user has to provide the initial state of the machine. 
The transition relation of the 2CM will be denoted by $\Longmapsto$.

The faulty variation of the 2CM we use are the
``two Faulty Towers Machine'' (2FTM, for short)~\cite{FTM}. These machines have
\emph{two faulty registers} 
$R_1$ and $R_2$ holding either arbitrary large natural numbers or the \emph{faulty value} 
$\bot$. The program of a 2FTM is a finite sequence of numbered instructions that are 
the same of those in 2CMs. However, in contrast with 2CMs, 2FTMs from a state
$(i,v_1,v_2)$ may
\emph{nondeterministically} evolve into a faulty state $(i,\bot,v_2)$ or 
$(i,v_1,\bot)$ or
$(i,\bot,\bot)$. If an instruction $i$ is an {\sf Inc}/{\sf DecJump} that refers to
the register $R_1$ (respectively, $R_2$) then $(i,\bot,v)$ (respectively, 
$(i,v,\bot)$) evolves to $(0,\bot,v)$ (respectively, $(0,v,\bot)$).
In 2FTMs, the instruction numbered 0 is always assumed to be {\sf Halt}.
Let $\Longmapsto_{\tt F}$ be the transition relation of the  2FTMs.

By definition, every 2CM program with a 0-numbered instruction {\sf Halt} is 
a 2FTM program and conversely. If we restrict to 2CM with a 0-numbered instruction {\sf Halt},
it is easy to verify that every 
such machine has an infinite computation with a 2CM-semantics if and only if it
has an infinite computation with a 2FTM-semantics. Similarly for instruction reachability,
if we consider any non 0-numbered instruction.

In the sequel, we consider 2FTMs and 2CMs whose initial state is $(1,0,0)$.


\subsection{The language {\actba}}
\label{ssec.finiteactors}
%

We encode the value $n$ stored in a register as 
$n$ messages (of the same type) that are enqueued in an actor -- see 
Figure~\ref{fig:2CM_actor}. 
Namely, let $R_1$ and $R_2$ be two actors of class $\adef{R}$ and let 
the number of messages $\mathit{item}$ in $R_1$ and $R_2$ be 
their value. 
%
 \begin{figure}[t]
 \iftype
 \begin{lstlisting}[frame=single]
 actor $R_i$ {
 	fields { dec }
 	method one() =  dec $\upd$ $\false$ $\prefix$ ( [dec = $\false$] $R_i$ $\invk$ one())
 	method inc(pc) = $R_i$ $\invk$ one() $\prefix$ Ctrl $\invk$ run(pc)
 	method decJump(pc, pc') = 
 		dec $\upd$ $\true$ $\prefix$ $R_i$ $\invk$ checkZero(pc, pc')
 	method checkZero(pc, pc') =
 		[dec = $\true$] Ctrl $\invk$ run(pc') $\ite$ Ctrl $\invk$ run(pc)
 	method init() = dec $\upd$ $\false$
 }
 actor Ctrl {
 	method run(pc) =
 		[pc = 1] $\semantics{{\it Instruction_1}}$ $\ite$
 		. . . 
 		[pc = n] $\semantics{{\it Instruction_n}}$ $\ite$ $\pinull$
 	method init() = $R_1$ $\invk$ init() $\prefix$ $R_2$ $\invk$ init() $\prefix$ Ctrl $\invk$ run(1)
 }
 \end{lstlisting}
 \else
\[
\eqalign{
 \llap{\hbox to 116pt{$\mathtt R$\hfill}}&
  \mbox{{\tt /\!/ $\mathtt{R}$ has fields $\mathtt{dec}$, $\mathtt{ctr}$,
    $\mathtt{loop}$ and $\mathtt{stop}$}}
 \cr
 {\tt R}.{\it item}(\true,\false) &= [{\tt stop} = \false]\big([{\tt dec} = \false] {\it this} \invk {\it item}(\true,\false) 
 \ite ({\tt dec} \upd \false)\big)
 \cr
 {\tt R}.{\it inc}({\it pc}, \true,\false) &= 
 [{\tt stop} = \false]
 ({\tt loop} \upd \false) \prefix
 {\it this} \invk {\it item}(\true,
 \false) \prefix {\tt ctr} \invk {\it run}({\it pc}, \true,\false)
 \cr
 {\tt R}.{\it decjump}({\it pc}, {\it pc}', \true,\false) &= 
 [{\tt stop} = \false]
 ({\tt loop} \upd \false) \prefix
 ({\tt dec} \upd \true)
  \prefix {\it this} \invk {\it checkzero}({\it pc}, {\it pc}', \true,\false)
 \cr
 {\tt R}.{\it checkzero}({\it pc}, {\it pc}', \true,\false) &= 
 [{\tt stop} = \false]
 ({\tt loop} \upd \false) \prefix \cr
 &\phantom{={\ }} 
 \big( [{\tt dec} = \true] 
  {\tt ctr} \invk {\it run}({\it pc}', \true,\false) \prefix( {\tt dec}\upd \false)\ite {\tt ctr} \invk {\it run}({\it pc}, \true,\false) \big)
 \cr
 {\tt R}.{\it init}(\true,\false,{\it Ctrl}) &= ({\tt dec} \upd \false) 
 \prefix ({\tt ctr} \upd {\it Ctrl}) \prefix ({\tt loop} \upd \false)
 \prefix ({\tt stop} \upd \false)
 \prefix \cr
 &\phantom{={\ }} {\it this}\invk {\it bottom}(\true,\false)
 \cr
 {\tt R}.{\it bottom}(\true,\false) &= 
[{\tt loop} = \false] ({\tt loop} \upd \true).{\it this} \invk {\it bottom}(\true,\false); ({\tt stop} \upd \true)
  \cr\cr
 \llap{\hbox to 116 pt{$\mathtt{Ctrl}$\hfill}}&
   \mbox{{\tt /\!/ $\mathtt{Ctrl}$ has fields $\mathtt{stm}_1$, $\cdots$, 
   $\mathtt{stm}_n$ and $\mathtt{r}_1$ and $\mathtt{r}_2$}}
 \cr
 {\tt Ctrl}.{\it run}({\it pc}, \true,\false) &= [{\it pc} = {\tt stm}_1] 
 \semantics{{\it Instruction\_1}}_{1,\true,\false}
 \cr
 &\phantom{={}} \quad \cdots
 \cr
 &\phantom{{}={}} [{\it pc} = {\tt stm}_n] \semantics{{\it Instruction\_n}}_{n,\true,\false}
 \cr
 {\tt Ctrl}.{\it init}() &= 
 {\tt r}_1 \invk {\it init}(\true,\false,{\it this}) 
 \prefix {\tt r}_2 \invk {\it init}(\true,\false,{\it this}) \prefix {\it this} \invk 
 {\it run}({\tt stm}_1, \true,\false)
     \cr
  {\tt Ctrl}.{\it halt}() &= \pinull
  }
\]

\bigskip

\begin{itemize}
\item[] \hspace{-1.2cm}
\quad \quad where $\semantics{{\it Instruction\_i}}_{i,\true,\false}$ is equal to

\medskip
 \item[--]
 ${\tt r}_j \invk {\it inc}({\tt stm}_{i+1}, \true,\false)$\qquad if 
 ${\it Instruction\_i} = {\sf Inc}(R_j)$;
 \item[--]
 ${\tt r}_j \invk {\it decjump}({\tt stm}_{i+1}, {\tt stm}_{k}, \true,\false)$
 \qquad if 
 ${\it Instruction\_i} = {\sf DecJump}(R_j, k)$;
 \item[--]
 ${\it this}\invk {\it halt}$\qquad if 
 ${\it Instruction\_i} = {\sf Halt}$.
 \end{itemize}

\bigskip
 
\begin{itemize}
\item[]
\hspace{-1.2cm}
\quad\quad
The main process is 

\medskip

  $\letin{x}{\newact{\adef{Ctrl}}(x_1,\cdots,x_n,\newact{\adef{R}(\_,\_,\_,\_)},\newact{\adef{R}(\_,\_,\_,\_)})}{x \invk {\it init}()}$.
\end{itemize}
 \fi
 \caption{Encoding a 2FTM in {\actba} (``$\_$'' denotes an irrelevant initialization
 parameter)}\label{fig:2CM_actor}
 \end{figure}

The instruction {\sf Inc} is  implemented by inserting 
one  ${\it item}$  message in the queue of the corresponding register. 
In our formalism, this is done by invoking the method 
${\it item}$ whose execution has two possible outcomes: (i) 
the invocation is enqueued again; (ii) the invocation is discarded because
we are in the presence of a residual of a {\sf DecJump} operation, as described next.

In case (i), to avoid an infinite sequence of ${\it item}$ dequeues and 
enqueues, we introduce fields $\mathtt{stop}$ and $\mathtt{loop}$ which are  initialized to {\it false} and set to {\it true}
by the {\it bottom} method in case the queue contains only {\it item} messages.
This has as effect that the stored {\it item} messages are subsequently purged from the queue of the register. 
Note that differently from the example in Section~\ref{sec.thelanguage} we have encoded the boolean values
as fields (so that we do not need to pass them around).

%

In case (ii), registers have a field ${\tt dec}$  that is set to ${\true}$ by 
a ${\it decjump}$ method execution. This field means that the actual decrement
of the register is delayed to the next execution of ${\it checkzero}$. Since in 
(ii) ${\it item}$ is not enqueued, then the register is actually decremented and
the field ${\tt dec}$  is set to ${\false}$. When 
${\it checkzero}$ will be executed, since 
${\tt dec} = \false$  then the next instruction of the 2CM is simulated.
On the contrary, when  ${\it checkzero}$ is executed with ${\tt dec} = \true$
then the decrement has not been performed (the register is 0) and the simulation 
jumps.

The {\sf Halt} instruction is simulated by invoking a method {\it halt} that does nothing.

\ifconf
Booleans
\else
As in the examples of Section~\ref{sec.thelanguage}, booleans
\fi
are implemented by two variables 
-- see the method $\adef{Ctrl}.{\it init}$ -- that are 
distributed during the invocations. 
With a similar machinery, in the actor class $\adef{Ctrl}$,   the
labels of the instructions are represented 
by the variables $x_1,\cdots,x_n$, 
which are stored in the fields ${\tt stm}_1,\ldots,{\tt stm}_n$ of ${\it Ctrl}$.

\begin{thm}
\label{thm.undecidablestatefull}
Termination and process reachability 
are undecidable in {\actba}.
\end{thm}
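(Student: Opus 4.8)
The plan is to prove undecidability by reduction from the halting problem for 2CMs (equivalently, via the earlier observation relating 2CMs and 2FTMs, from the infinite-computation problem for 2FTMs). Given a 2CM program, I would take the encoding displayed in Figure~\ref{fig:2CM_actor}: two actors $R_1,R_2$ of class $\adef{R}$ whose register values are represented by the number of pending $\mathit{item}$ messages in their queues, and a controller actor $\adef{Ctrl}$ whose fields $\mathtt{stm}_1,\dots,\mathtt{stm}_n$ hold distinct variables acting as program-counter labels, invoked by the scheme $\semantics{\mathit{Instruction\_i}}_{i,\true,\false}$. The whole system is launched by the main process, which is the only place where $\newact{\cdot}$ occurs, so the encoding indeed lives in {\actba}. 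Note that this encoding uses a 2FTM-style implementation rather than a faithful 2CM one, because an actor cannot inspect its own queue for emptiness: the $\mathtt{loop}$/$\mathtt{stop}$/$\mathtt{bottom}$ machinery nondeterministically (and irreversibly) ``crashes'' a register, modelling exactly the $\bot$ transitions of a 2FTM.

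The core of the argument is an operational correspondence between 2FTM configurations and {\actba} configurations of the encoding. I would first fix an invariant characterising the \emph{reachable} configurations of the encoded system: at any ``stable'' point the controller is at some $\mathit{run}(\mathtt{stm}_i,\true,\false)$ invocation in its queue (and otherwise idle), each $R_j$ has a queue consisting of some number $v_j$ of $\mathit{item}$ messages (possibly preceded/followed by a single $\mathit{checkzero}$ during a decrement, or in a crashed state with $\mathtt{stop}=\true$), its $\mathtt{dec}$ field is $\false$ except transiently during a decrement, and the two boolean variables distributed through the calls are always a fixed pair of distinct names. Then I would show: (a) every 2FTM step $(i,v_1,v_2)\Longmapsto_{\tt F}(i',v_1',v_2')$ is matched by $\lred{}^+$ between the corresponding stable configurations — checking the three instruction cases plus the faulty transitions, where the $\mathit{inc}$/$\mathit{item}$ enqueue-and-discard interplay and the delayed-decrement-via-$\mathit{checkzero}$ trick realise the register updates, and the $\mathit{bottom}$ loop together with $\mathtt{stop}$ realises the $\bot$ move and the forced jump to instruction $0$; and (b) conversely, every maximal $\lred{}$ sequence out of a stable configuration, after internal reshuffling of the queue, leads to another stable configuration corresponding to some 2FTM step (using determinism of fresh-name choice and the finite-branching remark to enumerate the possibilities). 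Together these give: the encoded program has an infinite computation iff the 2FTM (hence the original 2CM) does — yielding undecidability of termination. For process reachability, I would additionally observe that the 2CM reaches instruction $k$ iff the encoded program reaches a configuration in which $\adef{Ctrl}$ runs the process $\semantics{\mathit{Instruction\_k}}$ (up to renaming of the boolean variables and the actor names), so instruction reachability reduces to process reachability.

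The main obstacle is case (b), the ``completeness'' direction of the simulation: because queues are FIFO and the controller bounces invocations, one must argue that \emph{every} interleaving the operational semantics permits either reaches the expected next stable configuration or is a harmless permutation thereof — in particular that a spuriously enqueued $\mathit{item}$ (case (i) of the $\mathit{item}$ method) is eventually and correctly ``consumed'' exactly against a pending $\mathit{decjump}$ (case (ii)), and that the $\mathit{bottom}$/$\mathtt{loop}$ loop cannot interfere with an in-progress increment or decrement. This requires making precise the bookkeeping invariant above (how many $\mathit{item}$s, at most one $\mathit{checkzero}$, the phase of $\mathtt{dec}$, the $\mathtt{loop}/\mathtt{stop}$ status) and showing it is preserved by every rule of Table~\ref{tab.trans}, which is where the real work lies; the forward direction (a) and the reachability reduction are then comparatively routine. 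The remaining minor points — that the boolean-encoding variables stay pairwise distinct (they are freshly generated once in $\adef{Ctrl}.\mathit{init}$ and merely forwarded), and that the $\mathtt{stm}_i$ labels are and remain distinct — follow directly from the freshness discipline of rules \rulename{let} and \rulename{inst}.
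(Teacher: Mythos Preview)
Your proposal is correct and follows essentially the same approach as the paper: the encoding of Figure~\ref{fig:2CM_actor}, forward simulation of 2FTM steps (your (a), the paper's property~(1)), and a converse direction showing that infinite actor computations correspond to infinite 2FTM runs via the $\mathit{bottom}/\mathtt{loop}/\mathtt{stop}$ mechanism, with process reachability handled by singling out a designated instruction. The paper organizes the converse a little differently --- instead of arguing directly that every interleaving out of a stable configuration reaches the next one, it proves by contradiction (property~(2)) that any infinite computation must contain infinitely many controller-active configurations (since a register in isolation is blocked once $\mathit{bottom}$ runs twice without an intervening $\mathit{inc}/\mathit{decjump}/\mathit{checkzero}$ resetting $\mathtt{loop}$), and then (property~(3)) normalizes each segment between such configurations to canonical encoded 2FTM states --- but this is a matter of packaging, and your invariant-based plan captures the same content.
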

The undecidability of termination in {\actba} follows by the
property that a 2FTM  diverges if and only if the corresponding actor program
has an infinite computation. As regards process reachability, we need a smooth
refinement of the encoding in Figure~\ref{fig:2CM_actor} where the {\sf Halt}
instruction is simulated by a specific process. 

\ifconf
\else
\proof
Let us assume to have a fixed 2FTM and let $\semantics{{\it Instruction\_i}}_{i,\true,\false}$ be defined in Figure~\ref{fig:2CM_actor}. 
Let also
\[
\begin{array}{rll}
\semantics{(i,v_1,v_2)}_{\true,\false} \; \eqdef & 
\aleph \; \triangleright \; (\pinull, \varnothing, \varepsilon),  & (v_1 \neq \bot \; 
{\rm and} \; v_2 \neq \bot)
\\
& C \triangleright (\semantics{{\it Instruction\_i}}_{i,\true,\false}, \varphi_{{\tt ctrl}}, \varepsilon), 
\\
& R_1 \triangleright (\pinull, \varphi_{{\tt reg}}, {\it bottom}(\true,\false)
\cdot {\it item}(\true,\false)^{v_1} ),
\\
& R_2 \triangleright (\pinull, \varphi_{{\tt reg}}, {\it bottom}(\true,\false)
\cdot {\it item}(\true,\false)^{v_2})
\\
\\
\semantics{(i,\bot,v)}_{\true,\false} \; \eqdef & 
\aleph \; \triangleright \; (\pinull, \varnothing, \varepsilon),  & (v \neq \bot)
\\
& C \triangleright (\semantics{{\it Instruction\_i}}_{i,\true,\false}, \varphi_{{\tt ctrl}}, \varepsilon), 
\\
& R_1 \triangleright (\pinull, \varphi_{{\tt reg}}^{\bot}, \varepsilon ),
\\
& R_2 \triangleright (\pinull, \varphi_{{\tt reg}}, {\it bottom}(\true,\false)
\cdot {\it item}(\true,\false)^{v})
\\
\\
\semantics{(0,\bot,v)}_{\true,\false} \; \eqdef & 
\aleph \; \triangleright \; (\pinull, \varnothing, \varepsilon),  & (v \neq \bot)
\\
& C \triangleright (\pinull, \varphi_{{\tt ctrl}}, \varepsilon), 
\\
& R_1 \triangleright (\pinull, \varphi_{{\tt reg}}^{\bot}, \varepsilon ),
\\
& R_2 \triangleright (\pinull, \varphi_{{\tt reg}}, {\it bottom}(\true,\false)
\cdot {\it item}(\true,\false)^{v})
\end{array}
\]
where 
\begin{itemize}
\item[--] 
$\varphi_{{\tt ctrl}} \eqdef [{\tt stm}_i \mapsto x_i{}^{i \in 1..n}, {\tt r}_1 \mapsto R_1, {\tt r}_2 \mapsto R_2]$;

\item[--] 
$\varphi_{{\tt reg}} \eqdef [{\tt stop} \mapsto \false, {\tt loop} \mapsto \false,
{\tt dec} \mapsto \false, {\tt ctr} \mapsto C]$;

\item[--] 
$\varphi_{{\tt reg}}^\bot \eqdef [{\tt stop} \mapsto \true, {\tt loop} \mapsto \true,
{\tt dec} \mapsto \false, {\tt ctr} \mapsto C]$;

\item[--]
${\it item}(\true,\false)^{v} \eqdef \underbrace{{\it item}(\true,\false)
\cdots {\it item}(\true,\false)}_{v \; {\rm times}}$.
\end{itemize}
(the definitions of $\semantics{(i,v,\bot)}_{\true,\false}$ and 
$\semantics{(0,v,\bot)}_{\true,\false}$ and $\semantics{(0,\bot,\bot)}_{\true,\false}$
follow the same patterns).

We first observe that $\aleph \triangleright (P, \varnothing, \varepsilon) \lred{}^*
\semantics{(1,0,0)}_{\true,\false}$, where $P$ is the main process in 
Figure~\ref{fig:2CM_actor}.
Then we demonstrate the following properties:
\begin{enumerate}
\item[(1)]
if $(i,v_1,v_2) \Longmapsto_{\tt F} (j,v_1',v_2')$ then 
$\semantics{(i,v_1,v_2)}_{\true,\false} \lred{}^+ 
\semantics{(j,v_1',v_2')}_{\true,\false}$;

\item[(2)]
if $\semantics{(i,v_1,v_2)}_{\true,\false}$ has an infinite computation then 
the computation has infinitely many configurations like
\begin{eqnarray}
\label{eq.one}
\aleph \; \triangleright \; (\pinull, \varnothing, \varepsilon),   C \triangleright (\semantics{{\it Instruction\_i}}_{i,\true,\false}, \varphi_{{\tt ctrl}}, \varepsilon), 
R_1 \triangleright (P_1, \varphi_1, q_1),
R_2 \triangleright (P_2, \varphi_2, q_2)
\end{eqnarray}

\item[(3)]
if
\[
\begin{array}{@{\qquad}l}
\aleph \; \triangleright \; (\pinull, \varnothing, \varepsilon),   C \triangleright (\semantics{{\it Instruction\_i}}_{i,\true,\false}, \varphi_{{\tt ctrl}}, \varepsilon), 
R_1 \triangleright (P_1, \varphi_1, q_1),
R_2 \triangleright (P_2, \varphi_2, q_2)
\\
\lred{}^*
\\
\aleph \; \triangleright \; (\pinull, \varnothing, \varepsilon),   C \triangleright (\semantics{{\it Instruction\_j}}_{j,\true,\false}, \varphi_{{\tt ctrl}}, \varepsilon), 
R_1 \triangleright (P_1', \varphi_1', q_1'),
R_2 \triangleright (P_2', \varphi_2', q_2')
\end{array}
\]
is a computation with every intermediate configuration having the process of
the actor $C$ equal to $\pinull$ then there are two computations
\[
\begin{array}{@{\qquad}l}
\aleph \; \triangleright \; (\pinull, \varnothing, \varepsilon),   C \triangleright (\semantics{{\it Instruction\_i}}_{i,\true,\false}, \varphi_{{\tt ctrl}}, \varepsilon), 
R_1 \triangleright (P_1, \varphi_1, q_1),
R_2 \triangleright (P_2, \varphi_2, q_2)
\\
\lred{}^* \quad \semantics{(i,v_1,v_2)}_{\true,\false}
\end{array}
\]
and
\[
\begin{array}{@{\qquad}l}
\aleph \; \triangleright \; (\pinull, \varnothing, \varepsilon),   C \triangleright (\semantics{{\it Instruction\_j}}_{j,\true,\false}, \varphi_{{\tt ctrl}}, \varepsilon), 
R_1 \triangleright (P_1', \varphi_1', q_1'),
R_2 \triangleright (P_2', \varphi_2', q_2')
\\
\lred{}^* \quad \semantics{(j,v_1',v_2')}_{\true,\false}
\end{array}
\]
where the actor $C$ never moves such that
\[
(i,v_1,v_2) \Longmapsto_{\tt F} (j,v_1',v_2') \;.
\]
\end{enumerate}

\noindent The proof of (1) is a straightforward case analysis on the 
type of instruction $i$.

The proof of (2) uses an argument by contradiction. Assume that there are 
finitely many configurations like~\ref{eq.one}. Then, there is an infinite
suffix of this computation in which the actor $C$ does not perform actions.
This means that at least one actor $R_i$ performs infinitely many actions
by executing the methods {\it item} and {\it bottom}. But this is not possible
because {\it bottom} blocks the actor if it is invoked twice without
executing update actions on the register in between.

The proof of (3) simply considers two possible cases: either some
fields {\tt stop} is set to $\true$ during the computation or not.
In the first case the corresponding register of the 2FTM enters a faulty state
$\bot$. In the second case the instruction has been correctly
executed.



The property (1) guarantees that if a 2FTM has an infinite computation
then also the corresponding encoding has an infinite computation.
The opposite follows from (1) and (2): if the encoding has an infinite
computation, it traverses infinitely many configurations representing
configurations of the corresponding 2FTM, thus it also has an infinite 
computation.
The undecidability result can be easily extended
to the process reachability problem. It suffices to
modify the process modeling an ${\sf Halt}$ (not numbered with 0)
by replacing ${\it this} \invk {\it halt}()$ with 
a process $Q$ different from all the other processes in
Figure~\ref{fig:2CM_actor}. We have that
$Q$ is reachable if and only if the 2CM with the same program of
the given 2FTM terminates.
\qed
\fi

%

\subsection{The language {\actro}}
\label{ssec.readonly}
We show that {\actro}
is Turing-complete by means of an encoding of a 2CM
-- see Figure~\ref{fig:2CM_actor2}. 
In this encoding the two registers
are represented by two disjoint stacks of actors linked by the $\mathtt{next}$ field.
The top elements of the two stacks are passed as parameters $r_1$ and $r_2$ 
of the {\it run} method
of the controller. As before, this actor encodes the control of the 2CM.

The instruction {\sf Inc} is implemented by pushing an element on top of the
corresponding stack. This element is an actor of class $\mathtt{R}$ 
storing in its field 
the old pointer of the stack. The new pointer, \emph{i.e.} the new actor name,
is
passed to the next invocation of the {\it run} method.

The instruction {\sf DecJump} is implemented by popping the corresponding
stack. In particular, the method ${\it run}$ of the controller is invoked with the
field $\mathtt{next}$ of the register being decreased. This pop operation is
performed provided the register that is argument of ${\it run}$ is different
from ${\it nil}$. Otherwise a jump is performed.
Note that the other top of the stack $r_j$ ($i\not=j$) and the next instruction
to be executed are simply passed around and therefore they do not need
to be stored in updatable fields.

\begin{thm}
\label{thm.undecidablestateless}
Termination and process reachability 
are undecidable in {\actro}.
\end{thm}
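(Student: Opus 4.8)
The plan is to follow the blueprint of the proof of Theorem~\ref{thm.undecidablestatefull}, the crucial difference being that the encoding of Figure~\ref{fig:2CM_actor2} is \emph{faithful}: since in {\actro} a register is represented by a linked stack of actors rather than by a bouncing pool of messages, no spurious loops arise, and hence it suffices to reduce directly from (non-faulty) 2CMs. Fix a 2CM with program $P$ and consider the corresponding {\actro} program of Figure~\ref{fig:2CM_actor2}. For a 2CM state $(i,v_1,v_2)$ I would define a canonical configuration $\semantics{(i,v_1,v_2)}$ consisting of the controller actor holding (the body of) instruction $i$ — equivalently, holding $\pinull$ with a single ${\it run}$ invocation in its queue whose register arguments point to the current tops — and, for $k=1,2$, a chain of $v_k$ actors of class $\mathtt{R}$ linked through the read-only field $\mathtt{next}$, the bottom one pointing to the distinguished ${\it nil}$ actor. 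One first checks that the initial configuration of the program reduces in finitely many steps to $\semantics{(1,0,0)}$.

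The core is a tight simulation lemma. First, if $(i,v_1,v_2)\Longmapsto(j,v_1',v_2')$ then $\semantics{(i,v_1,v_2)}\lred{}^+\semantics{(j,v_1',v_2')}$; this is a case analysis on the type of instruction $i$: $\mathsf{Inc}(R_k)$ creates a fresh $\mathtt{R}$-actor storing the old top in $\mathtt{next}$ and re-invokes ${\it run}$ with the new top; $\mathsf{DecJump}(R_k,l)$ tests the current top against ${\it nil}$ via the match operator, and, if different, performs a round-trip to the top actor to read its $\mathtt{next}$ field and re-invoke ${\it run}$ with the predecessor (otherwise it re-invokes ${\it run}$ at label $l$); $\mathsf{Halt}$ re-invokes nothing. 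Conversely, any computation issuing from $\semantics{(i,v_1,v_2)}$ reaches $\semantics{(j,v_1',v_2')}$ with $(i,v_1,v_2)\Longmapsto(j,v_1',v_2')$, and \emph{every} configuration strictly between two consecutive canonical ones is at a bounded distance from the next canonical one. Here, unlike the situation of Theorem~\ref{thm.undecidablestatefull}, no analogue of the $\mathtt{stop}$/$\mathtt{loop}$ machinery is needed: each instruction body is loop-free and emits exactly one continuation message, the controller's queue always contains at most one ${\it run}$ invocation, and the pop round-trip to the top $\mathtt{R}$-actor terminates, so only finitely many reductions separate canonical configurations.

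From these two facts the encoding has an infinite computation if and only if the 2CM diverges; since divergence of 2CMs is undecidable, termination is undecidable in {\actro}. For process reachability, refine the encoding exactly as in Theorem~\ref{thm.undecidablestatefull}: replace the process simulating a non-$0$ $\mathsf{Halt}$ by a process $Q$ syntactically distinct from every process appearing in Figure~\ref{fig:2CM_actor2}; then $Q$ is reachable (up to renaming of variables and actor names) if and only if the 2CM halts, which is undecidable. I expect the converse direction of the simulation lemma to be the main, though this time fairly routine, obstacle: one must make precise that the stack-of-actors representation leaves behind no garbage or residual messages that could create divergence or spurious reachable processes — a finite bookkeeping argument, in contrast with the delicate fault-based reasoning (properties (2)--(3)) needed for the bouncing-queue encoding of the previous theorem.
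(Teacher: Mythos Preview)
Your proposal is correct and follows essentially the same approach as the paper: a direct reduction from (non-faulty) 2CMs via the linked-stack encoding of Figure~\ref{fig:2CM_actor2}, a two-way simulation between 2CM states and canonical configurations, and the same $\mathsf{Halt}$-refinement trick for process reachability. Two small corrections are worth making: your canonical configurations must also carry the inactive \emph{garbage} of previously popped register actors (the paper writes these as ${\cal G}_n$, each of the form $r \triangleright (\pinull,[\mathtt{next}\mapsto r'],\varepsilon)$), and ${\it nil}$ is a distinguished \emph{variable} (free in the main process), not an actor. Where you spell out the converse direction via structural properties (loop-free bodies, single continuation, bounded round-trips), the paper compresses this into the single observation that the encoding is \emph{deterministic}; your more explicit argument is fine and amounts to the same thing.
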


\ifconf
\else

\proof
It is easy to verify that if a 2CM has a computation
\[
(1,0,0) \Longmapsto (i_1,v_1,v_1') \Longmapsto \cdots \Longmapsto (i_n,v_n,v_n')
\]
then there is a computation
\[
\begin{array}{rl}
\aleph \triangleright (P, \emptyset, \varepsilon) \lred{}^* &
\aleph \triangleright (\pinull, \emptyset, \varepsilon), 
C \triangleright (\semantics{{\it Instruction_1}}_{{\it nil},{\it nil}}, \varphi, \varepsilon)
\\
\lred{}^* & 
\aleph \triangleright (\pinull, \emptyset, \varepsilon), 
C \triangleright (\semantics{{\it Instruction\_{i_1}}}_{r_1,r_1'}, \varphi, \varepsilon),
{\cal R}_{r_1}, {\cal R}_{r_1'}, {\cal G}_1
\\
\lred{}^* & 
\aleph \triangleright (\pinull, \emptyset, \varepsilon), 
C \triangleright (\semantics{{\it Instruction\_{i_n}}}_{r_n,r_n'}, \varphi, \varepsilon),
{\cal R}_{r_n}, {\cal R}_{r_n'}, {\cal G}_n
\end{array}
\]
where $P$ is the main process of Figure~\ref{fig:2CM_actor2},
$\semantics{{\it Instruction\_{i_j}}}_{r_j,r_j'}$ are defined in 
Figure~\ref{fig:2CM_actor2},
$\varphi = [ \mathtt{stm}_1 \mapsto x_1, \cdots, \mathtt{stm}_n \mapsto x_n,
\mathtt{nil} \mapsto {\it nil}]$, ${\cal R}_{r_i}$ and ${\cal R}_{r_i'}$ are
stacks of register actors whose length is $v_i$ and $v_i'$, respectively. For
instance, ${\cal R}_{r_1}$ of length $k$ is
\[
r_1 \triangleright (\pinull, [\mathtt{next} \mapsto r_2], \varepsilon),
r_2 \triangleright (\pinull, [\mathtt{next} \mapsto r_3], \varepsilon),
\cdots ,
r_k \triangleright (\pinull, [\mathtt{next} \mapsto {\it nil}], \varepsilon) \; .
\]
The configurations ${\cal G}_i$ only contain register  terms $r \triangleright (\pinull,
[\mathtt{next} \mapsto r'], \varepsilon)$ and represent \emph{garbage} (they are
inactive).

In contrast with Theorem~\ref{thm.undecidablestatefull}, the converse 
implication (every computation of the  {\actro} program in Figure~\ref{fig:2CM_actor2}
may be split in subcomputations of finite lengths that correspond to 
2CM transitions) is not difficult because the program of Figure~\ref{fig:2CM_actor2}
is deterministic.

The above correspondence guarantees that the 
computation of the actor system
terminates if and only if 
a ${\sf Halt}$ instruction is reached.
The undecidability of process reachability
is proved by using the same arguments
of Theorem~\ref{thm.undecidablestatefull}.
\qed
\fi

\begin{figure}[t]
\iftype
\begin{lstlisting}[frame=single]
actor $R_i$ {
	fields { dec }
	method one() = ( [dec = $\false$] $R_i$ $\invk$ one() $\ite$ skip ) $\prefix$ dec $\upd$ $\false$
	method inc(pc) = $R_i$ $\invk$ one() $\prefix$ Ctrl $\invk$ run(pc)
	method decJump(pc, pc') = 
		dec $\upd$ $\true$ $\prefix$ $R_i$ $\invk$ checkZero(pc, pc')
	method checkZero(pc, pc') =
		[dec = $\true$] Ctrl $\invk$ run(pc') $\ite$ Ctrl $\invk$ run(pc)
	method init() = dec $\upd$ $\false$
}
actor Ctrl {
	method run(pc) =
		[pc = 1] ${\it Instruction_1}}$ $\ite$
		. . . 
		[pc = n] ${\it Instruction_n}}$ $\ite$ $\pinull$
	method init() = $R_1$ $\invk$ init() $\prefix$ $R_2$ $\invk$ init() $\prefix$ Ctrl $\invk$ run(1)
}
\end{lstlisting}
\else
\[
\eqalign{
 \llap{\hbox to 116 pt{$\mathtt R$\hfill}}&
 \mbox{{\tt /\!/ $\mathtt{R}$ has a field $\mathtt{next}$}}\cr
{\tt R}.{\it dec_1}({\it ctrl},r,{\it stm}) &=
{\it ctrl} \invk {\it run}(\mathtt{next},r,{\it stm}) \cr
{\tt R}.{\it dec_2}({\it ctrl},r,{\it stm}) &=
{\it ctrl} \invk {\it run}(r,\mathtt{next},{\it stm}) \cr\cr
 \llap{\hbox to 116 pt{$\mathtt{Ctrl}$\hfill}}&
  \mbox{{\tt /\!/ $\mathtt{Ctrl}$ has fields 
  $\mathtt{stm}_1$, $\cdots$, 
  $\mathtt{stm}_n$ and $\mathtt{nil}$}}
\cr
{\tt Ctrl}.{\it run}(r_1,r_2,{\it pc}) &= [{\it pc} = {\tt stm}_1] 
\semantics{{\it Instruction\_1}}_{r_1,r_2} \ite 
\cr
& \qquad \cdots 
\cr
&\phantom{{}={}} [{\it pc} = {\tt stm}_n] 
\semantics{{\it Instruction\_n}}_{r_1,r_2} 
  }
\]

\bigskip

\leftline{where $\semantics{{\it Instruction\_i}}_{r_1,r_2}$ is equal to}
\begin{itemize}[label=--]
\item[--]
${\it this} \invk {\it run}({\newact{\adef{R}}({r}_1)},{r}_2,\mathtt{stm}_{i+1})$
\qquad if 
${\it Instruction\_i} = {\sf Inc}(R_1)$;

\medskip

\item[--]
${\it this} \invk {\it run}(r_1,{\newact{\adef{R}}({r}_2)},\mathtt{stm}_{i+1})$
\qquad if 
${\it Instruction\_i} = {\sf Inc}(R_2)$;

\medskip

\item[--]
$[{r}_1 = \mathtt{nil}] {\it this} \invk {\it run}(r_1,r_2,\mathtt{stm}_k) \ite
{r}_1 \invk {\it dec_1}({\it this},{r}_2,\mathtt{stm}_{i+1})$ \\
\hfill
if ${\it Instruction\_i} =$ ${\sf DecJump}(R_1, k)$;

\medskip

\item[--]
$[{r}_2 = \mathtt{nil}] {\it this} \invk {\it run}(r_1,r_2,\mathtt{stm}_k) \ite
{r}_2 \invk {\it dec_2}({\it this},{r}_1,\mathtt{stm}_{i+1})$\\
\hfill
if ${\it Instruction\_i} = $ ${\sf DecJump}(R_2, k)$;

\medskip

\item[--]
$\pinull$ \qquad if 
${\it Instruction\_i} = {\sf Halt}$.
\end{itemize}\medskip

\leftline{The main process is 
$\letin{x}{\newact{\adef{Ctrl}}(x_1,\cdots,x_n,{\it nil})}{x \invk {\it run}({\it nil},{\it nil},x_1)}$.}
\fi
\caption{Encoding a 2CM in {\actro}}\label{fig:2CM_actor2}
\end{figure}

\section{Decidability results for {\actroba}}
\label{sec.decidability}


We demonstrate that programs in {\actroba}
are well-structured transition systems~\cite{abdulla:96,Finkel:2001}.  This
will allow us to decide a number of properties, such as termination.
%
We begin with some background on well-structured transition systems.

A reflective and transitive relation is called \emph{quasi-ordering}.
A \emph{well-quasi-ordering} is a quasi-ordering $(X, \le)$ such that, for every infinite sequence $x_1, x_2, x_3,$ 
$\cdots$, there exist $i,j$ with $i < j$ and $x_i \leq x_j$.

\begin{defi}
A \emph{well-structured transition system} 
is a finitely branching transition system $({\cal S}, \lred{}, \preceq)$ where $\preceq$ is a quasi-ordering relation on states such that 
\begin{enumerate}
\item 
$\preceq$ is a well-quasi-ordering
\item 
$\preceq$ is upward compatible with $\lred{}$, i.e., for every $\State_1 
\preceq \State_1'$ and  $\State_1 \lred{} \State_2$, there exists $\State_1' 
\lred{}^* \State_2'$ such that $\State_2 \preceq \State_2'$.
\end{enumerate}
\end{defi}

\noindent Given a state $s$ of a well-structured transition system, $\pred{s}$
denotes the set of immediate predecessors of $s$ (i.e., $\pred{s}=\{s'\; |\; s' \lred{} s\}$)
while $\uparrow s$ denotes the set of states greater than $s$ (i.e., $\uparrow s=
\{ s' \; | \; s \preceq s'\}$). With abuse of notation we will denote with $\pred{\_}$
also its natural extension to sets of states.

According to the theory of well-structured transition systems~\cite{abdulla:96,Finkel:2001},
we have that several properties are decidable for such transition systems
(under some conditions discussed below).
We will consider the following properties. 
\begin{defi}
\label{def.otherproblems}
%
%
Consider a well-structured transition system $({\cal S}, \lred{}, \preceq)$.
Given $\State \in {\cal S}$ the \emph{termination problem} is to 
decide whether $\State$ has an infinite computation;
the \emph{control-state reachability} problem is to decide,
given $\StateT \in {\cal S}$, whether there is $\StateT' \succeq \StateT$ 
such that $\State \lred{}^* \StateT'$.
\end{defi}
In well-structured transition systems termination is decidable when
the transition relation $\lred{}$ and the ordering $\preceq$ are
effectively computable. When it is also possible to effectively compute
a finite-basis for the set of states $\pred{\uparrow s}$ we have that 
control-state reachability is decidable as well. 

In the following we assume given an actor program with its 
main process and its set of actor class definitions.
The first relation we convey is $\eqdot$ that relates renamings,
ranged over by $\rho$, $\rho'$, $\ldots$
that are functions mapping variables \emph{that are
not free in the main process} into either actor names or variables.
%
Let 
{\small
\[
\begin{array}{lrll}
\rho \eqdot \rho' \quad \eqdef \quad & \mbox{ for every } x,y: 
&(i) \quad  
\rho(x) = \rho(y) \quad & 
\begin{array}{l}
\mbox{if and only if $\quad$ $\rho'(x) = \rho'(y)$} 
\end{array}
\vspace{2mm}\\
&&(ii) \quad \rho(x) = \rho'(x) \quad & 
\begin{array}{l}
\mbox{if $\rho(x)$ or $\rho'(x)$ is an actor name or}\\
\mbox{a free variable of the main process}
\end{array}
\end{array}
\] }
Namely, two renamings are in the relation $\eqdot$ if they identify the 
same variables, regardless the value they associate when such a value is a 
variable.
For example, $[x \mapsto y, y \mapsto z] \eqdot [x \mapsto x, y \mapsto z]$
and $[x \mapsto y, y \mapsto y, z \mapsto A] \eqdot [x \mapsto x', y \mapsto x',
z \mapsto A]$. However $[x \mapsto y, y \mapsto z] \not \eqdot [x \mapsto x, y \mapsto x]$ and $[x \mapsto A] \not \eqdot [x \mapsto B]$.
In general, if $\rho$ and $\rho'$ are injective renamings that 
always return variables then $\rho \eqdot \rho'$. 
The requirements of $\eqdot$ are stronger for actor names:
in this case the two renamings should be identical.
%
By definition, renamings in relation according to $\eqdot$
never apply to free variables of the main process.
This because these variables are possibly stored in fields of actors and their
renaming might change the behaviours of actors in a way that breaks the 
upward compatibility of the following relation $\preceq$ and $\lred{}$
(\emph{c.f.} proof of Theorem~\ref{thm.decidablestatelessandfinite}, part 
{\bf (2)}). From this also follows that the above renamings \emph{do not change the
main process} (because they do not apply to its free variables).

Let $\dom{\rho}$ be the domain of the renaming $\rho$. 
We denote by $P\rho$ the result of $P \subst{\rho(\wt{x})}{\wt{x}}$, where 
$\wt{x} = x_1, \cdots, x_n$ is a tuple containing the variables in $\dom{\rho}$ 
(without repetitions) and $\rho(\wt{x}) = \rho(x_1), \cdots, \rho(x_n)$.
%

Next, let $\simeq$ be the least relation on terms $m(U_1, \cdots , U_n)$ 
and on processes such that
{\small
\[
\bigfract{ \rho \eqdot \rho'
	}{
	m(\rho(x_1), \cdots, \rho(x_k)) \simeq m(\rho'(x_1), \cdots, 
	\rho'(x_k))
	}
\qquad \qquad
\bigfract{ \rho \eqdot \rho'
	}{
	P\rho \simeq P\rho'
	}
\]
}
For example, it is easy to verify that $m(x,y) \simeq m(x',y')$ and that
$[x = A] y \invk m(x,A,y) \simeq 
[z = A] y' \invk m(z,A,y')$. On the contrary 
$[x = A] B \invk m(x,A,B) \not \simeq 
[z = A] y' \invk m(z,A,y')$. The rationale behind $\simeq$ is that it
identifies  processes that ``behave in similar ways'', namely they
enqueue ``similar invocations'' in the same actor queue. Method invocations
$m(U_1, \cdots , U_n)$ of a given actor 
are identified if the processes they trigger
 ``behave in similar ways''.
%

\begin{lem}
\label{prop.finteterms}
Let $T$ be either a method invocation $m(U_1, \cdots , U_n)$ 
or a process of a program in {\actba} (and therefore in {\actroba}).
Let ${\cal T} = \{ T \rho_1, T\rho_2, T\rho_3, \cdots \}$ be
such that $i \neq j$ implies $T \rho_i \not \simeq T \rho_j$. 
Then ${\cal T}$ is finite.
\end{lem}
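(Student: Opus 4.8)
The plan is to show that the quotient of $\{T\rho\}$ by $\simeq$ is finite by bounding, independently of $\rho$, the number of distinct $\simeq$-classes that can arise. The key observation is that applying a renaming $\rho$ to a fixed term or process $T$ changes only the values substituted for the variables of $T$ that lie in $\dom\rho$ (i.e. the variables not free in the main process), and by the definition of $\simeq$ what matters about $\rho$ is only the data that $\eqdot$ records: (a) the partition that $\rho$ induces on $\var{T}\cap\dom\rho$ (which variables are identified with which), and (b) for those variables that $\rho$ maps to an actor name or to a free variable of the main process, \emph{which} actor name or free variable. So the first step is to make this precise: I would prove that if $\rho$ and $\rho'$ agree on (a) and (b) when restricted to the finitely many variables actually occurring in $T$, then $T\rho \simeq T\rho'$. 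This is essentially a direct unfolding of the definition of $\eqdot$ and $\simeq$ — one extends $\rho\!\restriction_{\var T}$ and $\rho'\!\restriction_{\var T}$ to full renamings $\hat\rho \eqdot \hat\rho'$ that witness $T\rho = T\hat\rho \simeq T\hat\rho' = T\rho'$.

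Granting that reduction, the problem becomes combinatorial. The set $\var{T}$ is finite, say of size $k$; the set of variables that are free in the main process is finite (the main process is a fixed finite term), say of size $m$; but the set of actor names is infinite, so naively (b) has infinitely many values. The crucial point that rescues finiteness is that $T$ itself is a \emph{fixed} term (a method invocation $m(U_1,\dots,U_n)$ or a process arising from the program), and in {\actba} the only actor names that can appear in such run-time terms are the finitely many ones created by the main process — so I would argue that the relevant actor names are drawn from a fixed finite set determined by the program, and hence (b) ranges over a finite set. Actually the cleaner route, and the one I expect the paper takes: the variables in $\dom\rho$ may be mapped by $\rho$ to actor names or free main-process variables or other (``anonymous'') variables; for a variable of $T$ mapped to an ``anonymous'' variable, $\eqdot$ does not care about the name, only the partition block; for one mapped to an actor name or free variable, we must know which. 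Since there are only $k$ variables in $\var T$, and each is either grouped with others into a partition block or pinned to one of finitely many ``named'' targets (actor names occurring in the program plus the $m$ free variables of the main process), the number of possibilities is bounded by a function of $k$, $m$, and the number of program actor names — all finite. Hence only finitely many $\simeq$-classes occur, and since the $\rho_i$ were chosen so that $T\rho_i \not\simeq T\rho_j$ for $i\neq j$, the index set is finite, i.e. $\cal T$ is finite.

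Concretely I would organize the proof as follows. First, fix $T$ and let $V = \var T$, a finite set, and let $W$ be the (finite) set of variables free in the main process. Second, define for a renaming $\rho$ its \emph{signature on $T$} to be the pair consisting of (i) the equivalence relation $\sim_\rho$ on $V$ given by $x \sim_\rho y \iff \rho(x)=\rho(y)$, together with a marker on each class saying whether $\rho$ sends it to an actor name or a variable in $W$ (in which case also record which one) or to none of these. Third, show there are only finitely many signatures: the number of equivalence relations on the $k$-element set $V$ is finite (the $k$-th Bell number), and for each the number of ways to decorate classes with elements of the finite set $(\text{actor names of the program}) \cup W \cup \{\star\}$ is finite. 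Fourth, show that equal signatures imply $\simeq$-equivalent images, by building witnessing renamings $\hat\rho \eqdot \hat\rho'$ as above. Fifth, conclude: the map $\rho_i \mapsto (\text{signature of }\rho_i\text{ on }T)$ is injective on $\{\rho_1,\rho_2,\dots\}$ up to the choice of the $\rho_i$ producing pairwise non-$\simeq$ images, hence the index set injects into a finite set and $\cal T$ is finite.

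The main obstacle, and the place needing care rather than routine calculation, is step four — verifying that agreement of signatures really does force $T\rho \simeq T\rho'$. One must be careful that $\simeq$ is defined via the existence of renamings $\rho,\rho'$ with $\rho\eqdot\rho'$ and $T=P\rho$, not via applying $\eqdot$ directly to the given renamings; so one has to reconstruct suitable $P$, $\rho$, $\rho'$ from the signature data, respecting the side condition that such renamings never touch variables free in the main process and hence leave the main process unchanged. A subtle point is the asymmetry in $\eqdot$ between actor names / free main-process variables (which must be matched exactly) and other variables (which only need to induce the same partition): the signature must record exactly this information and no more, so that it is both finite-valued and sufficient. Handling that asymmetry correctly is where the real content of the argument lies; the counting in steps three and five is then immediate, and the reduction in step one is a mechanical unfolding of definitions.
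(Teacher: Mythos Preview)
Your proposal is correct and follows essentially the same line as the paper's proof: both reduce the problem to counting $\eqdot$-classes of renamings restricted to the (finitely many) free variables of $T$, observe that such a class is determined by a partition of $\free{T}$ (giving the Bell number) together with a choice of ``pinned'' target for some blocks drawn from a finite set, and conclude finiteness by the obvious product bound. Your signature construction packages exactly this data; the paper does the same count more tersely, mentioning only actor names explicitly (you are right to also include the free variables of the main process, which the paper's write-up elides). Your worry about step four is not a real obstacle: taking $P=T$ and restricting the given $\rho_i$, $\rho_j$ to $\free{T}$ already yields renamings satisfying $\eqdot$, so the $\simeq$-witness is immediate from the definition.
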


\ifconf
\else
\proof
 We demonstrate the lemma for processes, the argument is similar for 
method invocations. So, let  $P$ be a process. It is possible to count the 
number of renamings $\rho$ on $\free{P}$ that are different according to
$\eqdot$. In fact, the values of renamings on variables that are 
different from  $\free{P}$ do not play any role in the definition of ${\cal T}$.

The basic remark is that a renaming $\rho$ generates a \emph{partition} of the
set $\free{P}$: two variables $x$ and $y$ are in the same partition if and only
if $\rho(x) = \rho(y)$. If we restrict to renamings that map variables to
variables (and not actor names), then they are different according to 
$\eqdot$ if they
yield different partitions. The number of such renaming is the 
\emph{Bell number} of the cardinality of $\free{P}$, let it be
${\tt Bell}(\kappa)$, where $\kappa$ is the cardinality of $\free{P}$. In 
addition, in our case, renamings may map a variable 
to an actor name into a finite set $\{A_1, \cdots, A_\ell\}$. In this case
the identity of the actor name is relevant. If $\kappa \geq \ell$ then 
$((\combinator{\kappa}{\ell}) \times \ell! +1)\times {\tt Bell}(\kappa)$ is an upper bound to the different renamings according $\eqdot$. If $\kappa < \ell$ then
the upper bound is $(\ell! / \kappa! +1 )\times {\tt Bell}(\kappa)$. In any case
the number of different renamings according to $\eqdot$ is finite.

Henceforth the set ${\cal T}$ is finite as well.
\qed
\fi

The well-quasi-ordering relation on configurations relies on an (almost standard) 
\emph{embedding relation} $\leq$ on queues (except the part about $\simeq$, it 
is the one in~\cite{Finkel:2001}):
{\small
\[
\bigfract{
	{\it there \; exist} \; i_1 < i_2 < \cdots < i_k \leq h \; 
	{\it such \; that, \; for} \; j \in 1..k,  \; \; m_j(\wt{U_j}) 
	\simeq n_{i_j}(\wt{V}_{i_j})
	}{
	m_1(\wt{U_1}) \ldots m_k(\wt{U_k}) \leq n_1(\wt{V_1}) \ldots n_h(\wt{V_h})
	}
\]
}
Then, let
{\small
\[
\bigfract{P_i \simeq P_i'\quad  \mbox{\rm and}  \quad q_i \leq q_i'\quad \mbox{\rm for } {i \in 1..\ell}
	}{
	 A_1 \triangleright (P_1, \varphi_1, q_1), \cdots , A_\ell \triangleright (P_\ell, \varphi_\ell, q_\ell) 
 \; \preceq \; A_1 \triangleright (P_1', \varphi_1, q_1'), \cdots , A_\ell \triangleright 
 (P_\ell',\varphi_\ell, q_\ell')
 }
\]
}
It is worth noticing that the relation $\preceq$ constraints corresponding
elements $A \triangleright (P, \varphi, q)$ and $A \triangleright 
(P', \varphi, q')$ to have the same states. In fact these states are defined by
the main process using either its free variables or the actor names that it
has created. For this reason there are finitely many of them and the relation
$\preceq$ is parametric with respect to them.

\begin{thm}
\label{thm.decidablestatelessandfinite}
Let $({\cal S},\lred{})$ be a transition system of a program
of {\actroba}. 
Then $({\cal S}, \lred{}, \preceq)$ is a well-structured transition system.
\end{thm}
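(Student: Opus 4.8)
The plan is to verify the two defining conditions of a well-structured transition system for $({\cal S}, \lred{}, \preceq)$: that $\preceq$ is a well-quasi-ordering, and that $\preceq$ is upward compatible with $\lred{}$. Finite branching has already been established in Section~\ref{sec.thelanguage}. Reflexivity and transitivity of $\preceq$ are routine: $\simeq$ is reflexive and transitive by construction (it is built from the equivalence $\eqdot$ on renamings), the embedding $\leq$ on queues is a quasi-ordering in the standard way, and the states $\varphi_i$ must match exactly, so $\preceq$ composes. The key observation for both conditions is that a configuration reachable from the initial one contains only actor names created by the main process and only finitely many ``relevant'' states $\varphi$, so $\preceq$ is really parametrised by a fixed finite set of actors with fixed states.

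For the well-quasi-ordering, I would argue as follows. By Lemma~\ref{prop.finteterms}, each process $P$ and each method invocation $m(\wt{U})$ occurring at run time is (up to $\simeq$) one of finitely many, because every such term is of the form $T\rho$ for $T$ a process/invocation appearing syntactically in the program and $\rho$ a renaming, and $\simeq$ collapses these to a finite set. Hence the set of $\simeq$-classes of processes is finite, and the set of $\simeq$-classes of queue entries is finite, so the quotient $({\cal Q}, \leq)$ of queues under $\simeq$-entries is a well-quasi-ordering by Higman's lemma (this is exactly the setting of~\cite{Finkel:2001}). Since a configuration is a tuple indexed by the fixed finite set of actor names, with each component a pair (process $\simeq$-class, queue), and a finite product of well-quasi-orderings is a well-quasi-ordering (Dickson), $\preceq$ is a well-quasi-ordering on the reachable configurations.

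For upward compatibility, suppose $\State_1 \preceq \State_1'$ and $\State_1 \lred{} \State_2$. I would proceed by a case analysis on the rule of Table~\ref{tab.trans} used in $\State_1 \lred{} \State_2$. The point is that if $A \triangleright (P_1, \varphi, q_1) \preceq A \triangleright (P_1', \varphi, q_1')$, so $P_1 \simeq P_1'$ and $q_1 \leq q_1'$, then $P_1'$ can fire the ``same'' transition: $P_1 = Q\rho$ and $P_1' = Q\rho'$ with $\rho \eqdot \rho'$, and since $\eqdot$ preserves all equalities and disequalities among variables and preserves actor names and free variables of the main process, every guard $[E = E']$ evaluates the same way in $P_1$ and $P_1'$, every field is read identically (the states $\varphi$ coincide and fields are read-only, so they are never updated), and every $\newact{\adef{C}}$ is disallowed outside the main process in {\actba}, hence no new actors appear. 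So $\State_1'$ makes one matching step to some $\State_2'$, and one checks $\State_2 \preceq \State_2'$: the residual processes are again related by $\simeq$ via renamings differing only in their variable values (using fresh variables on the right that generate the same partition), any enqueued invocation $m(\wt{U})$ is $\simeq$-related to the one enqueued on the right, preserving $\leq$, and the unchanged components of $\State_1'$ beyond $\State_1$ are carried along. For the \rulename{inst} rule one additionally notes that $\fresh{\wt{y}}$ on both sides can be chosen to yield $\eqdot$-equivalent renamings. The subtle rule is \rulename{invk} (cross-actor invocation): there the step touches two actors, and one must check the enqueued message on the target stays $\simeq$-equivalent; this follows because the arguments $\wt{U}$ are obtained by evaluating $\wt{E}$ in $\varphi$, which is identical on both sides, and because $\simeq$ on method invocations is exactly designed to be preserved under $\eqdot$-related renamings of the triggered process.

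The main obstacle I expect is upward compatibility for the rules that extend the configuration or substitute into a process body, namely \rulename{inst} and \rulename{let}: one must show that the substituted process on the larger side is still $\simeq$-equivalent to the one on the smaller side, which requires carefully tracking that the substitution $\subst{\wt{U}}{\wt{x}}$ composed with the ambient renaming still yields an $\eqdot$-equivalent renaming on the free variables of the method body, and that the choice of fresh variables does not break this — this is precisely why $\eqdot$ was defined to ignore the actual variable values and only record the induced partition, and why free variables of the main process are excluded from the renamings (as the paper flags, otherwise a renaming could change a value stored in a field and desynchronise the two runs). Everything else is a mechanical rule-by-rule check.
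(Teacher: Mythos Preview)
Your well-quasi-ordering argument is essentially the paper's, phrased directly rather than by contradiction, and is fine.

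There is a genuine gap in your upward-compatibility argument, specifically for \rulename{inst}. You write that ``$\State_1'$ makes one matching step to some $\State_2'$'', but for \rulename{inst} this is false. If $\State_1$ has $A \triangleright (\pinull,\varphi,\, m(\wt{U})\cdot q)$ and $\State_1 \preceq \State_1'$, then by the embedding $\leq$ on queues the actor $A$ in $\State_1'$ has a queue of the form $n_1(\wt{V_1})\cdots n_h(\wt{V_h})\cdot m(\wt{V})\cdot q'$ with $m(\wt{U})\simeq m(\wt{V})$ and $q\leq q'$; the matching message $m(\wt{V})$ need not be at the head. Since \rulename{inst} can only fire on the head of the queue, $\State_1'$ cannot mimic the step in one move. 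The paper's proof lets the actor $A$ in $\State_1'$ run through the $h$ extra messages first: each instantiated body is a finite, non-blocking process, so it terminates, possibly enqueueing further messages at the back of queues (its own and others'). This yields $\State_1' \lred{}^* A\triangleright(\pinull,\varphi,\, m(\wt{V})\cdot q'\cdot q''),\,\StateT_1''$ with $\StateT_1' \preceq \StateT_1''$ and $q\leq q'\cdot q''$, after which the matching \rulename{inst} can fire. This is precisely why the definition of upward compatibility uses $\lred{}^*$ rather than $\lred{}$, and why the compatibility here is only stuttering, not strong.

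Two consequences you must handle and do not mention: the intermediate steps may send messages to \emph{other} actors, so you need that those enqueues only grow their queues (preserving $\leq$) and do not disturb $\simeq$ on their processes; and the extra messages processed by $A$ may themselves be absent from $\State_1$, so you cannot pair them with anything on the small side. Your remark about $\fresh{\wt{y}}$ is correct but is not the obstacle here; the obstacle is the queue prefix. Once you insert this ``flush the prefix'' argument for \rulename{inst}, the rest of your case analysis goes through as you describe.
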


\ifconf
\else
\proof
{\bf (1)} \emph{$\preceq$ is a well-quasi-ordering}. 
To prove that $\preceq$ is a well-quasi-ordering, we reason by contradiction.
Let $\State_1, \State_2, \State_3, \cdots$ be 
an infinite sequence of states in ${\cal S}$ such that, for every $i < j$,  $\State_i \not \preceq 
\State_j$. 
Let $y_1$,$\cdots$,$y_m$ be a sequence of variables not free in the main process.
Consider:
\[
\begin{array}{ll}
{\tt subterms}(\adef{C}) = &
\{ P \quad | \quad \mbox{there exists a method } m \mbox{ s.t. }
P \mbox{ is a subterm of } \adef{C}.m(\wt{x})\}\ \cup
\\
& 
\{m(y_1, \cdots ,y_g)\quad | \quad
\mbox{there exists a method } m \mbox{ s.t. }
\adef{C}.m(\wt{x}) \mbox{ with  $|\wt{x}|=g$} \}
\end{array}
\]
The set ${\tt subterms}(\adef{C})$ is finite, but all the --possibly infinitely many-- processes
that can be executed
(or the messages that can be received)  
by an actor of class $\adef{C}$ are renamings $P\rho$ (or $m(\wt{y}) \rho$) 
of these terms.
Notice that by Lemma~\ref{prop.finteterms}, the number of terms $P \rho$ 
(and $m(\wt{y}) \rho$) which are different 
according to $\simeq$ is finite as well. 
It is thus possible to extract 
a subsequence $\State_{i_1}, \State_{i_2}, \State_{i_3}, \cdots$ from 
$\State_1, \State_2, \State_3, \cdots$ such that, for every $A$, in the
elements $A \triangleright (P_{i_j}^A \rho_{i_j}, \varphi_{i_j}^A, q_{i_j}^A)$ and $A \triangleright (P_{i_k}^A\rho_{i_k}, \varphi_{i_k}^A, q_{i_k}^A)$
of $\State_{i_j}$ and $\State_{i_k}$, respectively, we have that $P_{i_j}^A\rho_{i_j} \simeq
P_{i_k}^A\rho_{i_k}$. Moreover, as we are considering {\actroba} the actor state cannot
be modified, hence $\varphi_{i_j}^A=\varphi_{i_k}^A$.

As we are considering {\actroba}, the set of actor is bound.
Let ${A_1}, \cdots , {A_\ell}$ be such actor names.
Due to the above arguments, the sequence $\State_{i_1}, \State_{i_2}, 
\State_{i_3}, \cdots$ may be represented as a sequence of tuples
of queues:
\[
(q_{i_1}^{A_1}, \cdots , q_{i_1}^{A_\ell}), \;
(q_{i_2}^{A_1}, \cdots , q_{i_2}^{A_\ell}),\;
(q_{i_3}^{A_1}, \cdots , q_{i_3}^{A_\ell}),\; \cdots
\]
such that $\State_{i_j} \preceq \State_{i_k}$ if and only if 
$(q_{i_j}^{A_1}, \cdots , q_{i_j}^{A_\ell}) \sqsubseteq^\ell
(q_{i_k}^{A_1}, \cdots , q_{i_k}^{A_\ell})$, where $\sqsubseteq^\ell$ is the 
coordinatewise order defined by
\[
(q_1, \cdots , q_\ell) \sqsubseteq^\ell
(q_1', \cdots , q_\ell') \quad \eqdef \quad 
\mbox{{\em for every h}} \; : \; q_h \leq q_h'
\]
($\leq$ is the above embedding relation).

We are finally reduced to an infinite sequence of tuples of queues 
such that every tuple cannot be in relation according to $\sqsubseteq^\ell$
with any of the subsequent ones.
This fact contradicts the 
\begin{itemize}
\item[] \emph{Higman's Lemma \cite{HigmanLemma}:
if $(X,\le)$ is a well-quasi-ordering and $(X^*, \le^*)$ is the set of finite 
$X$-sequences ordered by the embedding 
relation $\le^*$ defined using $\le$ as pointwise ordering, 
then $(X^*, \leq^*)$ is a  well-quasi-ordering.}
\end{itemize}
More precisely, the contradictions follows from the following
consequence of the Higman's Lemma:
\begin{itemize}
\item 
if $X$ is a finite set and $(X^*, \le)$ is the set of finite $X$-sequences 
ordered by the embedding 
relation, then $(X^*, \leq)$ is a  well-quasi-ordering.
\end{itemize}
and from the following statement
\begin{itemize}
\item 
if $(X,\leq)$ is a well-quasi-ordering then $(X^\ell, \leq^\ell)$ is a well-quasi-ordering.
\end{itemize}

\medskip

{\bf (2)} \emph{$\preceq$ is upward compatible with $\lred{}$}.
A state $\varphi$ is \emph{normed}, if, for every field $\f$, $\varphi(\f)$ is 
either a free variable in the main process or an actor name.
A configuration is \emph{normed} if the states of the actors are normed.
We observe that the initial configuration is 
 normed.
We also let $(\wt{E})\rho \simeq (\wt{E})\rho'$ whenever $\rho \eqdot \rho'$.

We first demonstrate that, if $\wt{E} \lleadsto{\varphi} \wt{U}\; ; \; \State$
with $\varphi$ normed, then

\begin{enumerate}[label=(exp-\roman*)]
\item
if  $\wt{E} \simeq \wt{E'}$ and $\State = \varnothing$ then
$\wt{E'} \lleadsto{\varphi} \wt{U'}\; ; \; \varnothing$ and 
$\wt{U} \simeq \wt{U'}$;

\item
if $\wt{E}$ (respectively $P$) only contain  free variables in the main process 
and actor names
then $\wt{E} \simeq \wt{E}'$ (respectively $P \simeq P'$) implies
$\wt{E} = \wt{E}'$ (respectively $P =_\alpha P'$) 
and $\wt{E}  \lleadsto{\varnothing} \wt{U}\; 
; \; \State$ implies that $\wt{U}$ contain  free variables in the main process and actor names and $\State$ is normed.
\end{enumerate}

\noindent {(exp-i)} is proved by induction on the hight of the proof-tree of 
$\wt{E} \lleadsto{\varphi} \wt{U}\; ; \; \varnothing$. There are two basic cases:
(1) $E = U$ and (2) $E = \f$. As regards (1), $E' = U'$ and the property is
immediate by the hypothesis that $E \simeq E'$. As regards (2), $E' = \f$ because
$E \simeq E'$; henceforth the property (because $E'$ is evaluated in the state 
$\varphi$ as well). There is one inductive case (because the case of $\mathtt{new}$
is not possible, otherwise $\State$ cannot be empty), which is immediate.

{(exp-ii)} is an immediate consequence of the definition of $\eqdot$ and
$\lleadsto{\varnothing}$.

\bigskip

\noindent Let $\State_1 \lred{} \State_2$. We demonstrate that 

\begin{enumerate}[label=(\roman*)]
\item
if $\State_1$ is normed then $\State_2$ is normed as well
(this means that the transition system $({\cal S},\lred{})$ of a program
of {\actroba} has normed configurations because the initial state is normed);

\item
if $\State_1 \preceq \State_1'$ then there exists $\State_1' \lred{}^* \State_2'$ such that $\State_2 \preceq \State_2'$.
\end{enumerate}
As regards \emph{(i)}, it follows by remarking that in programs of {\actroba},
there is no field update and the unique process
that may create states is the one of $\aleph$ (the main process). Then 
\emph{(i)} derives from the property \emph{(exp-ii)}.

As regards \emph{(ii)}, its proof is a case analysis on the proof-tree of 
$\State_1 \lred{} \State_2$ where the cases correspond to the unique rule appearing
in the tree that is not an instance of \rulename{context}. 
%
Let $\State_1 = A_1 \triangleright (P_1\rho_1, \varphi_1, q_1),
\cdots , A_\ell  \triangleright (P_\ell \rho_\ell, \varphi_\ell, q_\ell)$. Since $\State_1 
\preceq \State_1'$ then 
$\State_1' = A_1 \triangleright (P_1\rho_1', \varphi_1, q_1'),
\cdots ,  A_\ell  \triangleright (P_\ell\rho_\ell', \varphi_\ell,
q_\ell')$ such that, for every
$i$, 
$P_i \rho_i \eqdot P_i \rho_i'$ and $q_i \leq q_i'$. 
The cases are discussed in order.

%
%
%
\begin{enumerate}
\item
$\State_1 \lred{} \State_2$ contains an instance of \rulename{let}, namely
\[A \triangleright (\letin{x}{E}{P}, 
\varphi, q), 
\; \lred{} \; A \triangleright (P\subst{U}{x}, \varphi, q ), \State_3\;.\] 
where $E \lleadsto{\varphi} U \, ; \, \State$. 
By  $\State_1 \preceq \State_1'$,
$\State_1'$ must contain $A \triangleright (\letin{x}{E'}{P'}, \varnothing,
\varepsilon)$ such that $\letin{x}{E}{P} \simeq \letin{x}{E'}{P'}$ (without loss of 
generality, we are assuming the two bound variables are the same) and $q \leq q'$.
There are two subcases: (1.1)
$A = \aleph$ and (1.2) $A \neq \aleph$. In (1.1), By \emph{(exp-ii)}, 
this is possible provided $E = E'$ and $P =_\alpha P'$. It is easy to verify that
$\State_1' \lred{} \State_2'$ and $\State_2 \preceq \State_2'$
because their unique difference with $\State_1$ and $\State_1'$
is due to the two processes $P$ and $P'$.
In (1.2), $\State_3 = \varnothing$ because no ${\tt new}$ can occur in $E$.
Additionally, by definition of $\simeq$, $E \simeq E'$ and $P \simeq P'$.
Let $E' \lleadsto{\varphi} U', \, \varnothing$. By \emph{(exp-i)} we have $U 
\simeq U'$ and it is easy to verify that $P\subst{U}{x} \simeq P'\subst{U'}{x}$.
Henceforth $\State_1' \lred{} \State_2'$ and $\State_2 \preceq \State_2'$
because their unique difference with $\State_1$ and $\State_1'$
is due to the two processes $P\subst{U}{x}$ and $P'\subst{U'}{x}$.

\item
$\State_1 \lred{} \State_2$ contains an instance of \rulename{invk-s}, namely
\[A \triangleright (A \invk m(\wt{E}) \prefix P, \varphi, q), 
\; \lred{} \; A \triangleright (P, \varphi, q \cdot  m(\wt{U}))\;.\]
Since $\State_1 \preceq \State_1'$ then $\State_1'$ contains $A \triangleright (A \invk m(\wt{E'}) \prefix P', \varphi, q')$ with
$\wt{E} \simeq \wt{E'}$, $P \simeq P'$, and $q \leq q'$. 
We observe that $A \neq \aleph$ and if $\wt{E} \lleadsto{\varphi} \wt{U}, \, \varnothing$ and $\wt{E'} \lleadsto{\varphi} \wt{U'}, \, \varnothing$ then
$\wt{U} \simeq \wt{U'}$ by \emph{(exp-i)}. Therefore $q \cdot  m(\wt{U}) \leq
q' \cdot m(\wt{U'})$  and $\State_1' \lred{} \State_2'$ with $\State_2 \preceq
\State_2'$ because their unique difference with $\State_1$ and $\State_1'$
is due to the two terms 
$A \triangleright (P, \varphi, q\cdot m(\wt{U}))$ and 
$A \triangleright (P', \varphi, q'\cdot m(\wt{U'}))$.

\item
$\State_1 \lred{} \State_2$ contains an instance of \rulename{invk},
namely 
\[A \triangleright (B \invk m(\wt{E}) \prefix P, \varphi, q) , 
B \triangleright (Q, \psi, p)
	 \; \lred{} \;
	A \triangleright (P, \varphi, q), B \triangleright (Q, \psi, p 
	\cdot m(\wt{U}) ), \, \State_3\;.\] 
There are two subcases: either $A = \aleph$ or $A \neq \aleph$. When $A = \aleph$
the proof is similar to the above case (1.1); when $A \neq \aleph$ the proof is 
similar to case (2).

\item
$\State_1 \lred{} \State_2$ contains an instance of \rulename{inst}, namely
\[A \triangleright (\pinull, \varphi, m(\wt{U}) \cdot q) \lred{} A \triangleright (
P\subst{A}{{\it this}}\subst{\wt{y'}}{\wt{y}}\subst{\wt{U}}{\wt{x}}, \varphi, q)\;,\]
where $\adef{C} \prefix m(\wt{x}) = P$, $\adef{C}$ being the class of $A$,
$\wt{y} = \free{P} \setminus \wt{x}$ and $\wt{y'} = \fresh{\wt{y}}$.
Therefore $\State_1 = A \triangleright (\pinull, m(\wt{U}) \cdot q),
\StateT_1$ and $\State_2 = A \triangleright (
P\subst{A}{{\it this}}\subst{\wt{y'}}{\wt{y}}\subst{\wt{U}}{\wt{x}}, q), \StateT_1$. 
Since $\State_1 
\preceq \State_1'$ then $\State_1' = A \triangleright (\pinull, \varphi, 
n_1(\wt{V_1}) \cdots
n_h(\wt{V_h}) \cdot  m(\wt{V}) \cdot q'), \StateT_1'$ and 
$m(\wt{U}) \simeq m(\wt{V})$ and $q \leq q'$ and $\StateT_1 \preceq 
\StateT_1'$. By the operational semantics rules, we get
$\State_1' \lred{}^* A \triangleright (\pinull, \varphi, 
m(\wt{V}) \cdot q' \cdot q''), \StateT_1''$ 
by performing transitions of the actor $A$, with $\StateT_1' \preceq \StateT_1''$ 
and, by definition, $q \leq
q' \cdot q''$. At this stage, we notice that $A \triangleright (\pinull, 
\varphi, m(\wt{V}) 
\cdot q' \cdot q''), \StateT_1'' \lred{} A \triangleright (P
\subst{A}{{\it this}}\subst{\wt{z}}{\wt{y}} \subst{\wt{V}}{\wt{x}}, \varphi, 
q' \cdot q''), \StateT_1''$. We notice that
$P\subst{A}{{\it this}}\subst{\wt{y'}}{\wt{y}}\subst{\wt{U}}{\wt{x}} 
= P \subst{A}{{\it this}} [ \wt{y} \mapsto \wt{y'}, \wt{x} \mapsto \wt{U}]$
and
$P\subst{A}{{\it this}}\subst{\wt{z}}{\wt{y}}\subst{\wt{V}}{\wt{x}}
= P \subst{A}{{\it this}} [ \wt{y} \mapsto \wt{z}, \wt{x} \mapsto \wt{V}]$
and
$[ \wt{y} \mapsto \wt{y'}, \wt{x} \mapsto \wt{U}] \eqdot 
[ \wt{y} \mapsto \wt{z}, \wt{x} \mapsto \wt{V}]$.
Therefore 
\[
P\subst{A}{{\it this}}\subst{\wt{z}}{\wt{y}}\subst{\wt{V}}{\wt{x}}
\simeq 
P\subst{A}{{\it this}}\subst{\wt{z}}{\wt{y}}\subst{\wt{V}}{\wt{x}}
\]
which implies that $\State_1' \lred{}^*\lred{} \State_2'$ and $\State_2  \preceq
\State_2'$
because their unique difference with $\State_1$ and $\State_1'$
is due to the two above processes.

\item
$\State_1 \lred{} \State_2$ contains an instance of \rulename{match}, namely
\[A \triangleright ([E=E'] P \ite Q , \varphi, q) 
	\; \lred{} \; A \triangleright (P , \varphi, q), \State_3\;.\]
We discuss the case $A \neq \aleph$ because the other one is similar to (1.1).
There are three subcases (5.1) both $E$ and $E'$ are variables;
(5.2) $E$ is a variable and $E'$ is a field; (5.3) $E$ and $E'$ are
both fields. In case (5.1), let $E = x = E'$. Since $\State_1 
\preceq \State_1'$, then $\State_1'$ must contain $A \triangleright 
([z = z] P' \ite Q' , \varphi, q')$ with $[x=x] P \ite Q \simeq 
[z= z] P' \ite Q'$ and $q \leq q'$. Therefore we may use \rulename{match} to derive
$\State_1'  \lred{} \State_2'$ with $\State_2 \preceq \State_2'$.
In case (5.2), let $E = U$ and $E' = \f$.
There are two subcases: (5.2.1) $U$ is a variable or (5.2.2) $U$ is an actor name.
In (5.2.1), $U$ has to be a free variable in the main process because we are
using \rulename{match} ($\f$ may contain either such variables or actor names, additionally, renamings never return free variables in the main process). Therefore, by 
$\State_1 
\preceq \State_1'$, we have that $\State_1'$ contains $A \triangleright 
([U = \f] P' \ite Q' , \varphi, q')$ with $[U=\f] P \ite Q \simeq 
[U= \f] P' \ite Q'$ and $q \leq q'$. The consequence is that
$\State_1'  \lred{} \State_2'$ with $\State_2 \preceq \State_2'$ 
because their unique difference with $\State_1$ and $\State_1'$
is due to the two either the pair of processes $P$, $P'$
or $Q$, $Q'$.
Similarly for (5.2.2). 
The case (5.3) is obvious. 

\item
$\State_1 \lred{} \State_2$ contains an instance of \rulename{mmatch}. Similar to (5).

\item
$\State_1 \lred{} \State_2$ contains an instance of \rulename{plus-l} or of 
\rulename{plus-r}. Straightforward.
\qed
\end{enumerate}
\fi

We notice that the well-structured transition system $({\cal S}, \lred{}, \preceq)$ 
has decidable algorithms for computing $\preceq$ and for computing the next states. 
Then decidability of termination directly follows from the above mentioned
results of the theory of well-structured transition systems that we have
previously recalled.
\ifcamera
Then decidability of termination follows.
\else
\fi

\ifcamera
\else
\fi

\begin{thm}
\ifcamera
In {\actroba} termination is decidable.
\else
In programs of {\actroba}
the termination
problem is decidable.
\fi
\end{thm}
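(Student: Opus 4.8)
The plan is to obtain the statement as a direct corollary of Theorem~\ref{thm.decidablestatelessandfinite} together with the general decidability machinery for well-structured transition systems recalled earlier in this section. By Theorem~\ref{thm.decidablestatelessandfinite}, for every {\actroba} program the triple $({\cal S}, \lred{}, \preceq)$ is a well-structured transition system. The theory of well-structured transition systems (\cite{abdulla:96,Finkel:2001}) guarantees that termination of a given state is decidable as soon as the transition relation $\lred{}$ and the quasi-ordering $\preceq$ are effectively computable. Hence the whole argument reduces to checking these two effectiveness conditions; everything hard (in particular that $\preceq$ is a well-quasi-ordering and upward compatible with $\lred{}$) has already been discharged by Theorem~\ref{thm.decidablestatelessandfinite}, which itself rests on Lemma~\ref{prop.finteterms}.

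First I would check that $\lred{}$ is effectively computable: given a configuration $\State$, its set of successors is finite because the system is finitely branching (as remarked at the end of Section~\ref{sec.thelanguage}, since fresh actor names and fresh variables are chosen deterministically), and this set is obtained by a purely syntactic procedure --- for each actor term $A \triangleright (P, \varphi, q)$ one inspects which rule of Table~\ref{tab.trans} applies to its head and evaluates the involved expressions with the terminating procedure of Table~\ref{tab.eval}.

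Second I would check that $\preceq$ is decidable. By definition, to compare two configurations one first verifies that they carry the same actor names with the same states $\varphi_i$ --- a finite test --- and then, for each actor $A_i$, one must decide $P_i \simeq P_i'$ and $q_i \leq q_i'$. Deciding $P \simeq P'$ amounts to searching for a pair of renamings $\rho \eqdot \rho'$ with $P\rho = P$ and $P'\rho' = P'$; since only the finitely many free variables of $P$ (outside the main process) matter and, by the counting in the proof of Lemma~\ref{prop.finteterms}, there are finitely many renamings up to $\eqdot$, this search terminates. Deciding the queue embedding $\leq$ is then the standard subword-embedding test relative to $\simeq$ on method terms, which is decidable once $\simeq$ is. Therefore $\preceq$ is decidable.

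Having established that both $\lred{}$ and $\preceq$ are effectively computable, I would conclude by invoking the cited general result that termination is decidable for such well-structured transition systems. The only genuinely new content here beyond Theorem~\ref{thm.decidablestatelessandfinite} is this effectiveness bookkeeping; I expect the subtlest point to be the decidability of $\simeq$ (equivalently, of $\eqdot$ up to the bijections it induces on the free variables), but this follows immediately from the finiteness of the set of free variables involved and the bound on renamings already given in Lemma~\ref{prop.finteterms}.
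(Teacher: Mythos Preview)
Your proposal is correct and follows exactly the paper's own argument: the paper simply notes after Theorem~\ref{thm.decidablestatelessandfinite} that $\lred{}$ and $\preceq$ are effectively computable and then invokes the general decidability result for well-structured transition systems. You spell out the effectiveness checks (especially for $\simeq$ and the queue embedding) in more detail than the paper does, but the structure and content of the argument are identical.
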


We now move to the definition of an appropriate algorithm for the
computation of a finite basis for the predecessors of a given configuration,
so to conclude also the decidability of control-state reachability.

\begin{lem}
\label{lem.pred}
Let $({\cal S}, \lred{}, \preceq)$ be a well-structured transition system 
of a program in {\actroba}, and let $\State \in
{\cal S}$. Then there is a finite set ${\cal X} \subseteq \pred{\State}$ 
such that, for every $\State' \in \pred{\State}$, there is $\StateT \in {\cal X}$
with $\StateT \preceq \State'$. ${\cal X}$ can be effectively computed.
\end{lem}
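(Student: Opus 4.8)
The plan is to compute ${\cal X}$ by a finite case analysis on the operational rule of Table~\ref{tab.trans} used in the last step of a transition $\State' \lred{} \State$ (the rule that is not an instance of \rulename{context}) and on the actor, or ordered pair of actors for \rulename{invk}, on which it acts. Three structural facts about {\actroba} keep this finite. Since {\actba} allows ${\tt new}$ only in the main process, every configuration in ${\cal S}$ ranges over the same fixed finite set of actor names $A_1, \dots, A_\ell$ (those created by the main program, together with $\aleph$); since {\actro} has no field update, each $A_i$ carries a fixed state $\varphi_i$; hence, as already observed for $\preceq$, a configuration of ${\cal S}$ is determined by its tuple $A_1 \triangleright (P_1, \varphi_1, q_1), \dots, A_\ell \triangleright (P_\ell, \varphi_\ell, q_\ell)$, and only the processes and queues vary. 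Moreover, every process $P_i$ occurring in ${\cal S}$ is a renaming $P \rho$ of one of the finitely many subterms $P$ of the program, and every invocation $m(\wt{U})$ occurring in a queue is a field-resolved renaming of one of the finitely many invocation expressions of the program over values that, by normedness, are actor names or variables; by Lemma~\ref{prop.finteterms} only finitely many $\simeq$-classes of processes, and of invocations, can appear in ${\cal S}$.

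Given $\State = A_1 \triangleright (P_1, \varphi_1, q_1), \dots, A_\ell \triangleright (P_\ell, \varphi_\ell, q_\ell)$, I would run, for every rule $r$ and every relevant actor $A_k$ (resp. pair $A_k, A_{k'}$), a backward step that ``un-applies'' $r$. For \rulename{inst} at $A_k$: for each method $\adef{C} \prefix m(\wt{x}) = P$ of the class $\adef{C}$ of $A_k$, solve the first-order matching problem that asks for a substitution making $P\subst{A_k}{{\it this}}$ instantiate to $P_k$ with $\wt{x}$ receiving some values $\wt{V}$ and $\free{P}\setminus\wt{x}$ receiving fresh variables; for each solution add the configuration obtained from $\State$ by replacing the term of $A_k$ with $A_k \triangleright (\pinull, \varphi_k, m(\wt{V}) \cdot q_k)$. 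For \rulename{let}, \rulename{invk-s}, \rulename{invk}, \rulename{match}, \rulename{mmatch}, \rulename{plus-l}, \rulename{plus-r} at a non-root actor $A_k$: enumerate the finitely many subterms $T_0$ of the program of the matching shape (namely $\letin{x}{E_0}{P_0}$, $z \invk m(\wt{E_0}) \prefix P_0$, $[E_0 = E_0'] P_0 \ite Q_0$, or $P_0^1 + P_0^2$) and the finitely many renamings $\rho$ modulo $\eqdot$, keeping those $\rho$ such that the one-step reduct of $T_0 \rho$ — computed in the fixed state $\varphi_k$, which makes the evaluation $\lleadsto{\varphi_k}$ and the (mis)match tests decidable — is $P_k$ and, where the rule moves an invocation onto a queue, the tail of the relevant queue of $\State$ is that invocation; for each surviving $\rho$ add the corresponding predecessor. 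When the acting actor is $\aleph$, the bound or matched expressions may contain ${\tt new}$: here one uses that the main process, hence its ${\tt new}$-expressions, is fixed and that fresh names are chosen deterministically, so the set $\State_{\mathit{new}}$ of terminated-actor terms created in the step is determined by the candidate position; the predecessor is then $\State$ with $\State_{\mathit{new}}$ removed and $\aleph$'s process un-reduced. Let ${\cal X}$ be the union of all configurations produced this way.

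By construction every configuration in ${\cal X}$ reduces to $\State$, so ${\cal X} \subseteq \pred{\State}$; obtaining configurations that reduce to $\State$ exactly, rather than only up to $\simeq$, is a matter of choosing the free parameters — in particular the variables used as fresh names — appropriately, which is possible because $\State$ is reachable and the fresh-name discipline is deterministic, and can be realised by a bounded search inside each $\simeq$-class. ${\cal X}$ is finite and effectively computable: the enumerations are finite by Lemma~\ref{prop.finteterms}, and all side conditions — first-order matching of program subterms against given processes, and evaluation and (mis)match tests against the fixed states — are decidable. For the basis property, given $\State' \in \pred{\State}$ I would read off the rule and actor(s) of $\State' \lred{} \State$; since $\State' \in {\cal S}$, the process un-reduced in $\State'$ is a renaming $T_0 \rho'$ of one of the enumerated subterms, and $\rho'$ satisfies exactly the constraints used to filter our renamings, so $\rho'$ is $\eqdot$-equivalent to some kept $\rho$; the corresponding element of ${\cal X}$ then has $\simeq$-related processes, identical states and $\leq$-related queues with $\State'$, hence lies $\preceq$-below it.

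I expect the main obstacle to be the ``information-losing'' rules: \rulename{match} and \rulename{mmatch} discard the untaken branch and the tested expressions, \rulename{let} discards the binder and the bound expression, and a ${\tt new}$-step additionally hides which terminated-actor terms were just created. A naive backward step would reintroduce an arbitrary dead branch, an arbitrary bound expression, or an arbitrary set of created actors, producing infinitely many $\preceq$-incomparable predecessors. The crux of the proof, to be argued carefully, is that inside ${\cal S}$ every process is a renaming of one of finitely many program subterms and ${\tt new}$ occurs only in the fixed main process with deterministic fresh names, so the hidden data ranges over finitely many $\simeq$-classes by Lemma~\ref{prop.finteterms} and is recoverable up to $\preceq$; combining this with the first-order matching needed to turn shape-level predecessors into genuine elements of $\pred{\State}$ is the technical core.
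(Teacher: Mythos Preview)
Your proposal is correct and follows essentially the same approach as the paper's own proof: a case analysis on the last non-\rulename{context} rule, enumeration of the finitely many program subterms of the appropriate shape as candidate predecessor processes, and reliance on Lemma~\ref{prop.finteterms} to bound the number of $\simeq$-classes of renamings. The paper's proof is terser---it organises the cases by predecessor-process shape rather than by rule name, and only spells out the $\letin{x}{E}{P'}$ case (including the $\newact{}$ sub-case at $\aleph$)---but the content is the same; your more explicit discussion of the ``information-losing'' rules and of first-order matching simply unpacks what the paper leaves implicit.
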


\ifconf
\else

\proof
We show how to compute ${\cal X}$. Let $\State = 
A \triangleright (P,\varphi, q), \State'$. The \emph{predecessor processes} of $P$ are the following ones:
(\emph{i}) $\letin{x}{E}{P'}$, with $P = P' \subst{\wt{U}}{\wt{x}}$, for some 
$\wt{U}$ and some ${\wt{x}}$;
(\emph{ii}) $ x \invk m(E_1, \cdots , E_n) \prefix P$;
(\emph{iii}) $[U=U] \; P \ite Q$;
(\emph{iv}) $[U=V] \; Q \ite P$;
(\emph{v}) $P + Q$;
(\emph{vi}) $Q+P$;
(\emph{vii}) $P$ is an instance of a method body of the actor class of $A$.
If $A$ is of actor class $\adef{C}$ then we take all the method bodies of
$\adef{C}$ with a suffix matching one of the cases (\emph{i})--(\emph{vi}) 
above (in this case, the expressions in (\emph{ii}) are either variables or
actor names). If $A = \aleph$ then we look for a matching suffix of the 
main process. The above six cases are demonstrated 
in the presence of such suffixes.

We only discuss case (\emph{i}), the other ones are similar.
In case (\emph{i}), if $A$ is of actor class $\adef{C}$, then 
$E = y$, for some $y$. If $x \in \free{P'}$ then  ${\cal X}$ contains the configuration
$A \triangleright (\letin{x}{y}{P'}, \varphi, q), \State'$ with 
$P = P' \subst{y}{x}$. Otherwise ${\cal X}$ contains the configuration
$A \triangleright (\letin{x}{z}{P'},\varphi, q), \State'$, for $z
\in \free{P'}$ and for a unique $z \notin \free{P'}$.
When $A = \aleph$ then $E$ may be $\newact{C}$ (orherwise the 
argument is as before). 
If $x \in \free{P'}$ and $\State' = A' \triangleright (\pinull, \varphi, 
\varepsilon), \State''$ with $A' \in \adef{C}$
then  ${\cal X}$ contains the configuration
$A \triangleright (\letin{x}{\newact{C}}{P'},q), \State''$
(and this for every possible $A' \in \adef{C}$ such that
$A' \triangleright (\pinull, 
\varepsilon)$ is in $\State'$).
\qed
\fi

Lemma~\ref{lem.pred} and the above mentioned results
on well-structured transition systems
allow us to decide the 
\ifcamera
 \emph{control-state
reachability problem}:
given two states ${\State}$ and ${\StateT}$ of a well-structured transition system with well-quasi-ordering
$\preceq$, decide
whether there is $\StateT' \succeq \StateT$ such that $\State \lred{}^* \StateT'$.
\else
 \emph{control-state
reachability problem}.
\fi

\begin{thm}
\label{thm.cs-reachability}
\ifcamera
In {\actroba} process reachability is decidable.
\else
In programs of {\actroba}
the control-state reachability problem is decidable.
\fi
\end{thm}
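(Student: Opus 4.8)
\emph{Proof plan.} The plan is to instantiate the generic backward-reachability (saturation) procedure for well-structured transition systems~\cite{abdulla:96,Finkel:2001}. Theorem~\ref{thm.decidablestatelessandfinite} already provides that $({\cal S},\lred{},\preceq)$ is a WSTS, and, as observed immediately after that theorem, $\preceq$ is decidable --- hence so are the auxiliary relations $\simeq$ and the queue embedding $\leq$ --- and the transition relation $\lred{}$ is effectively computable. The only ingredient still to supply is an \emph{effective pred-basis}: an algorithm that, from a configuration $\State$, computes a finite set of configurations whose upward closure is $\uparrow\pred{\State}$. This is exactly the content of Lemma~\ref{lem.pred}. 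Once these ingredients are in place, decidability of control-state reachability is a black-box consequence of the cited theory; I would nonetheless spell out the construction.

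First I would fix the target configuration $\StateT$ and build the increasing chain of finitely generated upward-closed sets approximating $\mathrm{pre}^{*}(\uparrow\StateT)$: set $B_0=\{\StateT\}$ and, for $n\ge 0$, $B_{n+1}=B_n\cup\bigcup_{s\in B_n}{\cal X}_{s}$, where ${\cal X}_{s}$ is the finite set returned by Lemma~\ref{lem.pred} on input $s$. Since $\uparrow B_0\subseteq\uparrow B_1\subseteq\cdots$ and $\preceq$ is a well-quasi-ordering, this chain has no infinite strictly increasing subchain of upward-closed sets, so it stabilizes at some $B_N$ with $\uparrow B_N=\uparrow B_{N+1}$; stabilization is detectable because containment between finitely generated upward-closed sets reduces to finitely many tests of the decidable relation $\preceq$. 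Then $\uparrow B_N=\mathrm{pre}^{*}(\uparrow\StateT)$, so the initial configuration $\State_0=\aleph\triangleright(P,\varnothing,\varepsilon)$ reaches some $\StateT'\succeq\StateT$ if and only if $\State_0\in\uparrow B_N$, which is decidable. As a by-product this also settles the process-reachability problem of Definition~\ref{def.termandreach}: an {\actroba} program uses only finitely many actor names, each with a fixed state, so one enumerates those names $A$ and, for each, tests control-state reachability of the configuration that runs the given process $P$ in $A$ (with its fixed state) and leaves every other actor, and $\aleph$, at $\pinull$ with empty queue --- the $\simeq$ built into $\preceq$ already absorbs the required equality up to renaming of variables.

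The hard part will be the justification of the pred-basis step: arguing that ${\cal X}_{s}$, which enumerates only the \emph{immediate} predecessors of the single configuration $s$, is in fact a basis for all configurations that can reach \emph{some} configuration $\succeq s$, i.e.\ that $\uparrow\pred{\uparrow s}=\uparrow{\cal X}_{s}$. I would obtain this from the upward compatibility of $\preceq$ with $\lred{}$ together with the $\simeq$-invariance established in part~(2) of the proof of Theorem~\ref{thm.decidablestatelessandfinite}: given $y\lred z$ with $z\succeq s$, one traces the transition back to a transition $y_0\lred s$ with $y_0\preceq y$, replacing processes by $\simeq$-equivalent ones and queues by $\leq$-smaller ones (and, in the \rulename{inst} case, choosing the arguments of the enqueued method call so as to match the renaming partition of $s$). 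Everything else is routine bookkeeping --- that the pred-basis operator, the unions, the containment tests and the final membership check are all effective --- which follows once one records that $\simeq$ is decidable (via the finiteness argument underlying Lemma~\ref{prop.finteterms}) and that $\leq$ is the classical subword embedding taken relative to $\simeq$-classes, hence decidable as well.
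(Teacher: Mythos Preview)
Your overall strategy matches the paper's: invoke the standard WSTS backward-saturation algorithm, with Lemma~\ref{lem.pred} supplying the effective pred-basis, and conclude decidability of control-state reachability. The paper's proof is a terse three-line appeal to this machinery; your exposition of the iteration $B_0\subseteq B_1\subseteq\cdots$ and its stabilization is correct and more detailed but otherwise the same idea.

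There is, however, a gap in your justification that ${\cal X}_s$ is a basis for $\pred{\uparrow s}$ rather than merely for $\pred{s}$. The ``reflection'' you propose --- from $y\lred{} z\succeq s$ produce $y_0\preceq y$ with $y_0\lred{} s$ --- fails for \rulename{invk-s}/\rulename{invk}. Take a single actor $A$ whose only method body is $A\invk m()\prefix\pinull$, and let $s$ have $A$'s process equal to $\pinull$ and empty queue; then $y=A\triangleright(A\invk m()\prefix\pinull,\varnothing,\varepsilon)$ steps to some $z\succeq s$, yet $\pred{s}=\varnothing$, so no such $y_0$ exists. Upward compatibility points the wrong way for this step, and the $\simeq$-invariance from part~(2) of Theorem~\ref{thm.decidablestatelessandfinite} does not help either, since the obstruction is in the queue, not in the process. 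The paper sidesteps the issue with the one-line claim $\pred{\uparrow\State}\subseteq\pred{\State}$, which the same example shows to be false as stated. What is actually required is a finite basis for $\pred{\uparrow s}$ directly --- this is how Lemma~\ref{lem.predunbounded} is phrased for {\actsl} --- obtained by letting each clause of the Lemma~\ref{lem.pred} case analysis overshoot $s$ on the queues: for \rulename{invk}/\rulename{invk-s}, also include predecessors that append a message not demanded by $s$'s queue. With that amendment, the saturation argument you outline goes through verbatim.
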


\ifconf
\else
\proof
Let $\uparrow \State =
\{ \State' \in {\cal S} \; | \; \State \preceq \State'\}$. Let also
$\pred{\uparrow \State} = \{ \StateT \; | \; \StateT \lred{} \State'
\; \mbox{and} \; \State' \succeq \State\}$. 
By definition of $\preceq$, $\pred{\uparrow \State} \subseteq 
\pred{\State}$. Therefore 
$\uparrow \! \pred{\uparrow \State} \subseteq \; \uparrow \! \pred{\State} \subseteq
\; \uparrow \! {\cal X}$, where ${\cal X}$ is the finite set of Lemma~\ref{lem.pred}
that is effectively computable. 
\qed
\fi

Next we discuss the process reachability problem -- 
see Definition~\ref{def.termandreach} -- in {\actroba}. To this aim, we use a 
simpler version of the (classical) diamond property.

\begin{prop}
\label{prop:simpleDiamond}
Let $({\cal S},\lred{})$ be a transition system of a program
of {\actroba} and let
$\aleph \; \triangleright \; (P, \varnothing, \varepsilon),\State
\lred{} \aleph \; \triangleright \; (P, \varnothing, \varepsilon),\State'$
($\aleph$ does not move) and 
$\aleph \; \triangleright \; (P, \varnothing, \varepsilon),\State'
\lred{} \aleph \; \triangleright \; (P', \varnothing, \varepsilon),\State''$
with $P' \neq P$ ($\aleph$ moves).
Then there exists $\State'''$ such that
$\aleph \; \triangleright \; (P, \varnothing, \varepsilon),\State
\lred{} \aleph \; \triangleright \; (P', \varnothing, \varepsilon),\State'''
\lred{} \aleph \; \triangleright \; (P', \varnothing, \varepsilon),\State''$.
\end{prop}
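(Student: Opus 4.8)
The plan is to read this as a restricted commutation (diamond) property and to prove it by a case analysis on the operational rule that justifies the $\aleph$-move, i.e.\ the second transition $\aleph \triangleright (P,\varnothing,\varepsilon),\State' \lred{} \aleph \triangleright (P',\varnothing,\varepsilon),\State''$. First I would pin down the shape of that move. Since $\aleph$'s queue is $\varepsilon$ and the class of $\aleph$ carries no methods, the rule used cannot be \rulename{inst}; and it cannot be \rulename{invk-s}, since that would leave $\aleph$ with a non-empty queue, contradicting the hypothesis that the target configuration is again $\aleph \triangleright (P',\varnothing,\varepsilon),\State''$. As $\aleph$ has no fields, \rulename{upd} is vacuous too (and absent in {\actroba} anyway). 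Hence the applicable rule is one of \rulename{let}, \rulename{match}, \rulename{mmatch}, \rulename{plus-l}, \rulename{plus-r} or \rulename{invk}, and in every case $\aleph$'s state remains $\varnothing$.

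The bulk of the cases go through by a disjointness argument. In all of them but \rulename{invk}, the $\aleph$-move only rewrites $\aleph$'s own process, from $P$ to $P'$, and possibly adjoins a (possibly empty) set $\State_E$ of freshly created, inactive actors $B \triangleright (\pinull,\psi,\varepsilon)$ produced while evaluating the \texttt{new}-expressions occurring in the move; it reads nothing of $\State$. Symmetrically, the first transition rewrites only components of $\State$ and touches neither $\aleph$'s triple nor the fresh names of $\State_E$, which do not occur in $\State$. Moreover, since in {\actroba} only $\aleph$ may create actors and $\aleph$ never performs \rulename{inst}, the $\aleph$-move introduces no fresh \emph{variables}, so replaying the first transition after it selects exactly the same fresh variables (and actor names) as before. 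I would therefore take $\State'''$ to be $\State,\State_E$ with $\aleph$'s process updated to $P'$, replay the same first-transition step on the $\State$-part, and check by inspection that this reaches $\aleph \triangleright (P',\varnothing,\varepsilon),\State',\State_E = \aleph \triangleright (P',\varnothing,\varepsilon),\State''$. For the \rulename{invk} case, the $\aleph$-move instead appends a message $m(\wt U)$ to the tail of the queue of some actor $A'$ occurring in $\State$ (plus fresh inert actors as before); and when the first transition does not modify $A'$'s queue, the same reasoning still applies, because appending at the tail of $A'$ commutes with any rewriting of $A'$'s process and with an \rulename{inst} on $A'$, which removes only the head, so the heads still agree after the swap.

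I expect the real obstacle to be the remaining sub-case: the first transition \emph{also} enqueues a message, say $m'(\wt V)$, at the tail of the very same $A'$ --- via an \rulename{invk-s} performed by $A'$ itself, or an \rulename{invk} performed by another actor of $\State$ holding a reference to $A'$. Then scheduling the first step before the $\aleph$-move leaves the tail of $A'$ in the order $\cdots m'(\wt V)\,m(\wt U)$, while the commuted schedule produces $\cdots m(\wt U)\,m'(\wt V)$, and the proof must show these two orderings coincide here. The route I would take is to analyse closely which run-time configurations of a {\actroba} program actually let $\aleph$ and some other actor both hold a reference to, and invoke on, the same $A'$ at that point --- exploiting that in {\actroba} actors are created only by the main process and fields are read-only, which tightly constrains how actor names propagate through method invocations --- so as to rule the clash out; and, where it survives, to settle for the weaker conclusion (still sufficient for the subsequent process-reachability argument) in which messages in the queue of an actor other than the one whose process is being tracked may be permuted. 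Discharging this sub-case is the heart of the argument; the rest is routine bookkeeping of the operational rules.
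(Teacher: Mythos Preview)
The paper does not supply a proof of this proposition; it is stated and followed only by a remark about the ``more classical'' diamond property, whose formalisation the authors explicitly omit. So there is nothing to compare against, and your proposal has to be judged on its own.

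Your case analysis and the treatment of every sub-case except the last are correct. The facts you rely on---that $\aleph$ never performs \rulename{inst}, that in {\actroba} only $\aleph$ evaluates $\newact{}$ so non-$\aleph$ moves never allocate actor names, and that fresh actor names and fresh variables live in disjoint pools---are exactly the right ingredients, and they make commutation routine in those cases.

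The gap is the final sub-case, and it is real: the clash \emph{cannot} be ruled out in {\actroba}. Take the main process
\[
\letin{a'}{\newact{\adef{C}}()}{\;\letin{a}{\newact{\adef{D}}(a')}{\;a\invk{\it start}()\,\prefix\,a'\invk m()}}
\]
with $\adef{D}.{\it start}() = \f\invk m'()$, where $\f$ is $\adef{D}$'s only (read-only) field. From the reachable configuration in which $\aleph$ holds $a'\invk m()$ and $a$ holds $a'\invk m'()$, firing $a$ then $\aleph$ leaves the queue of $a'$ equal to $m'()\cdot m()$; the $\aleph$-first schedule is forced (there is only one $\aleph$-move) and yields $m()$ at $a'$, after which the sole non-$\aleph$ step producing $\State''$ would have to append $m'()$ at the tail, giving $m()\cdot m'()\neq m'()\cdot m()$. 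Hence the proposition, with an \emph{exact} $\State''$ in the conclusion, fails on this example; your primary plan of ``ruling the clash out'' by tracking how actor names propagate cannot succeed.

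Your fallback is the correct repair. What the ensuing corollary actually needs is that $\aleph$-moves can be pulled forward while preserving which \emph{processes} appear; for that, a diamond ``up to permutation of the two appended messages in the queue of a shared target'' suffices, and that weaker form does follow from your disjointness argument. The honest summary is: the proposition as literally written is too strong; neither your proposal nor the paper establishes it, but you have correctly isolated the obstruction and the adequate weakening for the intended application.
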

It is worth noticing that the language {\actor} also owns the more classical 
diamond property:
if in a configuration there are two transitions inferred by two
distinct actors, then it is possible to perform them in any order
reaching the same configurations \emph{up-to bijective renaming}.
We omit the formalization of this property since it is not needed
in the rest of the paper.

\begin{cor}
The process reachability problem
is decidable in {\actroba}.
\end{cor}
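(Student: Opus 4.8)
The plan is to reduce the process reachability problem for a given process $P$ to finitely many \emph{control-state reachability} queries, which are decidable by Theorem~\ref{thm.cs-reachability}, using the diamond property of Proposition~\ref{prop:simpleDiamond} to separate off the behaviour of the root actor $\aleph$. Throughout, write $P \cong P'$ when $P$ and $P'$ are equal up to a bijective renaming of variables and actor names, as in Definition~\ref{def.termandreach}.

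First I would record two facts about $\aleph$. Since $\aleph$ never receives a method invocation (its queue stays empty) and never inspects the state of another actor, the transitions available to $\aleph$ depend only on $\aleph$'s own process, and each such transition strictly decreases the number of operators in that process; hence $\aleph$ ranges over a finite set ${\cal A}$ of processes (a finite reduction tree rooted at the main process, with $\pinull$ at its leaves), and $\aleph$ is guaranteed to terminate, as already observed in the paper. Moreover, by iterating Proposition~\ref{prop:simpleDiamond}, any finite computation can be rearranged so that all of $\aleph$'s moves come first while reaching the very same final configuration — the proposition swaps a non-$\aleph$ step immediately followed by an $\aleph$ step, keeping both endpoints. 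Truncating a witnessing computation at the first configuration exhibiting a process $\cong P$ and then normalising its prefix, one gets that a process $\cong P$ occurs in some reachable configuration iff either (a) $P \cong Q$ for some $Q \in {\cal A}$, a finite check; or (b) $P \cong \pinull$, which always holds since $\aleph$ reaches $\pinull$; or (c) starting from one of the finitely many configurations $\State^{(1)},\dots,\State^{(N)}$ reached when $\aleph$ terminates — each a configuration of the same program over the fixed, read-only-stated actor set $A_1,\dots,A_\ell$ created by $\aleph$ — some $A_i$ ($i \in 1..\ell$) reaches a process $\cong P$ (during $\aleph$'s phase every other actor still carries $\pinull$, so this exhausts the cases).

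For case (c), fix $j$ and an actor $A_i$ of class $\adef{C}$. The processes $A_i$ may hold at run time are renamings of subterms of the methods of $\adef{C}$, and by Lemma~\ref{prop.finteterms} they fall into finitely many $\simeq$-classes; pick a representative $R$ for each. Since $P' \simeq R$ implies $P' \cong R$ (each $\simeq$-step preserves $\cong$, as the renamings it allows are the injective ones fixing actor names and free variables of the main process), a $\simeq$-class is either contained in the $\cong$-class of $P$ — exactly when its representative $R$ satisfies $R \cong P$ — or disjoint from it; and $R \cong P$ is decidable, being a matching test on finite terms. Hence a process $\cong P$ is reachable at $A_i$ from $\State^{(j)}$ iff, for some representative $R$ with $R \cong P$, a configuration over $A_1,\dots,A_\ell$ in which $A_i$'s process is $\simeq R$ is reachable from $\State^{(j)}$. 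That target set is $\preceq$-upward closed with a computable finite basis — take $A_i \triangleright (R, \varphi_i, \varepsilon)$ and let every other actor range over the finitely many $\simeq$-classes of run-time processes of its class with empty queue — so its reachability is decided by finitely many applications of Theorem~\ref{thm.cs-reachability}. Altogether, $P$ is process-reachable iff (a) or (b) holds, or one of the finitely many resulting control-state reachability queries succeeds.

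The main obstacle I expect is the gap between the two equivalences at play: $\preceq$, hence the whole well-structured machinery, quotients processes by $\simeq$, which is strictly finer than $\cong$, since $\simeq$ keeps actor names and free variables of the main process fixed whereas an arbitrary bijective renaming need not. Making the reduction correct therefore hinges on the observations of the third paragraph — that $P' \simeq R \Rightarrow P' \cong R$, so that each $\simeq$-class is ``$\cong$-homogeneous'', and that $\simeq$ is stable under the fresh-variable substitutions introduced by rule \rulename{inst} — together with the preliminary test that $P$ is at all a possible run-time process. Once this bookkeeping is in place, the $\aleph$-normalisation argument is routine given Proposition~\ref{prop:simpleDiamond} and the termination of $\aleph$, and everything else is a bounded number of appeals to Theorem~\ref{thm.cs-reachability}.
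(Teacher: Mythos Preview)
Your proposal is correct and follows essentially the same strategy as the paper's own proof: use Proposition~\ref{prop:simpleDiamond} to front-load $\aleph$'s moves, handle the case where $P$ is a process reached by $\aleph$ directly, and for the remaining case reduce to finitely many control-state reachability queries decided via Theorem~\ref{thm.cs-reachability}. The paper gives an explicit enumeration of the target states (as suffixes of method bodies with substitutions ranging over actor names, free variables of the main process, and fresh variables) rather than invoking Lemma~\ref{prop.finteterms} and the $\simeq$/$\cong$ comparison, but this is exactly the finite basis you construct.
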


\begin{proof}
In order to verify whether a configuration 
$A \triangleright (P', \varphi, q),\State$
is reachable with $P'$ equal to $P$ up-to renaming of variables and \emph{actor names},
we  proceed as follows.

First, by Proposition~\ref{prop:simpleDiamond}
it is not restrictive to consider the set of configurations ${\cal T}$
reachable by completely executing the actor $\aleph$ only.
The cardinality of ${\cal T}$ is bounded by $2^k$, where $k$ is the maximal nesting
of $+$ in the main process. 
If one of the processes in the configurations reached by executing $\aleph$ is
equal to $P$, up-to renaming of variables and \emph{actor names}, then we are done.
Otherwise, let $\wt{u}$ be the free variables in the main process.
For each of 
$\StateT=\aleph \triangleright (\pinull , \varnothing, \varepsilon),
A_1 \triangleright (\pinull, \varphi_1, q_1), \cdots , A_\ell \triangleright (\pinull, \varphi_\ell, q_\ell)$ in ${\cal T}$,
we check control-state reachability
from 
$\StateT$ to at least one of the states
in the following finite set:
\[
\begin{array}{l}
\{\ \aleph \triangleright (\pinull , \varnothing, \varepsilon),A_1 \triangleright (Q_1\subst{A_1}{{\it this}}\subst{\wt{z_1}}{\wt{y_1}}\subst{\wt{U_1}}{\wt{x_1}} , \varphi_1, \varepsilon), \cdots , A_\ell \triangleright (
Q_\ell\subst{A_\ell}{{\it this}}\subst{\wt{z_\ell}}{\wt{y_\ell}}\subst{\wt{U_\ell}}{\wt{x_\ell}}
, \varphi_\ell, \varepsilon) 
\\
\qquad \mid \quad
\mbox{\em for every $1\leq i\leq \ell$, $Q_i$ is a suffix of the body of ${\tt m}_i$ in 
$\adef{C}_i$, where $A_i \in \adef{C}_i$,}
\\
\qquad \qquad\qquad
\mbox{\em formal parameters and 
free variables of ${\tt m}_i$ are $\wt{x_i}$ and $\wt{y_i}$}
\\
\qquad\qquad\qquad
\mbox{\em $\wt{U_i}$ is a tuple in $\{A_1,\ldots,A_\ell, \wt{u}, \wt{z}\}$
\quad ($\wt{z}, \wt{z_1}, \cdots, \wt{z_\ell}$ are fresh) }
 \\
\qquad\qquad\qquad
\mbox{\em there exists $1\leq j \leq \ell$ such that $Q_j$ is
equal to $P$ up-to renaming}
\ \}
\end{array}
\]
Then the corollary follows by Theorem~\ref{thm.cs-reachability}.
\end{proof}

We conclude this section by recalling that 
we have already proved
the undecidability of termination in programs with unboundedly many 
actors and read-only fields.
Note that 
if we remove from {\actroba} the constraint on bounded actor names
then  the relation $\preceq$ is no longer a well-quasi-ordering.
Consider, for instance, an actor (with empty state) having a method 
that first creates a new instance of the same class and then invokes 
on this new instance the same method. Among the reachable configurations
it is possible to select a sequence 
$\State_1, \State_2, \State_3, \cdots$ such that
the configuration $\State_n$ is defined as follows:
\[
\begin{array}{r@{\qquad}l}
\State_n \; \eqdef & A_1 \triangleright (\pinull, \varnothing, \varepsilon) \; , 
\cdots ,  A_n \triangleright (\pinull, \varnothing,  \varepsilon)
\end{array}
\]
It is easy to see that, 
for every $i < j$,  $\State_i \not \preceq 
\State_j$. 

\section{Decidability results for {\actsl}}
\label{sec.stateless}

\newcommand{\name}[1]{{\it name}(#1)}
\newcommand{\abseval}[1]{\lleadsto{#1}_\alpha}
\newcommand{\absred}[1]{\stackrel{#1}{\longrightarrow_{\mathsf a}}}
\newcommand{\abst}[1]{\Omega(#1)}
\newcommand{\rearrange}{\bowtie}
\newcommand{\shuffle}[2]{\it shuffle(#1,#2)}
\newcommand{\shuffleq}[2]{\it [\!\![ #1 ]\!\!]_{#2}}
\newcommand{\mulset}{\mathcal M}
\newcommand{\trans}[2]{\ensuremath{\xrightarrow{#1}}}
\newcommand{\abtrans}[2]{\ensuremath{\xrightarrow{#1}}_{\mathsf a}}
\newcommand{\code}[1]{\ensuremath{\mathcal{Q}(#1)}}

We prove that in {\actsl}
termination and process reachability are decidable, too.
As discussed at the end of Section~\ref{sec.decidability},
the ordering defined for {\actroba} is not appropriate for {\actsl}
because in the latter it is possible to dynamically produce unboundedly many actors.
%
Therefore, in order to compute an upper bound to
the instances of method bodies, which is the basic argument for the model of
Section~\ref{sec.decidability} to be a well-structured transition system, 
we need to abstract
from the identity of these names -- as we have done with variables.
However, in case of actor names, the abstractions we have devised all break the
delivering of messages. 
Therefore we decided to apply our arguments to an abstraction of the operational model where the delivery of messages is inexact: it may be enqueued in every actor of the same class. 
Yet, this abstract model allows us
to derive decidability of termination and process reachability for the original language. 

\begin{table}[t]
The \emph{decorated evaluation relation} $E \lleadsto{}_d U \; ; \; \State$: 
\[
\begin{array}{c}
U \lleadsto{}_d U \; ; \; \varnothing
\qquad 
\bigfract{ \wt{E} \lleadsto{}_d \wt{U}\; ; \; \State 
	\quad A = \fresh{\adef{C}}
	}{\newact{\adef{C}}(\wt{E}) \lleadsto{}_d A\; ; \; A \triangleright (\varepsilon:\pinull, 
	\varnothing, \varepsilon), \State
	}
\\
\\
\bigfract{ E_i \lleadsto{}_d U_i \; ; \; \State_i, \quad \mbox{\rm for} \quad {i \in 1..n}
	}{
	E_1, \cdots , E_n \lleadsto{}_d U_1, \cdots , U_n \; ; \; \State_1, 
	\cdots , \State_n
	}
\end{array}
\]
The \emph{decorated transition relation} $\State \lred{\sigma} \State'$:
\[
\begin{array}{c}
\mathrule{let$_d$}{E \lleadsto{}_d U \; ; \;\State
	}{ 
	\begin{array}{l}
	A \triangleright (\sigma\cdot n:\letin{x}{E}{P}, \varnothing, q) 
	\; \lred{\sigma\cdot n+1} \;
	 A \triangleright (\sigma\cdot n+1:P\subst{U}{x}, \varnothing, q), \State
	 \end{array}
	} 
\\
\mathrule{invk-s$_d$}{ \wt{E} \lleadsto{}_d \wt{U} \; ; \;\State
	}{
	\begin{array}{l}
 	A \triangleright (\sigma\cdot n:A \invk m(\wt{E}) \prefix P, \varnothing, q)
	\\
	\qquad  \qquad \lred{\sigma\cdot n+1|m(\wt{U},A)} \quad 
	A \triangleright (\sigma\cdot n+1:P, \varnothing,q \cdot  m(\wt{U},\sigma\cdot n+1)) , \State
	\end{array}
	}
\\
\qquad
\mathrule{invk$_d$}{ \wt{E} \lleadsto{}_d \wt{U} \; ; \; \State
	}{
	\begin{array}{l}
  	A \triangleright (\sigma\cdot n:A' \invk m(\wt{E}) \prefix P, \varnothing, q) , 
	A' \triangleright (\sigma':P', \varnothing, q')
	 \\
	\qquad  \qquad \lred{\sigma\cdot n+1|m(\wt{U},A')} \quad
	A \triangleright (\sigma\cdot n+1:P,\varnothing,q), A' \triangleright (\sigma':P',\varnothing, q' \cdot m
	(\wt{U},\sigma\cdot n+1)) , \State
	\end{array}
	}
\\
\mathrule{inst$_d$}{A \in \adef{C} \quad \adef{C} \prefix m(\wt{x}) = P 
	\quad 
	\wt{y} = \free{P} \setminus \wt{x}
	\quad
 	\wt{y'} = \fresh{\wt{y}}
	}{
	A \triangleright (\sigma':\pinull, \varnothing, m(\wt{U},\sigma) \cdot q)
	\; \lred{\sigma\cdot 0} \; 
	A \triangleright(\sigma\cdot 0:P\subst{A}{{\it this}}\subst{\wt{y'}}{\wt{y}} 
	\subst{\wt{U}}{\wt{x}}, \varnothing, q)
	}
\\
\mathrule{match$_d$}{E , E' \lleadsto{}_d U,U \; ; \; \State
	}{
	A \triangleright (\sigma\cdot n:[E=E'] P \ite Q , \varnothing, q)
	\; \lred{\sigma\cdot n+1} \; A \triangleright (\sigma\cdot n+1:P , \varnothing, q) , \State
	}
\\
\mathrule{mmatch$_d$}{E,E' \lleadsto{}_d U,V \; ; \; \State
	  \quad U \neq V 
	}{
	A \triangleright (\sigma\cdot n:[E=E'] P \ite Q , \varnothing, q)
	\; \lred{\sigma\cdot n+1} \; A \triangleright (\sigma\cdot n+1:Q , \varnothing, q) , \State
	}
\\
\mathax{plus-l$_d$}{A \triangleright (\sigma\cdot n:P+Q,\varnothing, q)
	\; \lred{\sigma\cdot n+1} \; A \triangleright (\sigma\cdot n+1:P,\varnothing, q)
	}\\
\mathax{plus-r$_d$}{A \triangleright (\sigma\cdot n:P+Q,\varnothing, q)
	\; \lred{\sigma\cdot n+1} \; A \triangleright (\sigma\cdot n+1:Q,\varnothing, q)
	}
\qquad
\mathrule{context$_d$}{\State 
	\; \lred{\sigma} \; 
	\State'
	}{
	\State , \State'' \; \lred{\sigma} \; 
	\State', \State''
	}
\\
\\
\end{array}
\]
\caption{\label{tab.decopsem} The decorated operational semantics of the language {\actsl}}
\end{table}

In order to formalize the correspondence between the concrete and the abstract 
operational semantics, we need to add decorations to processes and transitions
at the concrete level. Such decorations are used to keep track of the causal
dependencies among processes.
The decorated syntax adds a sequence of natural numbers in front of the
process of an actor, namely, we use $A \triangleright (\sigma:P, \varnothing, q)$
where $\sigma$ has the following meaning: if $\sigma=\sigma'\cdot n$, then $\sigma'$ 
identifies the action of emission of the message that caused the method instantiation
from which $P$ was generated, and $n$ is a counter indicating that $P$ is actually 
generated by the method instantiation after $n$ steps.
Notice that for the main process executed by the actor $\aleph$ the sequence
$\sigma'$ is empty, and that when a method is instantiated the counter
$n$ is initialized to 0. 
In order to transfer the sequence from the message emitter to the 
receiving actor, we add $\sigma$ at the end of messages.
Namely, messages are now denoted with $m(\wt{U},\sigma)$.
The decorated operational semantics $\State \lred{\alpha} \State'$ 
is defined in Table~\ref{tab.decopsem}, where the label $\alpha$ can be
either a sequence $\sigma$ or a pair $\sigma|m(\wt{U},A)$ where the
second element identifies the message issued during the transition.
The decorated operational semantics increments the last number of the 
sequence of a process every time it performs an action, adds to messages
the current sequence of the emitter, and use the sequences
inside messages to initialize the sequence of the method instantiations
(by extending it with 0).

It is trivial to see that the operational semantics in Table~\ref{tab.decopsem}
and the decorated semantics coincide, in the sense that given
one configuration $\State_1$ of {\actsl} we have $\State_1 \lred{} \State_2$
if and only if there exist a label $\alpha$
and two decorations $\State'_1$ and $\State'_2$ of $\State_1$ and $\State_2$, respectively,
such that $\State'_1 \lred{\alpha} \State'_2$.

As discussed at the beginning of this Section, 
we need a more abstract semantics with inexact message deliveries.
This is obtained by 
changing the operational semantics in Table~\ref{tab.decopsem} by decoupling
the evaluation of the body of a method from
the actor name of that method.
%
%
Let $\State \absred{\alpha} \State'$ be the \emph{abstract transition relation}
 defined as
$\State \lred{\alpha} \State'$ in Table~\ref{tab.decopsem} except the two
rules \rulename{invk-s$_d$} and \rulename{invk$_d$} for method invocation and the
rule \rulename{inst$_d$} for the instantiation of method bodies, which are replaced by 
those in Table~\ref{tab.abstrules}.
\begin{table}[t]
{\small
\[
\begin{array}{c}
\mathrule{invk-s$_a$}{ \wt{E} \lleadsto{}_d \wt{U} \; ; \;\State
\quad
A,A' \in {\adef{C}} 
	}{
 	A \triangleright (\sigma\cdot n : A' \invk m(\wt{E}) \prefix P, \varnothing, q) \; 
	\absred{\sigma\cdot n+1|m(\wt{U},A')} \; A \triangleright (\sigma\cdot n+1 : P, \varnothing,q \cdot  m(\wt{U},\sigma\cdot n+1,A')) , \State
	}
\\
\\
\mathrule{invk$_a$}{ \wt{E} \lleadsto{}_d \wt{U} \; ; \; \State
\qquad
A',A'' \in {\adef{C}} 
	}{
  	\begin{array}{l}
	A \triangleright (\sigma\cdot n : A' \invk m(\wt{E}) \prefix P, \varnothing, q) , 
	A'' \triangleright (\sigma':P', \varnothing, q') \\
	 \qquad\qquad \absred{\sigma\cdot n+1|m(\wt{U},A')} \;
	A \triangleright (\sigma\cdot n+1:P,\varnothing,q), A'' \triangleright (\sigma' : P',\varnothing, q' \cdot m
	(\wt{U},\sigma\cdot n+1, A')) , \State
	\end{array}
	}
\\
\\	
\mathrule{inst$_a$}{A \in \adef{C} \quad \adef{C} \prefix m(\wt{x}) = P 
	\quad 
	\wt{y} = \free{P} \setminus \wt{x}
	\quad
 	\wt{y'} = \fresh{\wt{y}}
	}{
	A \triangleright (\pinull, \varnothing, m(\wt{U}, \sigma, A') \cdot q)
	\; \absred{\sigma\cdot 0} \; 
	A \triangleright(\sigma\cdot 0 : P\subst{A'}{{\it this}}\subst{\wt{y'}}{\wt{y}} 
	\subst{\wt{U}}{\wt{x}}, \varnothing, q)
	}	
\end{array}
\]
}
\caption{\label{tab.abstrules} Abstract transition rules for method invocations and
instantiations}
\end{table}
In the abstract transition relation, a message is added in a queue
of an actor \emph{nondeterministically selected}
among those belonging to the
same class of the target actor. The item $m(\wt{U},\sigma)$ is enqueued 
with an additional argument -- the actor name of the target actor. 
This additional argument is used when the method body is instantiated. In fact
it replaces the variable ${\it this}$, thus making the execution
of a body invariant regardless the actor that actually performs it.

As an example, consider the task manager specified
in {\actsl} in the Example~\ref{ex:taskManager}.
Also under the abstract semantics $n$ distinct workers
are instantiated, but it is possible for two distinct
tasks to be delivered to the same worker.

We now introduce few notations:
\vspace{-1mm}
\begin{itemize}
\item[--]
Let $\abst{}$ be a map
from ``concrete'' to ``abstract'' configurations:
given a configuration $\State$, we denote with
$\abst{\State}$ the configuration
obtained from $\State$
by replacing each of its
actors $A \triangleright (\sigma:P, \varnothing, q)$
with $A \triangleright (\sigma:P, \varnothing, q')$
where $q'$ is obtained from $q$ by adding 
to each method invocation the 
parameter $A$.
\item[--] Given a decorated configuration $\State$ and a label
$\alpha$, such that $\alpha=\sigma$ or $\alpha=\sigma|m(\wt{U},A)$,
we use $\proj{\State}{\alpha}$ to denote
the process decorated with $\sigma$ in $\State$:
$\proj{\State}{\alpha} = P$ if $\State$ contains the actor
$A \triangleright (\sigma:P, \varnothing, q)$, for some
$A$ and $q$.
\item[--] Let $\eqdot_{\mathsf a}$ be the following relation 
on variable renamings (not applied to variables that are free
in the main process) 
%
\vspace{-3mm}
\[
\begin{array}{lrl}
\rho \eqdot_{\mathsf a} \rho' \quad \eqdef & \quad \mbox{ for every } x,y: &
\\
& (i) \quad &  
\rho(x) = \rho(y) \quad \mbox{ if and only if } \quad
\mbox{$\rho'(x) = \rho'(y)$}
\\
&(ii) 
\quad & \rho(x) \in \adef{C} \quad \mbox{if and only if} \quad \rho'(x) \in \adef{C}\\
&(iii)
\quad & \rho(x) = \rho'(x) \quad 
\begin{array}{l}
\mbox{if $\rho(x)$ or $\rho'(x)$ is a free variable}\\
\mbox{of the main process}
\end{array}
\vspace{-1mm}
\end{array}
\] 
Differently from
the definition of $\eqdot$, 
$\eqdot_{\mathsf a}$ does not care of the 
identity of actor names (it is sufficient that
they belong to the same class). 
\item[--]
Let $\simeq_{\mathsf a}$ be the relation defined as $\simeq$ in Section~\ref{sec.decidability}, with $\eqdot_{\mathsf a}$
instead of $\eqdot$. We extend it
to messages containing sequences and actor names as follows:
$m(\wt{U},\sigma,A) \simeq_{\mathsf a} m(\wt{U}',\sigma',A')$ iff $m(\wt{U}) \simeq_{\mathsf a} m(\wt{U}')$,
$\sigma=\sigma'$ and there exists $\adef{C}$ such that $A,A' \in \adef{C}$.
We extend it also to labels: $\sigma \simeq_{\mathsf a} \sigma$
and $\sigma|m(\wt{U},A) \simeq_{\mathsf a} \sigma|m(\wt{U}',A')$ iff $m(\wt{U}) \simeq_{\mathsf a} m(\wt{U}')$
and there exists $\adef{C}$ such that $A,A' \in \adef{C}$.
\end{itemize}

\ifconf
\else

\noindent The following Proposition formalizes the correspondence 
between $\lred{\alpha}$ and $\absred{\alpha}$: 
\emph{1.} all $\lred{\alpha}$ transitions are present also in $\absred{\alpha}$
(up-to application of the abstraction 
function $\abst{}$ to configurations) and
\emph{2.} all the abstract computations $\State_0 \absred{\alpha_1} \ldots 
\absred{\alpha_n} \State_n$ have a corresponding concrete computation
$\State'_0 \lred{\alpha'_1} \ldots \lred{\alpha'_m} \State'_m$
in which they can be embedded.
\begin{proposition}
\label{prop:abssemantics}
Let $\State$ and $\State_0$ be a decorated configuration
and an initial 
decorated configuration $\aleph \; \triangleright \; (0:P, \varnothing, \varepsilon)$
of {\actsl}, respectively.
\begin{enumerate}
\item
If $\State \lred{\alpha} \State'$ then 
$\abst{\State} \absred{\alpha} \abst{\State'}$;
\item
if $\State_0 \absred{\alpha_1} \ldots 
\absred{\alpha_n} \State_n$ then there exists a 
computation
$\State'_0 \lred{\alpha'_1} \State'_1 \ldots \lred{\alpha'_m} \State'_m$ and an injection $I$ 
such that, for all $1\leq i \leq n$, 
we have $\alpha_i \simeq_{\mathsf a} \alpha'_{I(i)}$ and 
$\proj{\State_i}{\alpha_i} \simeq_{\mathsf a} \proj{\State'_{I(i)}}{\alpha'_{I(i)}}$.
\end{enumerate}
\end{proposition}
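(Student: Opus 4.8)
\textit{Part (1)} is a routine induction on the derivation of $\State \lred{\alpha} \State'$, that is, a case analysis on the unique rule used that is not \rulename{context$_d$}. For \rulename{let$_d$}, \rulename{match$_d$}, \rulename{mmatch$_d$}, \rulename{plus-l$_d$} and \rulename{plus-r$_d$} the abstract rule is literally the same, no queue and no occurrence of ${\it this}$ is touched, and $\abst{}$ visibly commutes with the step. For \rulename{invk-s$_d$}/\rulename{invk$_d$} the concrete step enqueues $m(\wt{U},\sigma\cdot n+1)$ in the queue of the target actor $A'$ (with $A'=A$ in the self case); since $\abst{}$ tags each queued message with the actor holding it, in $\abst{\State'}$ the actor $A'$ holds $m(\wt{U},\sigma\cdot n+1,A')$, which is precisely the outcome of \rulename{invk-s$_a$}/\rulename{invk$_a$} when the nondeterministic choice among same-class actors selects $A'$. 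For \rulename{inst$_d$} the message dequeued in $\abst{\State}$ is $m(\wt{U},\sigma,A)$, so \rulename{inst$_a$} substitutes $A$ for ${\it this}$ exactly as \rulename{inst$_d$} does, and since both semantics draw fresh variables from the same pool via the same deterministic $\fresh{\cdot}$, the renaming $\fresh{\wt{y}}$ coincides as well; hence $\abst{\State}\absred{\alpha}\abst{\State'}$.

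\textit{Part (2).} The plan rests on one observation: misdelivery in $\absred{}$ changes only which actor holds a message, never the data flowing through the computation, because \rulename{inst$_a$} substitutes for ${\it this}$ the \emph{intended} target carried inside the message rather than the actor holding it. Hence name matches, actor creations and argument values are all reproduced faithfully by a genuine concrete run; the sole discrepancy is that the fresh variables picked by $\fresh{\cdot}$ at instantiation steps may differ — because the two runs need not have the same length or interleaving — but only up to $\eqdot_{\mathsf a}$, hence up to $\simeq_{\mathsf a}$. Concretely, I would turn the abstract computation $\State_0\absred{\alpha_1}\ldots\absred{\alpha_n}\State_n$ into a finite set of \emph{events} (message emissions, method instantiations, internal steps), augmented with one extra instantiation event for each message that, under concrete true-target routing, would end up in front of a consumed message in some actor's queue. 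Equip the events with the partial order $\prec$ that any concrete run must respect: an emission precedes its instantiation; an instantiation precedes the steps of the body it spawns; consecutive body steps are ordered; and two messages that, concretely, land in the same actor's queue are consumed in their relative order in the abstract run (which we adopt as their concrete emission order). The pivotal lemma is that $\prec$ is acyclic: every $\prec$-edge points to an event in the causal future of its source, and since the abstract run is itself a valid sequence respecting causality, a $\prec$-cycle would force an event to occur strictly before one that occurs strictly before it along that run.

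Given acyclicity, take any linearization $L$ of $\prec$ and read each event as the corresponding concrete transition — an internal step, \rulename{invk$_d$}/\rulename{invk-s$_d$} delivering to the true target, or \rulename{inst$_d$}. One checks that $\prec$ makes each such transition enabled in turn: the process it acts on is present because its preceding body step and the instantiation that spawned it have already occurred, and the message it consumes is at its actor's queue head because $L$ has already consumed everything ahead of it there. This yields a valid concrete computation starting from $\aleph\triangleright(0:P,\varnothing,\varepsilon)$; define $I$ by mapping abstract step $i$ to the position in $L$ of the event it performs. Then $I$ is an injection — not in general monotone, since the concrete FIFO discipline can force two messages to be consumed in the order opposite to the abstract schedule — and $\alpha_i\simeq_{\mathsf a}\alpha'_{I(i)}$ together with $\proj{\State_i}{\alpha_i}\simeq_{\mathsf a}\proj{\State'_{I(i)}}{\alpha'_{I(i)}}$ hold because the decoration $\sigma$ of a process or message is fixed locally by its causal ancestry (the number of steps in its body and the decoration of its triggering message) and so is identical in both runs, while the process and message \emph{contents} agree up to the $\eqdot_{\mathsf a}$-renaming of fresh instantiation variables, using the $\simeq_{\mathsf a}$-analogue of property (exp-i) from the proof of Theorem~\ref{thm.decidablestatelessandfinite} to carry $\simeq_{\mathsf a}$ through expression evaluation. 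The hard part will be the precise formulation and proof of the acyclicity of $\prec$ — in particular fixing the FIFO component so that it is at once respected by the constructed run and compatible with causality — together with the accompanying bookkeeping that the decorations survive exactly the change of length and interleaving.
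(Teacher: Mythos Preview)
Your Part~(1) is the paper's argument spelled out: the abstract rules strictly generalize the concrete ones, so a concrete step is simulated by picking the intended target among the same-class choices.

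For Part~(2) you take a genuinely different route. The paper argues by induction on the length $n$ of the abstract run: assuming a concrete run and injection for the first $n$ steps, it inspects the $(n{+}1)$-st abstract label $\sigma\cdot k$, locates the matching point in the concrete run via the decoration, and either extends the concrete run (if it has not yet performed $\sigma\cdot k$) or \emph{surgically rebuilds} it (if it already has, but resolved a $+$ differently) by excising that step and all its causal and FIFO descendants, performing the corrected step, and then replaying the removed FIFO-dependents. You instead reify the abstract run as a set of events, add the extra instantiations forced by concrete FIFO, impose a causal-plus-FIFO partial order, and linearize. Both rest on the same key observation---\rulename{inst$_a$} substitutes the \emph{intended} target for ${\it this}$, so misdelivery perturbs only queue placement, not data. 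Your global construction is conceptually cleaner and sidesteps the paper's delicate excise-and-replay; the paper's stepwise induction, on the other hand, makes enabledness at each point easier to verify.

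There is, however, a real gap in your construction. You add for each blocking message only an \emph{instantiation} event, but after \rulename{inst$_d$} the actor's process is $\sigma\cdot 0:P\neq\pinull$, so the next dequeue on that actor is not enabled until $P$ has run to $\pinull$; hence the body steps of every extra instantiation must also be events (this is exactly where the paper invokes ``processes are finite and non-blocking''). Those extra body steps may themselves emit messages, and depending on where the linearization places them, such emissions may land \emph{ahead} of abstract-run messages in some queue, forcing further extras---you need to argue that this cascade is finite. Your acyclicity argument (``every $\prec$-edge points into the causal future along the abstract run'') also does not cover edges touching the extra events, which do not occur in the abstract run at all; a separate argument is needed there. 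Finally, for the linearization to actually satisfy concrete FIFO you need the \emph{emissions} (not only the instantiations) of same-target messages to be $\prec$-ordered; this should be made explicit and checked for compatibility with the other constraints.
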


\begin{proof}
The first item trivially holds because the new
rules used in the definition of $\absred{\alpha}$
are (strictly) more general than the corresponding
rules used in the definition of $\lred{\alpha}$.

The second item is proved by induction on the length of the computation 
$\State_0 \absred{\alpha_1} \ldots 
\absred{\alpha_n} \State_n$.

If $n=1$ then $\State_0 \absred{\alpha_1} \State_1$
with $\alpha_1=1$ or $\alpha_1=1|m(\wt{U},A)$ 
and $\proj{\State_1}{\alpha}=P'$, where $P'$
is an immediate derivative of the main process $P$.
It is trivial to see that the same transition
is present in the concrete decorated semantics:
namely, $\aleph \; \triangleright \; (0:P, \varnothing, \varepsilon) \lred{\alpha'_1} \State'_1$ with 
$\alpha_1 \simeq_{\mathsf a} \alpha'_1$ and 
$\proj{\State'_1}{\alpha'_1} \simeq_{\mathsf a} P'$.

If $n>1$
we consider $\State_0 \absred{\alpha_1} \ldots 
\absred{\alpha_n} \State_n\absred{\alpha_{n+1}} \State_{n+1}$. 
The inductive hypothesis guarantees the existence of the concrete computation 
$\State'_0 \lred{\alpha'_1} \ldots \lred{\alpha'_m} \State'_m$ and of the injection $I$ 
such that, for all $1\leq i \leq n$, 
we have $\alpha_i \simeq_{\mathsf a} \alpha'_{I(i)}$ and 
$\proj{\State_i}{\alpha_i} \simeq_{\mathsf a} \proj{\State'_{I(i)}}{\alpha'_{I(i)}}$.
We now proceed by case analysis on the last number of the sequence
in $\alpha_{n+1}$.

If the number is 0, then $\alpha_{n+1}=\sigma\cdot 0$ and the transition is obtained
by applying rule \rulename{inst$_a$}
on a message $m(\wt{U},\sigma,A')$.
The presence of this message in one of the queues in $\State_n$
guarantees the existence of $1\leq j\leq n$
such that $\alpha_j=\sigma|m(\wt{U},A')$.
In the concrete computation we have $\alpha'_{I(j)}\simeq_{\mathsf a} \sigma|m(\wt{U},A')$.
This means that the same message (up-to $\simeq_{\mathsf a}$) is in the 
queue of an actor $A''$ such that $A',A'' \in {\adef{C}}$ in the concrete state $\State'_{I(j)}$.
We have two subcases: either such method invocation
is instantiated by the actor $A''$ during the concrete
computation $\State'_0 \lred{\alpha'_1} \ldots \lred{\alpha'_m} \State'_m$
or not. In the first case, there exists $I(j) < l \leq m$ such that 
$\alpha'_l=\sigma\cdot 0$ that instantiates
the method. The thesis is proved simply by extending the injection
$I$ with $I(n+1)=l$ and observing that $\proj{\State_{n+1}}{\sigma\cdot 0}  \simeq_{\mathsf a} \proj{\State'_l}{\sigma\cdot 0}$.
If the method invocation was not instantiated, it is in the
queue of the actor $A''$ in the configuration $\State'_m$. It is sufficient
to apply the same reasoning on an extension of the concrete computation
that consumes the messages in front of the method invocation with sequence $\sigma$
and that finally instantiates it.
Such extension exists because processes are finite and non-blocking.

If the number is not 0, we discuss only the case in which $\alpha_{n+1}=\sigma\cdot k$
(with $k > 0$)
because the case $\alpha=\sigma\cdot k|m({\wt U},A)$ is treated similarly.
In the computation $\State_0 \absred{\alpha_1} \ldots 
\absred{\alpha_n} \State_n$ is guaranteed the presence of a label
containing $\sigma\cdot k-1$, i.e.
there exists $1\leq j\leq n$ such that 
the label $\alpha_j$ contains $\sigma\cdot k-1$.
The process $\proj{\State_j}{\alpha_j}$ is the process
that has just performed the action labeled with $\sigma\cdot k-1$
and that performs the action in the transition $\State_n\absred{\alpha_{n+1}} \State_{n+1}$
because $\alpha_{n+1} = \sigma\cdot k$.
By inductive hypothesis $\proj{\State_j}{\alpha_j} \simeq_{\mathsf a} \proj{\State'_{I(j)}}{\alpha'_{I(j)}}$
hence a process ready to execute an action labeled with the sequence $\sigma\cdot k$
occurs also in the concrete state $\State'_{I(j)}$.
We now consider two subcases.
\begin{itemize}
\item
There exists no label $\alpha'_l$
containing $\sigma\cdot k$.
In this case 
the process $\proj{\State'_{I(j)}}{\alpha'_{I(j)}}$
still occurs in $\State'_m$.
Hence it is possible to extend the computation 
$\State'_0 \lred{\alpha'_1} \ldots \lred{\alpha'_m} \State'_m$
with $\State'_m \lred{\sigma\cdot k} \State'_{m+1}$
in such a way that $\proj{\State'_{m+1}}{\sigma\cdot k}\simeq_{\mathsf a}\proj{\State_{n+1}}{\sigma\cdot k}$.
The thesis is proved simply by extending the injection
$I$ with $I(n+1)=m+1$.
\item
There exists $I(j) < l \leq m$
such that $\alpha'_l$ contains $\sigma\cdot k$.
In this case it is not guaranteed
that $\proj{\State'_{l}}{\alpha'_l}\simeq_{\mathsf a}\proj{\State_{n+1}}{\sigma\cdot k}$, due to
nondeterminism. For this reason we construct from 
$\State'_0 \lred{\alpha'_1} \ldots \lred{\alpha'_m} \State'_m$
another concrete computation
that satisfies our thesis.
The first transformation that we apply to $\State'_0 \lred{\alpha'_1} \ldots \lred{\alpha'_m} \State'_m$
consists of the cancellation
of the transition $\alpha'_l$ and of all the other transitions
that depends on it. Namely, there are two kinds of transitions that 
must be removed: (i) those labeled with a sequence having a prefix $\sigma\cdot r$ such that
$r \geq k$ and (ii) those causally dependent on the instantiation of messages
that are in $\State'_{l}$ inside the queue of the actor containing the process
decorated with $\sigma\cdot k$.
Let
$\State'_0 \lred{\alpha'_1} \ldots \lred{\alpha'_{l-1}}
\State'_{l-1}  \lred{\alpha''_l} \State''_l \ldots \lred{\alpha''_s} \State''_s$
be the concrete computation obtained after this elimination of transitions.
We now extend such computation 
by letting the process labeled with $\sigma\cdot k-1$
to execute the expected action labeled with $\sigma\cdot k$.
Namely, we add
the transition $\State''_s \lred{\sigma\cdot k} \State''_{s+1}$
such that $\proj{\State''_{s+1}}{\sigma\cdot k} \simeq_{\mathsf a} \proj{\State_{n+1}}{\sigma\cdot k}$.
Then, we extend the computation by performing at least
all the transitions removed for the reason (ii) above.
This extension exists because all processes are finite and
nonblocking and because the considered transitions causally depend on
messages that are in $\State''_{s+1}$ inside the queue of the actor
containing the process decorated with $\sigma\cdot k$.
Let
$\State'_0 \lred{\alpha'_1} \ldots \lred{\alpha'_{l-1}}
\State'_{l-1}  \lred{\alpha''_l} \State''_l \ldots \lred{\alpha''_s} \State''_s
\lred{\alpha''_{s+1}}\ldots \lred{\alpha''_{t}}\State''_t$
be the obtained computation.
The thesis is proved by considering this last concrete
computation, a rearrangement of the injection $I$
that maps to their new positions the transitions in its codomain
that belong to the group (ii), and by extending it with $I(n+1)=l$.\qedhere
\end{itemize}
\end{proof}

\noindent As a direct consequence we have that the abstract semantics preserves
both termination and 
control-state reachability.
%
%
\fi
%
\ifconf
\fi
It remains to prove that termination and process reachability is
decidable for the abstract semantics. To this aim, we consider
a transition system $\absred{}$ obtained by removing the 
labels from the transitions $\absred{\alpha}$.
On this transition system we 
define $\preceq_{\mathsf a}$ as
a variant of the ordering
$\preceq$ defined in the previous section
in such a way that $({\cal S},\absred{},\preceq_{\mathsf a})$
turns out to be a well-structured transition system (for configurations 
of stateless programs). 
Let:
\begin{itemize}
\item[--]
Let $\leq_{\mathsf a}$ be the following relation on message queues:

{\small
\[
\bigfract{
	{\it there \; exist} \; i_1 < i_2 < \cdots < i_k \leq h \; 
	{\it s.t. \; for} \; j \in 1..k,  \; \; m_j(\wt{U_j},\sigma_j,A_j) 
	\simeq_{\mathsf a} n_{i_j}(\wt{V}_{i_j},\sigma'_{i_j},A'_{i_j})
	}{
	m_1(\wt{U_1},\sigma_1,A_1) \ldots m_k(\wt{U_k},\sigma_k,A_k) \leq_{\mathsf a} 
	n_1(\wt{V_1},\sigma'_1,A'_1) \ldots n_h(\wt{V_h},\sigma'_h,A'_h)
	}
\]
}

\item[--]
Let $\preceq_{\mathsf a}$ be the ordering:
\[
\bigfract{A_i,A'_{j_i} \in {\adef{C}_i} \ \ P_i \simeq_{\mathsf a} P_{j_i}'
\ \  \mbox{\rm and}  
	\ \ q_i \leq_{\mathsf a} q_{j_i}' \ \ \mbox{\rm for } {i \in 1..\ell}, \; 
	1 \leq j_1 < j_2 < \cdots < j_\ell \leq \kappa
	}{
	 A_1 \triangleright (\sigma_1:P_1, \varnothing, q_1), \cdots , A_\ell \triangleright (\sigma_\ell:P_\ell, \varnothing, q_\ell) 
 \; \preceq_{\mathsf a} \; A'_1 \triangleright (\sigma'_1:P_1', \varnothing, q_1'), \cdots , A'_\kappa \triangleright 
 (\sigma'_\kappa:P_\kappa', \varnothing, q_\kappa')
 }
\]

\end{itemize}

Next, we observe that Lemma~\ref{prop.finteterms}
can be adapted to the case of unbounded actors
by using $\simeq_{\mathsf a}$ instead of $\simeq$.
Namely, let $T$ be either a process or a method invocation $m(U_1, \cdots , U_n, \sigma, A)$
of a stateless program and
let ${\cal T} = \{ T \rho_1, T\rho_2, T\rho_3, \cdots \}$ be
such that $i \neq j$ implies $T \rho_i \not \simeq_{\mathsf a} T \rho_j$. 
Proceeding as in the proof of Lemma~\ref{prop.finteterms},
we prove that
${\cal T}$ is finite.

\begin{theorem}
\label{thm.decidableforunbounded}
Given a stateless program ${\cal S}$
we have that $({\cal S}, \absred{}, \preceq_{\mathsf a})$ is a well-structured transition system.
\end{theorem}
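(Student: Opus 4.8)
The plan is to follow the three‑part blueprint of the proof of Theorem~\ref{thm.decidablestatelessandfinite}, verifying that $\absred{}$ is finitely branching, that $\preceq_{\mathsf a}$ is a well‑quasi‑ordering, and that $\preceq_{\mathsf a}$ is upward compatible with $\absred{}$. Finite branching is immediate for the same reasons as in {\actor}: the selections of fresh actor names and of fresh variables are deterministic, the $+$‑rules branch binarily, and the only additional nondeterminism -- the actor into whose queue a message is enqueued in rules \rulename{invk-s$_a$}/\rulename{invk$_a$} -- ranges over the finitely many actors of the target class occurring in the (finite) current configuration. Throughout, the auxiliary sequences $\sigma$ decorating processes and messages are immaterial: they serve only the correspondence of Proposition~\ref{prop:abssemantics}, are not constrained by $\preceq_{\mathsf a}$, and can be disregarded.

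For the well‑quasi‑ordering step, the crucial input is the $\simeq_{\mathsf a}$‑refinement of Lemma~\ref{prop.finteterms} stated just before the theorem: in a stateless program, for each class $\adef{C}$ there are only finitely many processes $P\rho$ and messages $m(\wt{y})\rho$ -- with $P$ and $m(\wt{y})$ ranging over the finitely many subterms/method shapes of $\adef{C}$ -- that are pairwise distinct up to $\simeq_{\mathsf a}$. Since classes have no fields the state component is always the empty map and is ignored, and since $\simeq_{\mathsf a}$ only distinguishes an actor name occurring in a message up to its class, the set of reachable message‑equivalence‑classes is finite. Hence a reachable abstract configuration can be presented as a finite sequence of \emph{actor descriptors} -- triples (class, $\simeq_{\mathsf a}$‑class of the running process, queue), the queues being finite words over a finite message alphabet. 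By Higman's Lemma~\cite{HigmanLemma}, $\leq_{\mathsf a}$ is a well‑quasi‑ordering on queues; the componentwise order on actor descriptors (equality on the two finite coordinates, $\leq_{\mathsf a}$ on queues) is then a well‑quasi‑ordering; and $\preceq_{\mathsf a}$ is precisely the embedding order on finite sequences of actor descriptors, which is a well‑quasi‑ordering by a further application of Higman's Lemma. This is where the argument diverges from Theorem~\ref{thm.decidablestatelessandfinite}: there the number of actors is fixed, so configurations were compared by the coordinatewise product order $\sqsubseteq^\ell$; here, because actor creation is unbounded, one must use a sequence‑embedding order, and the definition of $\preceq_{\mathsf a}$ -- with its strictly increasing injection $j_1 < \cdots < j_\ell$ -- is shaped precisely so that Higman's Lemma still applies.

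Upward compatibility is then a case analysis on the last abstract rule used in $\State_1 \absred{} \State_2$, after re‑establishing the $\simeq_{\mathsf a}$‑analogues of the facts (exp-i) and (exp-ii) about the decorated evaluation relation. The rules \rulename{let$_d$}, \rulename{match$_d$}, \rulename{mmatch$_d$}, \rulename{plus-l$_d$}/\rulename{plus-r$_d$} and \rulename{context$_d$} are treated verbatim as in Theorem~\ref{thm.decidablestatelessandfinite}, using the presentation of $\State_1 \preceq_{\mathsf a} \State_1'$ to locate in $\State_1'$ a same‑class actor whose process is $\simeq_{\mathsf a}$ to the one moving in $\State_1$. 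The genuinely new cases are \rulename{invk-s$_a$}/\rulename{invk$_a$} and \rulename{inst$_a$}. For the invocation rules, both the issuing actor and the intended recipient have $\preceq_{\mathsf a}$‑images of the same respective classes in $\State_1'$, so the invocation is replayed there by enqueuing the new message in the image of the recipient; the inexactness of delivery helps rather than hinders, $m(\wt{U}) \simeq_{\mathsf a} m(\wt{U}')$ follows from the $\simeq_{\mathsf a}$‑version of (exp-i), and the destination names agree up to class by clause (ii) of $\eqdot_{\mathsf a}$, whence $q \cdot m(\wt{U},\ldots) \leq_{\mathsf a} q' \cdot m(\wt{U}',\ldots)$. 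For \rulename{inst$_a$}, as in the \rulename{inst}‑case of Theorem~\ref{thm.decidablestatelessandfinite}, if in the larger configuration the relevant message sits behind a prefix of other messages one first drains that prefix by running the idle actor -- which terminates because method bodies are finite and non‑blocking, and only enlarges the configuration with respect to $\preceq_{\mathsf a}$, even when new instances are created -- and then applies \rulename{inst$_a$}; the essential point is that \rulename{inst$_a$} substitutes the \emph{carried destination name} for \emph{this}, so the instantiated body depends only on the message up to $\simeq_{\mathsf a}$ and not on the actor performing it, and the $\eqdot_{\mathsf a}$‑equivalence of the two instantiating substitutions yields the required $\simeq_{\mathsf a}$‑equivalence of the resulting processes.

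I expect the well‑quasi‑ordering step to be the main obstacle, and within it the point that needs the most care is the finiteness of the descriptor alphabet -- which is exactly what the $\simeq_{\mathsf a}$‑refinement of Lemma~\ref{prop.finteterms} buys us, in combination with statelessness and with the fact that $\simeq_{\mathsf a}$, hence $\leq_{\mathsf a}$, pins a message down only up to $\eqdot_{\mathsf a}$‑renaming of its arguments and up to the class of its destination. A related subtlety is that $\preceq_{\mathsf a}$ must be read as an embedding order (not a coordinatewise one) and must ignore the decorations $\sigma$: comparing decorations exactly would make $\preceq_{\mathsf a}$ fail to be a well‑quasi‑ordering and, in any case, would break upward compatibility, since the decoration of the process performing an invocation in the larger configuration need not coincide with the one in the smaller configuration.
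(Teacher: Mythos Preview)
Your proposal is correct and follows essentially the same approach as the paper's proof: both identify that the well-quasi-ordering argument of Theorem~\ref{thm.decidablestatelessandfinite} must replace the coordinatewise order $\sqsubseteq^\ell$ by a sequence-embedding order (handled by a second application of Higman's Lemma), and both identify that the only genuinely new case in upward compatibility is the invocation rules, resolved precisely because the abstract semantics lets a message be enqueued at any actor of the target's class. Your treatment is more explicit than the paper's on a few points the paper leaves implicit---finite branching, the irrelevance of the $\sigma$ decorations, and the observation that \rulename{inst$_a$} substitutes the carried destination for {\it this} so the instantiated body depends only on the message up to $\simeq_{\mathsf a}$---but none of this departs from the paper's line of argument.
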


\ifconf
\else

\begin{proof}
The proof is as in Theorem~\ref{thm.decidablestatelessandfinite}
with few differences that are discussed below.

In part {\bf (1)} the unique difference is in the 
last part where the coordinatewise order $\sqsubseteq^\ell$
on sequences (of length $\ell$) of queues of terms 
is used. As we now consider configurations with an unbounded 
number of actors, instead of configurations with a bounded
number $\ell$ of actors, we need to resort to the 
embedding $\sqsubseteq_{\mathsf a}$ defined as follows:
\[
\bigfract{
	q_i \leq_{\mathsf a} q_{j_i}' \quad \mbox{for } \; i \in 1..\ell, \; 
	1 \leq j_1 < j_2 < \cdots < j_\ell \leq \kappa
	}{
	 (q_1, \cdots , q_\ell) 
 \; \sqsubseteq_{\mathsf a} (q_1', \cdots , q_\kappa')
 }
\]
The final contradiction of part {\bf (1)}
is now reached by observing that by Highman's lemma, 
also $\sqsubseteq_{\mathsf a}$ is a well-quasi-ordering, as a consequence
of the well-quasi-ordering $\leq_{\mathsf a}$.

In part {\bf (2)} the unique difference
is for the monotonicity 
transitions due to rules \rulename{invk-a} and \rulename{invk-sa}. The greater configuration
is guaranteed to have a program ready to perform
a corresponding method invocation, but this could 
be addressed to a different actor.
In fact, the ordering $\preceq_{\mathsf a}$
does not preserve actor names as it was for $\preceq$ 
in the proof of Theorem~\ref{thm.decidablestatelessandfinite}. 
But $\preceq_{\mathsf a}$ preserves at least actor
classes. As the abstract transition
system $\absred{}$ allows a term $m(\wt{U},\sigma,A)$ to be
introduced in the queue of any of the actor 
belonging to the same class of $A$,
the method invocation executed by the greater
configuration can be introduced in the 
queue of the actor corresponding to the target
of the method invocation executed by the
smaller configuration.
\end{proof}
\fi
In the light of the results on well-structured transition
systems recalled at the beginning of Section~\ref{sec.decidability},
this theorem proves the decidability of termination for the abstract semantics.
To prove the decidability of process reachability
we need to prove that a finite basis for predecessors
is effectively computable.
%
%
%

\begin{lemma}
\label{lem.predunbounded}
Let $({\cal S}, \absred{}, \preceq_{\mathsf a})$ be a 
well-structured transition system 
of a program in {\actsl}, and let $\State \in
{\cal S}$. Then there is a finite set ${\cal X}$
such that, for every $\State' \succeq_{\mathsf a} \State$ and 
$\State'' \in \pred{\State'}$, there is $\StateT \in {\cal X}$
with $\StateT \preceq_{\mathsf a} \State''$. ${\cal X}$ can be effectively computed.
\end{lemma}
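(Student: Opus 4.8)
The plan is to follow the pattern of Lemma~\ref{lem.pred}, adapted to the abstract semantics $\absred{}$ of {\actsl} and to the ordering $\preceq_{\mathsf a}$. Write $\State = A_1 \triangleright (\sigma_1{:}P_1, \varnothing, q_1), \dots, A_\ell \triangleright (\sigma_\ell{:}P_\ell, \varnothing, q_\ell)$. I would enumerate, for every abstract rule in Tables~\ref{tab.decopsem} and~\ref{tab.abstrules} and for every way a single instance of that rule can produce a configuration $\State' \succeq_{\mathsf a} \State$, a \emph{minimal witness} of the source configuration $\State''$, supported only on the actors that $\State$ itself mentions. This is legitimate because the statement merely asks for some $\StateT \in {\cal X}$ with $\StateT \preceq_{\mathsf a} \State''$: all the extra actors and extra queue entries that $\State''$ may carry beyond what $\State$ tracks can be discarded, since $\preceq_{\mathsf a}$ lets the smaller configuration have both fewer actors and shorter queues. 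Note also that in a stateless program all states are $\varnothing$ and $\newact{\adef{C}}$ takes no argument, which keeps these witnesses under control.

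Concretely, ${\cal X}$ would collect: (a) for each $i$ and each \emph{predecessor process} $R$ of $P_i$ in the sense of Lemma~\ref{lem.pred} --- $\letin{x}{E}{P'}$ with $P_i = P'\subst{U}{x}$, or $[E = E']\, P_i \ite Q$, or $[E = E']\, Q \ite P_i$, or $P_i + Q$, or $Q + P_i$, or a body instance of a method of the class of $A_i$ having a suffix of one of these shapes --- the configuration obtained from $\State$ by replacing $\sigma_i{:}P_i$ with $\sigma_i{:}R$, and, when the reversed rule also evaluates a $\newact{\adef{D}}$ expression, the variant that moreover drops one tracked actor of class $\adef{D}$ with process $\pinull$ and empty queue; (b) for each $i$, the configuration obtained from $\State$ by replacing $\sigma_i{:}P_i$ with $\sigma_i{:}A'\invk m(\wt{E})\prefix P_i$ (reversing \rulename{invk-s$_a$}/\rulename{invk$_a$} on a tracked emitter), and, for each $i$ and each position in $q_i$, the configuration obtained by deleting that message occurrence from $q_i$ (covering steps whose emitted message is one on which the $\leq_{\mathsf a}$-embedding of $q_i$ relied while the emitter is untracked); (c) for each $i$ such that $P_i \simeq_{\mathsf a} P\subst{A'}{{\it this}}\subst{\wt{y'}}{\wt{y}}\subst{\wt{U}}{\wt{x}}$ for some method $\adef{C} \prefix m(\wt{x}) = P$ of the class of $A_i$, the configuration obtained from $\State$ by setting $A_i$'s process to $\pinull$ and prepending the message $m(\wt{U},\sigma,A')$ to $q_i$ (reversing \rulename{inst$_a$}); and (d) $\State$ together with each configuration obtained from $\State$ by removing a subset of tracked actors having process $\pinull$ and empty queue (covering steps located entirely in actors untracked by $\State$, where the relevant part of $\State''$ already embeds $\State$, up to such removals to account for freshly created actors matched against tracked ones).

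Finiteness of ${\cal X}$ follows from: $\ell$ is fixed; the $R$ in (a) range over a uniformly bounded syntactic modification of the $P_i$ together with the finitely many suffixes of the finitely many method bodies of the program, with the finitely many classes as the possible $\adef{D}$; and the messages prepended in (b) and (c) and the body instances tested in (c) need only be recorded up to $\simeq_{\mathsf a}$, of which there are finitely many by the $\simeq_{\mathsf a}$-version of Lemma~\ref{prop.finteterms} recalled just before Theorem~\ref{thm.decidableforunbounded}. Effectiveness follows because all the syntactic operations involved --- anti-unifying $P_i$ as $P'\subst{U}{x}$, testing whether $P_i$ is $\simeq_{\mathsf a}$ a method-body instance, deciding $\simeq_{\mathsf a}$ and $\preceq_{\mathsf a}$ --- are computable, $\simeq_{\mathsf a}$ reducing (as in Lemma~\ref{prop.finteterms}) to a finite condition on the partition induced by a renaming. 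That each listed configuration indeed lies below the relevant $\State''$ is then a direct check against the abstract rules.

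I expect the main obstacle to be the interplay between the inexact message delivery and the partiality of $\preceq_{\mathsf a}$: a single abstract step reaching a configuration $\succeq_{\mathsf a} \State$ may involve an emitter, a receiver and freshly created actors none of which $\State$ mentions, so one must check, rule by rule, that restricting the source to $\State$'s tracked actors (modified as in (a)--(d)) always yields something $\preceq_{\mathsf a} \State''$. The other delicate point is case (c): reversing \rulename{inst$_a$} requires reconstructing the consumed message $m(\wt{U},\sigma,A')$, whose data are not determined by $P_i$ but only constrained modulo $\simeq_{\mathsf a}$ through the body instantiation, and it is exactly here that the $\simeq_{\mathsf a}$-finiteness lemma is needed to bound the enumeration.
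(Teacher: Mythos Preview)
Your strategy is genuinely different from the paper's. The paper does \emph{not} restrict the witnesses to the actors already in $\State$; instead it observes that an abstract step touches at most two actors, enlarges $\State$ by up to two fresh actors (each carrying one of the finitely many processes and at most one queued message, both up to $\simeq_{\mathsf a}$) so that every touched actor becomes tracked, and then simply re-runs the predecessor computation of Lemma~\ref{lem.pred} on each such enlargement. Your plan, by contrast, keeps $\State$'s support fixed and case-splits on which of the touched actors happen to be tracked. Both ideas are reasonable, and yours is more explicit; the paper's buys simplicity by reducing wholesale to the bounded-actor lemma.

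There is, however, a concrete gap in your case~(b). Reversing an \rulename{invk-s$_a$}/\rulename{invk$_a$} step changes \emph{two} things in the source: the emitter's process acquires the prefix $A'\invk m(\wt{E})$, \emph{and} the receiver's queue loses its last message. You list the two modifications as two separate families of witnesses, one intended for ``tracked emitter'' and the other for ``untracked emitter''. But when both emitter and receiver are tracked by $\State$ (in particular, they coincide for \rulename{invk-s$_a$}), neither separate witness works. Concretely, take $\State=A_1\triangleright(\sigma{:}P_1,\varnothing,\,m(\wt{U},\sigma',A))$ and a step $\State''\absred{}\State'$ of type \rulename{invk-s$_a$} on the single actor $B_1$ that $A_1$ is matched to, where the embedding of $q_1$ into $B_1$'s queue uses the newly enqueued message. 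Then your first family keeps the one-element queue but $B_1$'s queue in $\State''$ is empty, and your second family keeps the process $P_1$ but $B_1$'s process in $\State''$ is $A'\invk m(\wt{E})\prefix P_1'$; neither is $\preceq_{\mathsf a}\State''$. The fix is to include, for a tracked emitter $A_i$, the combined witnesses that also delete the last message from the tracked receiver's queue (which may be $A_i$ itself). A related omission: \rulename{invk$_a$} can evaluate $\newact{\adef D}$ inside $\wt{E}$, so the ``drop a fresh $\pinull$-actor'' variant you attach to case~(a) is needed here too, and a single step may create several actors, not just one. The paper's route sidesteps all of this bookkeeping because Lemma~\ref{lem.pred}'s invocation-case already bundles the emitter and receiver changes together once both are tracked.
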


\ifconf
\else

\begin{proof}
The computation of ${\cal X}$ must extend the 
construction presented in the proof of Lemma~\ref{lem.pred}
in two ways.

The first extension derives from the fact that, 
differently from the ordering $\preceq$ considered 
in Lemma~\ref{lem.pred}, if $\State \preceq_{\mathsf a}  \State'$ 
it could be possible for $\State'$ to have strictly
more actors than $\State$.
In these case, it is possible that the predecessor
differs from its successor $\State'$ for actors
that are not present in $\State$.
We can cope with this problem
by applying the procedure described in the proof
of Lemma~\ref{lem.pred} not only to the configuration
$\State$, but to all the configurations that can be obtained
by extending $\State$ with
one or two additional actors belonging to one of the
finite classes of the considered program. 
In fact, at most two actors are modified by one transition.
Each of these additional actors executes a process obtained
by applying a renaming $\rho$ to a suffix of 
one of the method definitions
of the corresponding class. As observed
above, there are finitely many processes that can 
be obtained up-to $\simeq_{\mathsf a}$.
Finally, the additional actors
have a queue including at most one
method invocation (in fact, at most one message
can be consumed in one transition). Also in this
case, by considering the 
method definitions of the actor class,
it is easy to see that there are finitely
many different method invocations up-to $\simeq_{\mathsf a}$.

The second extension 
is trivial and deals with the fact
that in the abstract semantics a method invocations can 
be introduced in the queue of any of the actors
belonging to the same class of the expected target actor.
So the procedure of the proof of Lemma~\ref{lem.pred} for computing
${\cal X}$ 
must be extended to consider also this kind
of transitions.
\end{proof}
\fi

\ifconf
It turns out that 
control-state reachability
is decidable for the
abstract transition system of {\actsl}.
\else
From Theorem~\ref{thm.decidableforunbounded}, this last Lemma 
and the results on well-structured transition
systems recalled at the beginning of Section~\ref{sec.decidability},
we can conclude that also control-state reachability,
besides termination as already commented above,
is decidable for the abstract semantics. From Proposition~\ref{prop:abssemantics}
we have already concluded that the abstract semantics preserves
termination and control-state reachability w.r.t. the concrete 
semantics. Hence, we have that termination and 
control-state reachability are decidable for
stateless actor program. 
\fi

The decidability of control-state reachability entails the decidability of
process reachability.
In fact,
given a process $P$, 
the reachability of a configuration
$A \triangleright (P', \varphi, q),\State$
with $P'$ equal to $P$ up-to renaming
 of variables and actor names
can be solved in the abstract transition
system simply by checking the control-state
reachability of at least one 
of the following states.
Let ${\adef{C}_1},\ldots,{\adef{C}_n}$
be the actor classes of the considered actor system
and let 
$A_1,\cdots,A_n$ 
be such that $A_i \in {\adef{C}_i}$.
We consider the following finite set of states:
\[
\begin{array}{lll}
{\cal S} & = & \{\ A_i \triangleright (Q_i, \varnothing, \varepsilon)\ \mid \
1 \leq i \leq n,
\quad
\mbox{$Q_i$ is a suffix of a method definition} \\ 
& & \qquad
 \mbox {in the class
${\adef{C}_i}$
and it is
equal to $P$ up-to renaming}
\ \}
\end{array}
\]
\ifcamera
Note that 
control-state reachability 
is not preserved by the abstract semantics.
In fact, the abstract transition system is guaranteed to 
execute the same method invocations, but this can 
be done in a different order and also by different 
actors.
\else

\fi


%

\section{Conclusions} 
\label{sec.conclusions}
To the best of our knowledge this paper contains a first systematic study on the computational power of Actor-based languages.
We have focussed on the  pure asynchronous FIFO queueing and dequeuing of method calls between actors in the context of a nominal calculus 
which features the dynamic creation of variable names that can be passed around.
The results proved in this paper can be summarized as follows:
\begin{itemize}
\item
we identified two small but Turing powerful fragments of our Actor language:
the fragment in which only boundedly many actors can be created,
and the fragment in which fields cannot be updated;
\item
we have proved that the fragment obtained as intersection of the 
two above sublanguages is not Turing complete, as properties like
termination and control-state reachability turn out to be decidable;
\item
if Actors are stateless, the language has decidable termination
and control-state reachability even if we consider unboundedly many
Actors. 
\end{itemize}

\noindent We conclude by mentioning 
relevant lines
for future research. Recent work have identified more expressive Actor interaction
mechanisms based an asynchronous method calls implemented by means of the so-called
future variables \cite{BoerCJ07}: we plan to investigate the impact of this
Actor-based synchronization mechanism on our (un)decidability results.
We also plan to extend our study of expressiveness to 
primitives like the release statements in \cite{Johnsen07}. 
These statements support the so-called cooperative shceduling
of method invocations:
method executions can release the 
control of the Actor in such a way that other method executions
can be instantiated or resumed.

%

\bibliographystyle{abbrv}
\bibliography{Mmajid}

\end{document}
\ifcamera
\else

\newpage

\appendix


\section{Semantic characterization of the 2CM encoding}
\section{Detailed Proof of Undecidability}

A ``Two Faulty Towers Machine'' (2FTM, for short) is a machine with \emph{two faulty 
registers} 
$R_1$ and $R_2$ holding either arbitrary large natural numbers or the \emph{faulty value} 
$\bot$. The program of a 2FTM is a finite sequence of numbered instructions that are 
the same of those in 2CMs. However, in contrast with 2CMs, 2FTM instructions
{\sf Inc}/{\sf DecJump} may increment/decrement-and-jump a register or
\emph{nondeterministically} evolve into a faulty state $(0,\bot,v)$ or $(0,v,\bot)$,
according to the instruction refers to $R_1$ or $R_2$, respectively.
In 2FTMs, the instruction numbered 0 is always assumed to be {\sf Halt}.

By definition, every 2CM with a 0-numbered instruction {\sf Halt} is 
a 2FTM and conversely. If we restrict to 2CM with a 0-numbered instruction {\sf Halt},
it is easy to verify that every 
such machine has an infinite computation with a 2CM-semantics if and only if it
has an infinite computation with a 2FTM-semantics. Similarly for process reachability.

\bigskip

\noindent
{\bf Theorem~\ref{thm.undecidablestatefull}.} {\em Termination and process reachability 
are undecidable in {\actba}.
}

\begin{proof}
 
A state $(i,v_1,v_2)$ of a 2FTM is represented by a \emph{stable} state
$C_i,R^1_{v_1},R^2_{v_2}$ of the actor program described in Figure \ref{fig:2CM_actor},
where
\begin{itemize}
 \item $C_i= C \triangleright (\semantics{{\it Instruction\_i}}_{i,\true,\false}, \phi,\epsilon)$
 and 
 \begin{itemize}
 \item $\phi(\mathtt{r}_k)=R_k$, $k=1,2$
 \item $\phi(\mathtt{stm}_k)= x_k$, $k=1,\ldots,n$
\end{itemize}
\item $R^k_{v_k}= R_k \triangleright ( \pinull, \psi,q_k)$ ($k=1,2$) and
\begin{itemize}
 \item $\psi(\mathtt{\mathtt{ctr}})=C$, 
 \item $\psi(\mathtt{dec})=\false$
 \item for $f=\mathtt{loop},\mathtt{stop}$:\\
 $\psi(f)=
 \left\{
 \begin{array}{ll}
  \false &\mbox{\rm if $v_k\not=\bot$}\\
  \true  &\mbox{\rm otherwise}
 \end{array}
 \right.$ 
 \item 
 $q_k= 
 \left\{
 \begin{array}{ll}
   {\it bottom}(\true,\false) \cdot {\it item}(\true,\false)^{v_k}  &\mbox{\rm if $v_k\not=\bot$}\\
  \epsilon &\mbox{\rm otherwise}
 \end{array}
 \right.$\\
(here ${\it item}(\true,\false)^{v_k}$ denotes a queue of $v_k$ messages ${\it item}(\true,\false)$)

\end{itemize}

\end{itemize}

By construction,  a 2CM diverges if and only if its  2FTM extension diverges.
So it suffices to prove the following \emph{bisimulation}  relation between the states of
the 2FTM and the corresponding  stable states of the actor program :
$$
\mbox{\rm $(i,u_1,u_2)\rightarrow^* (j,v_1,v_2)$
if and only if
$(C_i,R^1_{u_1},R^2_{u_2})\rightarrow^* (C_j,R^1_{v_1},R^2_{v_2})$}
$$
where $\rightarrow^*$ denotes the transitive closure of both the transition relation on 2FTM states and
the states of actor program encoding the 2FTM.
A proof of this  bisimulation involves a straightforward though tedious case analysis.
We consider here the following characteristic  cases.
First  we consider the following generation of a faulty register
$$
(i,v_1,v_2)\rightarrow (i,\bot,v_2)
$$
By definition the state $(i,v_1,v_2)$ of the 2FTM is represented by
$(C_i,R^1_{v_1},R^2_{v_2})$, with
$$R^1_{v_1}=R_1\triangleright  (\pinull,\psi, {\it bottom}(\true,\false) \cdot {\it item}(\true,\false)^{v_1} )$$ and $\psi({\tt loop})=\false$.
The following computation yields the desired stable state:
$$
\begin{array}{ll}
R_1\triangleright  (\pinull,\psi, {\it bottom}(\true,\false) \cdot {\it item}(\true,\false)^{v_k} )& \rightarrow^*\\
R_1\triangleright  (\pinull,\psi[{\tt loop}\upd\true], {\it item}(\true,\false)^{v_k} )
\cdot {\it bottom}(\true,\false) & \rightarrow^*\\
R_1\triangleright  (\pinull,\psi[{\tt loop}\upd\true], {\it bottom}(\true,\false) \cdot {\it item}(\true,\false)^{v_k} )& \rightarrow^*\\
R_1\triangleright  (\pinull,\psi[{\tt loop}\upd\true][{\tt stop}\upd\true],  {\it item}(\true,\false)^{v_k} )& \rightarrow^*\\
R_1\triangleright  (\pinull,\psi[{\tt loop}\upd\true][{\tt stop}\upd\true],  \epsilon)
\end{array}
$$

Next we consider a transition
$$
(i,\bot,v)\rightarrow (i,\bot,v)
$$
where ${\it instruction}_i$ involves an increment operation on the faulty register.
By definition the state $(i,\bot,v_2)$ of the 2FTM is represented by
$C_i,R^1_\bot,R^2_{v})$, with
$$
R^1_\bot= R_1\triangleright  (\pinull,\psi,\epsilon)
$$
and $\psi({\tt  stop})=\true$.
By the rule for method invocation we have
$$
C_i,R^1_{\bot}\rightarrow 
C \triangleright (\pinull, \phi,\epsilon) , R_1\triangleright ( \pinull, \psi, {\it inc}({\tt stm}_{i+1}, \true,\false))
$$
By definition of the ${\it inc}$ method we have
the following transition which immediately yields the desired stable state:
$$
R_1\triangleright ( \pinull, \psi,  {\it inc}({\tt stm}_{i+1}, \true,\false))\rightarrow
R_1\triangleright ( \pinull, \psi,  \epsilon)
$$

We finally consider the most complicated case of  a state $(i,u_1,u_2)$ of the 2FTM,
with $u_1>0$ and ${\it instruction}_i= {\sf DecJump}(R_1, k)$ encoded by
${\tt r}_1\invk {\it decjump}({\tt stm}_{i+1}, {\tt stm}_{k}, \true,\false)$:

By the rule for method invocation we have
$$
\begin{array}{l}
C_i,R^1_{u_1}\rightarrow \\
C \triangleright (\pinull, \phi,\epsilon) , R_1\triangleright ( \pinull, \phi, {\it bottom}(\true,\false) \cdot {\it item}(\true,\false)^{u_1}\cdot
{\it decjump}({\tt stm}_{i+1}, {\tt stm}_{k}, \true,\false))
\end{array}
$$
Next we proceed as follows:
$$
\begin{array}{ll}
R_1\triangleright ( \pinull, \psi, {\it bottom}(\true,\false) \cdot {\it item}(\true,\false)^{u_1}\cdot
{\it decjump}({\tt stm}_{i+1}, {\tt stm}_{k}, \true,\false))&\rightarrow^* \\
R_1\triangleright ( \pinull, \psi[{\tt loop}\upd \true],{\it item}(\true,\false)^{u_1}\cdot
{\it decjump}({\tt stm}_{i+1}, {\tt stm}_{k}, \true,\false))\cdot {\it bottom}(\true,\false) ) &\rightarrow^* \\
R_1\triangleright ( \pinull, \psi[{\tt loop}\upd \true],
{\it decjump}({\tt stm}_{i+1}, {\tt stm}_{k}, \true,\false))\cdot {\it bottom}(\true,\false)
\cdot {\it item}(\true,\false)^{u_1} ) &\rightarrow^*\\
R_1\triangleright ( {\it this} \invk {\it checkzero}({\tt stm}_{i+1}, {\tt stm}_{k}, \true,\false), \psi[{\tt dec}\upd \true],
 {\it bottom}(\true,\false) \cdot {\it item}(\true,\false)^{u_1} ) &\rightarrow^*\\
R_1\triangleright ( \pinull, \psi[{\tt dec}\upd \true],
 {\it item}(\true,\false)^{u_1} \cdot {\it checkzero}({\tt stm}_{i+1}, {\tt stm}_{k}, \true,\false) 
 \cdot  {\it bottom}(\true,\false) ) &\rightarrow^*\\
 R_1\triangleright (({\tt dec}\upd\false), \psi[{\tt dec}\upd \true],
 {\it item}(\true,\false)^{u_1-1} \cdot {\it checkzero}({\tt stm}_{i+1}, {\tt stm}_{k}, \true,\false) 
 \cdot  {\it bottom}(\true,\false) ) &\rightarrow\\
 R_1\triangleright (\pinull), \psi,
 {\it item}(\true,\false)^{u_1-1} \cdot {\it checkzero}({\tt stm}_{i+1}, {\tt stm}_{k}, \true,\false) 
 \cdot  {\it bottom}(\true,\false) ) &\rightarrow^*\\
R_1\triangleright ( \pinull, \psi,
 {\it checkzero}({\tt stm}_{i+1}, {\tt stm}_{k}, \true,\false) \cdot  {\it bottom}(\true,\false)  
 \cdot {\it item}(\true,\false)^{u_1-1} ) &\rightarrow^*\\
R_1\triangleright ( {\tt ctr} \invk {\it run}({\tt stm}_{i+1}, \true,\false), \psi,
 {\it bottom}(\true,\false)  \cdot {\it item}(\true,\false)^{u_1-1} ) &\rightarrow^*\\
\end{array}
$$
We then can generate the desired state by the following computation
$$
\begin{array}{ll}
C \triangleright (\pinull, \phi,\epsilon) , 
R_1\triangleright ( {\tt ctr} \invk {\it run}({\tt stm}_{i+1}, \true,\false), \psi,
 {\it bottom}(\true,\false)  \cdot {\it item}(\true,\false)^{u_1-1} ) &\rightarrow\\
C \triangleright (\pinull, \phi,{\it run}({\tt stm}_{i+1}, \true,\false)) , 
R_1\triangleright ( \pinull, \psi,
 {\it bottom}(\true,\false)  \cdot {\it item}(\true,\false)^{u_1-1} ) &\rightarrow \\
C \triangleright (\semantics{{\it Instruction_i}}_{i,\true,\false}, \phi,\epsilon) , 
R_1\triangleright ( \pinull, \psi,
 {\it bottom}(\true,\false)  \cdot {\it item}(\true,\false)^{u_1-1} ) &\\
\end{array}
$$

 \end{proof}

\ifconf

\section{Proofs of Sections~\ref{sec.decidability}}

{\bf Lemma~\ref{prop.finteterms}.} {\em 
Let $T$ be either a process or a method invocation $m(U_1, \cdots , U_n)$
of a program with finitely many actors.
Let ${\cal T} = \{ T \rho_1, T\rho_2, T\rho_3, \cdots \}$ be
such that $i \neq j$ implies $T \rho_i \not \simeq T \rho_j$. 
Then ${\cal T}$ is finite.
}

\begin{proof}
 We demonstrate the lemma for processes, the argument is similar for 
method invocations. So, let  $P$ be a process. It is possible to count the 
number of renamings $\rho$ on $\free{P}$ that are different according to
$\eqdot$. In facts, the values of renamings on variables that are 
different from  $\free{P}$ do not play any role in the definition of ${\cal T}$.

The basic remark is that a renaming $\rho$ generates a \emph{partition} of the
set $\free{P}$: two variables $x$ and $y$ are in the same partition if and only
if $\rho(x) = \rho(y)$. If we restrict to renamings that map variables to
variables (and not actor names), then they are different according to 
$\eqdot$ if they
yield different partitions. The number of such renaming is the 
\emph{Bell number} of the cardinality of $\free{P}$, let it be
${\tt Bell}(\kappa)$, where $\kappa$ is the cardinality of $\free{P}$. In 
addition, in our case, renamings may map a variable 
to an actor name into a finite set $\{A_1, \cdots, A_\ell\}$. In this case
the identity of the actor name is relevant. If $\kappa \geq \ell$ then 
$((\combinator{\kappa}{\ell}) \times \ell! +1)\times {\tt Bell}(\kappa)$ is an upper bound to the different renamings according $\eqdot$. If $\kappa < \ell$ then
the upper bound is $(\ell! / \kappa! +1 )\times {\tt Bell}(\kappa)$. In any case
the number of different renamings according to $\eqdot$ is finite.

Henceforth the set ${\cal T}$ is finite as well.
\qed
\end{proof}

\noindent
{\bf Theorem~\ref{thm.decidablestatelessandfinite}.} {\em
Let $({\cal S},\lred{})$ be a transition system of a program
with finitely many actors and read-only fields. 
Then $({\cal S}, \lred{}, \preceq)$ is a well-structured transition system.
}

\begin{proof}
{\bf (1)} \emph{$\preceq$ is a well-quasi-ordering}. It is easy to prove that 
$\preceq$ is a quasi-ordering.
To prove that $\preceq$ is a well-quasi-ordering, we reason by contradiction.
Let $\State_1, \State_2, \State_3, \cdots$ be 
an infinite sequence such that, for every $i < j$,  $\State_i \not \preceq 
\State_j$. It is easy to verify that without loss of generality we may assume that
the sequence is such that for every $i < j$,  $\State_i \not \succeq
\State_j$ (all the states are incomparable).
Note that an infinite strictly decreasing sequence,
i.e., for every $i$,  $\State_i \succeq \State_{i+1}$ and 
$\State_i \neq \State_{i+1}$  is not possible.
Let
\[
{\tt subterms}(\adef{C}) = 
\{ P \quad | \quad \mbox{there exists a method } m \mbox{ such that }
P \mbox{ is a subterm of } \adef{C}.m(\wt{x})\} \; .
\]
The set ${\tt subterms}(\adef{C})$ is finite. Therefore, by Proposition~\ref{prop.finteterms}, the number of terms $P \rho$ which are different 
according to $\eqdot$ is finite as well. This means that 
it is possible to extract 
a subsequence $\State_{i_1}, \State_{i_2}, \State_{i_3}, \cdots$ from 
$\State_1, \State_2, \State_3, \cdots$ such that, for every $A$, elements $A \triangleright (P_{i_j} \rho_{i_j}, \varphi_{i_j}, q_{i_j})$ and $A \triangleright (P_{i_k}{i_k}, \varphi_{i_k}, q_{i_k})$
in $\State_{i_j}$ and $\State_{i_k}$, respectively, we have $P_{i_j}\rho_{i_j} \eqdot
P_{i_k}\rho_{i_k}$.

Due to the above arguments, the sequence $\State_{i_1}, \State_{i_2}, 
\State_{i_3}, \cdots$ may be represented as a sequence of tuples of 
queues 
\[
(q_{i_1}^{A_1}, \cdots , q_{i_1}^{A_\ell}), \;
(q_{i_2}^{A_1}, \cdots , q_{i_2}^{A_\ell}),\;
(q_{i_3}^{A_1}, \cdots , q_{i_3}^{A_\ell}),\; \cdots
\]
such that $\State_{i_j} \preceq \State_{i_k}$ if and only if 
$(q_{i_j}^{A_1}, \cdots , q_{i_j}^{A_\ell}) \sqsubseteq^\ell
(q_{i_k}^{A_1}, \cdots , q_{i_k}^{A_\ell})$, where $\sqsubseteq^\ell$ is the 
coordinatewise order defined by
\[
(q_1, \cdots , q_\ell) \sqsubseteq^\ell
(q_1', \cdots , q_\ell') \quad \eqdef \quad 
\mbox{{\em for every h}} \; : \; q_h \leq q_h'
\]
($\leq$ is the above embedding relation).

We are finally reduced to an infinite sequence of tuple of queues 
that are pairwise incomparable according to $\sqsubseteq^\ell$. This fact contradicts the 
\begin{itemize}
\item[] \emph{Higman's Lemma \cite{HigmanLemma}:
if $(X,\le)$ is a well-quasi-ordering and $(X^*, \le^*)$ is the set of finite 
$X$-sequences ordered by the embedding 
relation $\le^*$ defined using $\le$ as pointwise ordering, 
then $(X^*, \leq^*)$ is a  well-quasi-ordering.}
\end{itemize}
More precisely, the contradictions follows from the following
consequence of the Higman's Lemma:
\begin{itemize}
\item 
if $X$ is a finite set and $(X^*, \le)$ is the set of finite $X$-sequences 
ordered by the embedding 
relation, then $(X^*, \leq)$ is a  well-quasi-ordering.
\end{itemize}
and from the following statement
\begin{itemize}
\item 
if $(X,\leq)$ is a well-quasi-ordering then $(X^\ell, \leq^\ell)$ is a well-quasi-ordering.
\end{itemize}

\medskip

{\bf (2)} \emph{$\preceq$ is upward compatible with $\lred{}$}. 
Let $\State_1 
\preceq \State_1'$ and $\State_1 \lred{} \State_2$. We demonstrate that there exists $\State_1' \lred{}^* \State_2'$ such that $\State_2 \preceq \State_2'$. The proof
is by induction on the height of the proof-tree of $\State_1 \lred{} \State_2$.
Let $\State_1 = A_1 \triangleright (P_1\rho_1, \varphi_1, q_1),
\cdots , A_\ell  \triangleright (P_\ell \rho_\ell, \varphi_\ell, q_\ell)$. Since $\State_1 
\preceq \State_1'$ then 
$\State_1' = A_1 \triangleright (P_1\rho_1', \varphi_1, q_1'),
\cdots ,  A_\ell  \triangleright (P_\ell\rho_\ell', \varphi_\ell,
q_\ell')$ such that, for every
$i$, 
$P_i \rho_i \eqdot P_i \rho_i'$ and $q_i \leq q_i'$.

The basic case is when the height is 0. There are four subcases 
(the fourth case has been moved to the inductive case
because, when the context is present, the case is more complex):
\begin{enumerate}
\item
$\State_1 \lred{} \State_2$ is 
produced by the axiom $A \triangleright ((A \invk m(\wt{E}) \prefix P)\rho, 
\varphi, q), 
\; \lred{} \; A \triangleright (P\rho, \varphi, q \cdot  m(\wt{x})\rho)$.
Since $\State_1 
\preceq \State_1'$ then $\State_1' = A \triangleright ((A \invk m(\wt{x}) \prefix P)\rho', \varphi, q')$ and 
$\rho \eqdot \rho'$ and $q \leq q'$. It is straightforward to
verify that $\State_1' \lred{} (P\rho', \varphi, q' \cdot  m(\wt{x})\rho')$
and $A \triangleright (P\rho, \varphi, q \cdot  m(\wt{x})\rho) \preceq 
A \triangleright (P\rho', \varphi, q' \cdot  m(\wt{x})\rho')$.

\medskip

\item
$\State_1 \lred{} \State_2$ is 
produced by the axiom 
$A \triangleright ((B \invk m(\wt{E}) \prefix P)\rho, \varphi, q) , 
B \triangleright 	
	(P', \varphi', q')
	 \; \lred{} \;
	A \triangleright (P\rho, \varphi, q), B \triangleright (P', \varphi', q' 
	\cdot m(\wt{x})\rho )$. Similar to the previous case.

\medskip

\item
$\State_1 \lred{} \State_2$ is 
 produced by the axiom 
$A \triangleright (([E=E'] P \ite Q ) \rho , \varphi, q) 
	\; \lred{} \; A \triangleright (P\rho , \varphi, q)$.
There are three subcases (\emph{a}) both $E$ and $E'$ are variables;
(\emph{b}) $E$ is a variable and $E'$ is a field; (\emph{c}) $E$ and $E'$ are
both fields. In case (\emph{a}), if
$\rho(x) = \rho(y)$ then, since $\State_1 
\preceq \State_1'$, we have $\State_1' = A \triangleright 
(([x=y] P \ite Q ) \rho', \varphi, q')$ and $\rho \eqdot \rho'$, $q \leq q'$. 
By $\rho \eqdot \rho'$ and $\rho(x) = \rho(y)$, 
we obtain $\rho'(x) = \rho'(y)$. Therefore we may apply the same 
axiom to $\State_1'$ and derive $\State_1' \lred{} A \triangleright 
(Q  \rho', \varphi, q')$ with $\State_2 \preceq 
A \triangleright 
(Q  \rho', \varphi, q')$. If $\rho(x) \neq \rho(y)$, the argument is similar.

In case (\emph{b}), let $E = x$ and $E' = \f$. We first notice that $\rho 
\eqdot \rho'$ implies $[x =\f]\rho$ 
is true if and only if $[x = \f] \rho'$ is true. This because $\rho(x)$
if and only if $\rho'(x)$ is a variable. In this case they must be both different
from the value of $\f$, which is either an actor name or a free variable in the 
main process. If $\rho(x)$ is an actor name then $\rho'(x) = \rho(x)$. Then 
the argument follows as in case (\emph{a}).

In case (\emph{c}) we observe that $[\f =\f']\rho = [\f =\f'] = [\f =\f']\rho'$
and argue as before. 

\medskip

\item
Let $\State_1 \lred{} \State_2$ be
 produced by the axiom 
$A \triangleright (\pinull, \varphi, m(\wt{U}) \cdot q) \lred{} A \triangleright (
P\subst{A}{{\it this}}\subst{\wt{y'}}{\wt{y}}\subst{\wt{U}}{\wt{x}}, \varphi, q)$. We discuss this case below.
\end{enumerate}

The inductive cases deal with the choice operator and contexts. The proof when the
axiom is different from the dequeue operation is omitted because
straightforward. We detail the case of dequeue plus the contextual rules.
Let $\State_1 \lred{} \State_2$ be
 produced by the axiom 
$A \triangleright (\pinull, \varphi, m(\wt{U}) \cdot q) \lred{} A \triangleright (
P\subst{A}{{\it this}}\subst{\wt{y'}}{\wt{y}}\subst{\wt{U}}{\wt{x}}, \varphi, q)$,
with $\adef{C} \prefix m(\wt{x}) = P$, $\adef{C}$ being the class of $A$,
$\wt{y} = \free{P} \setminus \wt{x}$ and $\wt{y'} = \fresh{\wt{y}}$.
Therefore $\State_1 = A \triangleright (\pinull, m(\wt{U}) \cdot q),
\StateT_1$ and $\State_2 = A \triangleright (
P\subst{A}{{\it this}}\subst{\wt{y'}}{\wt{y}}\subst{\wt{U}}{\wt{x}}, q), \StateT_1$ [in this subcase we are reasoning on more complex
proof trees]. 

Since $\State_1 
\preceq \State_1'$ then $\State_1' = A \triangleright (\pinull, \varphi, 
n_1(\wt{V_1}) \cdots
n_h(\wt{V_h}) \cdot  m(\wt{V}) \cdot q'), \StateT_1'$ and 
$m(\wt{U}) \simeq m(\wt{V})$ and $q \leq q'$ and $\StateT_1 \preceq 
\StateT_1'$. By the operational semantics rules, we get
$\State_1' \lred{}^* A \triangleright (\pinull, \varphi, 
m(\wt{V}) \cdot q' \cdot q''), \StateT_1''$ with $\StateT_1' \preceq \StateT_1''$ and, by definition, $q \leq
q' \cdot q''$. At this stage, we notice that $A \triangleright (\pinull, 
\varphi, m(\wt{V}) 
\cdot q' \cdot q''), \StateT_1'' \lred{} A \triangleright (P
\subst{A}{{\it this}}\subst{\wt{y'}}{\wt{y}} \subst{\wt{V}}{\wt{x}}, \varphi, 
q' \cdot q''), \StateT_1''$, where $P\subst{A}{{\it this}}\subst{\wt{y'}}{\wt{y}}\subst{\wt{U}}{\wt{x}} \eqdot 
P\subst{A}{{\it this}}\subst{\wt{y'}}{\wt{y}}\subst{\wt{V}}{\wt{x}}$ because $m(\wt{U}) \simeq m(\wt{V})$. Hence
$A \triangleright (P\subst{A}{{\it this}}\subst{\wt{y'}}{\wt{y}}\subst{\wt{U}}{\wt{x}}, q), \StateT_1 \preceq
A \triangleright (P\subst{A}{{\it this}}\subst{\wt{y'}}{\wt{y}}\subst{\wt{V}}{\wt{x}}, \varphi, 
q' \cdot q''), \StateT_1''$.
\qed
\end{proof}

\noindent
{\bf Lemma~\ref{lem.pred}.} {\em
Let $({\cal S}, \lred{}, \preceq)$ be a well-structured transition system 
of a stateless program with finitely many  actors, and let $\State \in
{\cal S}$. Then there is a finite set ${\cal X} \subseteq \pred{\State}$ 
such that, for every $\State' \in \pred{\State}$, there is $\StateT \in {\cal X}$
with $\StateT \preceq \State'$. ${\cal X}$ can be effectively computed.
}

\begin{proof}
We show how to compute ${\cal X}$. Let $\State = 
A \triangleright (P,\varphi, q), \State'$. The \emph{predecessor processes} of $P$ are the following ones:
(\emph{i}) $\letin{x}{E}{P'}$, with $P = P' \subst{\wt{U}}{\wt{x}}$, for some 
$\wt{U}$ and some ${\wt{x}}$;
(\emph{ii}) $ x \invk m(E_1, \cdots , E_n) \prefix P$;
(\emph{iii}) $[U=U] \; P \ite Q$;
(\emph{iv}) $[U=V] \; Q \ite P$;
(\emph{v}) $P + Q$;
(\emph{vi}) $Q+P$;
(\emph{vii}) $P$ is an instance of a method body of the actor class of $A$.
If $A$ is of actor class $\adef{C}$ then we take all the method bodies of
$\adef{C}$ with a suffix matching one of the cases (\emph{i})--(\emph{vi}) 
above (in this case, the expressions in (\emph{ii}) are either variables or
actor names). If $A = \aleph$ then we look for a matching suffix of the 
main process. The above six cases are demonstrated 
in the presence of such suffixes.

We only discuss case (\emph{i}), the other ones are similar.
In case (\emph{i}), if $A$ is of actor class $\adef{C}$, then 
$E = y$, for some $y$. If $x \in \free{P'}$ then  ${\cal X}$ contains the configuration
$A \triangleright (\letin{x}{y}{P'}, \varphi, q), \State'$ with 
$P = P' \subst{y}{x}$. Otherwise ${\cal X}$ contains the configuration
$A \triangleright (\letin{x}{z}{P'},\varphi, q), \State'$, for $z
\in \free{P'}$ and for a unique $z \notin \free{P'}$.
When $A = \aleph$ then $E$ may be $\newact{C}$ (orherwise the 
argument is as before). 
If $x \in \free{P'}$ and $\State' = A' \triangleright (\pinull, \varphi, 
\varepsilon), \State''$ with $A' \in \adef{C}$
then  ${\cal X}$ contains the configuration
$A \triangleright (\letin{x}{\newact{C}}{P'},q), \State''$
(and this for every possible $A' \in \adef{C}$ such that
$A' \triangleright (\pinull, 
\varepsilon)$ is in $\State'$).
\qed
\end{proof}

\noindent
{\bf Theorem~\ref{thm.cs-reachability}.} {\em
In programs
with finitely many actors and read-only fields, the control-state reachability problem is decidable.
}

\begin{proof}
Let $\uparrow \State =
\{ \State' \in {\cal S} \; | \; \State \preceq \State'\}$. Let also
$\pred{\uparrow \State} = \{ \StateT \; | \; \StateT \lred{} \State'
\; \mbox{and} \; \State' \succeq \State\}$. 
By definition of $\preceq$, $\pred{\uparrow \State} \subseteq 
\pred{\State}$. Therefore 
$\uparrow \! \pred{\uparrow \State} \subseteq \; \uparrow \! \pred{\State} \subseteq
\; \uparrow \! {\cal X}$, where ${\cal X}$ is the finite set of Lemma~\ref{lem.pred}
that is effectively computable. The theorem follows from Theorem 3.6 
in~\cite{Finkel:2001}.
\qed
\end{proof}

\section{Proofs of Sections~\ref{sec.stateless}}

\begin{proposition}
\label{prop:abssemantics}
Let $\State$ be a state of a transition system 
of a stateless program.
\begin{enumerate}
\item
If $\State \lred{} \State'$ then 
$\abst{\State} \absred{} \abst{\State'}$;
\item
if $\abst{\State} \abtrans{\mulset}{\mulset'} A \triangleright (P, \varnothing, q),\StateT$
then 
$\State \trans{\mulset'}{} A' \triangleright (P', \varnothing, q'),\State'
\trans{\mulset''}{} \State'' $
such that
$P \simeq P'$
and there exists $\mulset'''$ such that
$\mulset \simeq \mulset'''$ and
$\mulset''' \subseteq \mulset' \uplus \mulset''$.
\end{enumerate}
\end{proposition}

\begin{proof}
The first item trivially holds because the new
rules used in the definition of $\absred{}$
are (strictly) more general than the corresponding
rules used in the definition of $\lred{}$.

The second item is proved by induction on the height of the proof
tree of $\abst{\State} \abtrans{\mulset}{} A \triangleright (P, \varnothing, q),\StateT$.

The base case is trivial as $\State \trans{\emptyset}{\emptyset} \State$
and as $P$ occurs in $\abst{\State}$ 
then it also occurs in $\State$.

In the inductive case there are two sub-cases, according to the last 
rule used in the proof tree of $\abst{\State} \abtrans{\mulset}{} A \triangleright 
(P, \varnothing, q),\StateT$. The first sub-case is when the last rule is 
\[
\bigfract{\abst{\State}
	\abtrans{\mulset}{} \StateT' \qquad (\StateT'
	\abtrans{}{} A \triangleright (P, \varnothing, q),\StateT \quad \mbox{proved without \rulename{invk-a} or \rulename{invk-sa}})
	}{
	\abst{\State} \abtrans{\mulset}{ } A \triangleright (P, \varnothing, q),\StateT}
\]
%
We proceed by case analysis on $\StateT' \abtrans{}{} A \triangleright (P, \varnothing, q),\StateT$.
\begin{itemize}
\item[--]
If  the process $P$ of $A \triangleright (P, \varnothing, q)$ is not evaluated in this transition then
$A \triangleright (P, \varnothing, q')$
is already present in $\StateT'$ for some $q'$. Therefore $\StateT' = A \triangleright (P, \varnothing, q'), \StateT''$ (for some $\StateT''$)
and we may apply the inductive hypothesis to
$\abst{\State} \abtrans{\mulset}{} A \triangleright (P, \varnothing, q''),\StateT''$ and conclude. 
\item[--]
If  the process $P$ of $A \triangleright (P, \varnothing, q)$ is evaluated in
$\StateT' \abtrans{}{} A \triangleright (P, \varnothing, q),\StateT$ then we only 
discuss when the rule used is an instance of \rulename{inst-a}
(the other cases are simpler).
In this case we have that 
$\abst{\State} \abtrans{\mulset}{} \StateT'$ and 
$\StateT' =
A \triangleright (\pinull, \varnothing, m(\wt{U}, A') \cdot q),\StateT''$.
Moreover, $m(\wt{U}, A') \in \mulset$. 

By inductive hypothesis 
$\State \trans{\mulset'}{} \State'
\trans{\mulset''}{} \State'' $ 
and there exists $\mulset'''$
such that $\mulset \simeq \mulset'''$ and 
$\mulset''' \subseteq \mulset' \uplus \mulset''$. 
As $\mulset \simeq \mulset'''$,
we have that $\mulset'''$ contains
a message $m(\wt{U'}, A'')$ such that 
$m(\wt{U'}, A'') \simeq m(\wt{U}, A')$,
hence $A''=A'$.

The thesis follows from the existence
of an extension of this last computation 
that completes the execution of the main program
and (at least) of all the method invocations in $\mulset'''$.
Notice that the execution of the method invocation $m(\wt{U'}, A')$
will instantiate a process $P'$ such that
$P \simeq P'$.

\item[--]
If the process
$A \triangleright (P, \varnothing, q)$ has been created by the last
transition, we have that $P = \pinull$,
$\abst{\State} \abtrans{\mulset}{} \StateT'$
and
$\StateT' = A' \triangleright (\newact{\adef{C}}() \prefix P', 
\varnothing, q'), \StateT''$.
By inductive hypothesis 
$\State \trans{\mulset'}{} A'' \triangleright (\newact{\adef{C}}() \prefix P'',\varnothing, q''),
\State'
\trans{\mulset''}{} \State'' $
and there exists $\mulset'''$
such that $\mulset \simeq \mulset'''$ and 
$\mulset''' \subseteq \mulset' \uplus \mulset''$. 
Thesis follows from the existence
of an extension of this last computation 
in which the $\newact{\adef{C}}() \prefix P''$
process performs the creation of the new object
that will be initialized with an empty $\pinull$ process.

\end{itemize}

The second sub-case is when the last rule of 
$\abst{\State} \abtrans{\mulset}{\mulset'} A \triangleright (P, \varnothing, q),\StateT$ is
\[
\bigfract{
	\abst{\State} \abtrans{\mulset_1}{} A \triangleright (P', \varnothing, q'), \StateT'
	\qquad 
	A \triangleright (P', \varnothing, q'), \StateT'
	\abtrans{}{} A \triangleright (P, \varnothing, q' \cdot m(\wt{U},A')),\StateT
	}{
	\abst{\State} \abtrans{\mulset_1 \uplus \{m(\wt{U},A')\}}{ } A \triangleright (P, \varnothing, q' \cdot m(\wt{U},A')),\StateT}
\]
with $\mulset = \mulset_1 \uplus \{m(\wt{U},A')\}$ and $q = q' \cdot m(\wt{U},A')$.
In this case we have that 
$A \triangleright (P', \varnothing, q'), \StateT'=B \triangleright (A' \invk m(\wt{E}) \prefix Q,\varnothing,q''), \StateT''$ 
with $\wt{E} \lleadsto{\varnothing} \wt{U}$,
that is, the last rule is a method invocation.
We can apply the inductive hypothesis to
$\abst{\State} \abtrans{\mulset_1}{} B \triangleright (A' \invk m(\wt{E}) \prefix Q,\varnothing,q''), \StateT''$.
Hence, we have 
$\State \trans{\mulset'}{} B' \triangleright (A'' \invk m(\wt{E'}) \prefix Q',\varnothing,q'''
),
\State'
\trans{\mulset''}{} \State'' $
with $A' \invk m(\wt{E}) \prefix Q \simeq A'' \invk m(\wt{E'}) \prefix Q'$
and there exists $\mulset_1'$ such that $\mulset_1 \simeq \mulset_1'$
and $\mulset_1' \subseteq \mulset' \uplus \mulset''$.
Also in this case 
the thesis follows from the existence
of an extension of this last computation 
that completes the execution of the main program
and (at least) of all the method invocations in $\mulset_1'$.
Notice that the execution of the method invocation $A'' \invk m(\wt{E'})$
will have the effect of adding $m(\wt{U},A')$ to the multiset because
$A'' \invk m(\wt{E'}) \simeq A' \invk m(\wt{E})$.
Moreover, also notice that the process $P'$ in
$\abst{\State} \abtrans{\mulset_1}{} A \triangleright (P', \varnothing, q'), \StateT'$
is a suffix
of either the main process or the body of a method invocation
in $\mulset_1$, so the considered extended computation 
surely traverses a configuration including a process $P''$ such that
$P' \simeq P''$.
\qed
\end{proof}

As a consequence we have that the abstract semantics preserves
both termination and 
process reachability.

\medskip

\noindent
{\bf Proposition~\ref{prop:asbsemantics}.} {\em
Let $\State$ be a state of a transition system 
of a stateless program.
\begin{itemize}
\item
$\State$ \emph{terminates} in the concrete
transition system if and only if $\abst{\State}$ 
\emph{terminates} in the abstract transition system;
\item
given a process $P$, 
there exist $A'$, $q'$, and $\State'$
such that
$\State \lred{}^* A' \triangleright (P, \varnothing, q'),\State'$
if and only if
there exist $A''$, $q''$, and $\State''$
such that
$\abst{\State} \longrightarrow_\alpha^* A'' \triangleright (P, \varnothing, q''),\State''$.
\end{itemize}
}

\begin{proof}
The second item is a direct consequence of Proposition~\ref{prop:abssemantics}.
For the first item we first observe that if $\State$ has an infinite
computation then the same also holds for $\abst{\State}$
(see the first item of 
Proposition~\ref{prop:abssemantics}).
We now assume that $\State$ terminates.
This implies the existence of an upper bound $k$ such that
if $\State \trans{\mulset}{}  \State'$ then $|\mulset| \leq k$.
We now proceed by contradiction assuming that 
$\abst{\State}$ does not terminate. If this is the
case, there exists $\StateT$ such that 
$\abst{\State} \trans{\mulset'}{}  \StateT$ with $|\mulset'| > k$.
By the second item of Proposition~\ref{prop:abssemantics}
we have that there exists $\State''$ such that $\State \trans{\mulset''}{}  \State''$ such that $|\mulset''| \geq |\mulset'| > k$.
But this contradicts the assumption on the upper bound $k$.
\qed
\end{proof}

\noindent
{\bf Theorem~\ref{thm.decidableforunbounded}.} {\em
Let $({\cal S},\absred{})$ be the abstract transition system of a stateless
program.
Then $({\cal S}, \absred{}, \preceq_\alpha)$ is a well-structured transition system.
}

\begin{proof}
The proof is as in Theorem~\ref{thm.decidablestatelessandfinite}
with few differences that are discussed below.

In part {\bf (1)} the unique difference is in the 
last part where the coordinatewise order $\sqsubseteq^\ell$
on sequences (of length $\ell$) of queues of messages 
is used. As we now consider configurations with an unbounded 
number of actors, instead of configurations with a bounded
number $\ell$ of actors, we need to resort to the 
embedding $\sqsubseteq_\alpha$ defined as follows:
\[
\bigfract{
	q_i \leq_\alpha q_{j_i}' \quad \mbox{for } \; i \in 1..\ell, \; 
	1 \leq j_1 < j_2 < \cdots < j_\ell \leq \kappa
	}{
	 (q_1, \cdots , q_\ell) 
 \; \sqsubseteq_\alpha (q_1', \cdots , q_\kappa')
 }
\]
The final contradiction of part {\bf (1)}
is now reached by observing that by Highman's lemma, 
also $\sqsubseteq_\alpha$ is a well-quasi ordering, as a consequence
of the well-quasi ordering $\leq_\alpha$.

In part {\bf (2)} the unique difference
is for the monotonicity 
transitions due to rules \rulename{invk-a} and \rulename{invk-sa}. The greater configuration
is guaranteed to have a program ready to perform
a corresponding method invocation, but this could 
be addressed to a different actor.
In fact, the ordering $\preceq_\alpha$
does not preserve actor names as it was for $\preceq$ 
in the proof of Theorem~\ref{thm.decidablestatelessandfinite}. 
But $\preceq_\alpha$ preserves at least actor
classes. As the abstract transition
system $\absred{}$ allows a message to be
introduced in the queue of any of the actor 
belonging to the same class,
the method invocation executed by the greater
configuration can be introduced in the 
queue of the actor corresponding to the target
of the method invocation executed by the
smaller configuration.
\qed
\end{proof}

\noindent
{\bf Lemma~\ref{lem.predunbounded}.} {\em
Let $({\cal S}, \absred{}, \preceq_\alpha)$ be a 
well-structured transition system 
of a stateless program, and let $\State \in
{\cal S}$. Then there is a finite set ${\cal X}$
such that, for every $\State' \succeq_\alpha \State$ and 
$\State'' \in \pred{\State'}$, there is $\StateT \in {\cal X}$
with $\StateT \preceq_\alpha \State''$. ${\cal X}$ can be effectively computed.
}

\begin{proof}
The computation of ${\cal X}$ must extend the 
construction presented in the proof of Lemma~\ref{lem.pred}
in two ways.

The extension {\bf(1)} is trivial and deals with the fact
that in the abstract semantics a method invocations can 
be introduced in the queue of any of the actors
belonging to the same class of the expected target actor.

The extension {\bf(2)} derives from the fact that, 
differently from the ordering $\preceq$ considered 
in Lemma~\ref{lem.pred}, if $\State \preceq_\alpha  \State'$ 
it could be possible for $\State'$ to have strictly
more actors than $\State$. We can cope with this problem
by applying the procedure described in the proof
of Lemma~\ref{lem.pred} not only to the configuration
$\State$, but to all the configurations in a computable set 
representing a finite basis for all the configurations 
$\State' \succeq_\alpha \State$ with strictly more actors
than $\State$.
This set includes all the extensions of $\State$ with
one or two additional actors belonging to one of the
finite classes of the considered program. There
finitely many combination of actor names up-to 
$\preceq_\alpha$.
Each of these additional actors executes a process obtained
by applying a renaming $\rho$ to a suffix of 
one of the method definitions
of the corresponding class. As observed
above, there are finitely many processes that can 
be obtained up-to $\preceq_\alpha$.
Finally, the additional actors
have a queue including at most one
method invocation. Also in this
case, by considering the 
method definitions of the actor class,
it is easy to see that there are finitely
many different method invocations up-to $\preceq_\alpha$.

The set ${\cal X}$ is computed by applying the 
procedure of the proof of Lemma~\ref{lem.pred}
with the extension {\bf (1)}, to $\State$
as well as to all the other configurations  
computed as in the extension {\bf (2)}.
\qed
\end{proof}

\fi

\fi

\end{document}


\section{A Formal Actor Model}
The syntax of our language uses three disjoint infinite sets of names:
\emph{actor names}, ranged over $A$, $A'$, $B$, $B'$, $\cdots$,  
\emph{method names}, ranged over $m$, $m'$, $n$, $n'$, $\cdots$, and
\emph{variables}, ranged over $x$, $y$, $z$, $\cdots$.
For notational convenience, we use $\xbar$ when we refer to a list of variables $x_1,\dots,x_n$.

The syntax of \emph{processes} is defined by the rules
\[
 P \quad ::= \qquad \pinull  \qquad | \qquad  A \invk m(\xbar) \prefix P \qquad | \qquad
  \tau \prefix P \qquad |\qquad  P+P 
\]
The process $\pinull$ is the terminated process; the process $A \invk m(\xbar)
\prefix P$ invokes 
 $m$ of the actor $A$ with arguments $\xbar$ and continues with $P$.
In our language, method invocations are \emph{asynchronous}: the continuation
starts without waiting for the return of the called method. The process 
$\tau \prefix P$ performs an internal action, generically called $\tau$, and evolves to $P$;
$P + Q$ is the nondeterministic process that either behaves as $P$ or as $Q$.

A \emph{program} is a finite set of \emph{method definitions} $A \prefix 
m(\xbar) = P_{A,m}$
and a \emph{main process} $P = A \invk m(\wt{u}) \prefix \pinull$ that is
\begin{enumerate}
\item
\emph{unambiguous}, namely, every pair $A$, $m$ has at most one definition;
\item
\emph{correct}, namely, for every method invocation $B \invk n(\wt{z})$ in $P_{A,m}$ and in $P$, there is a method definition $B \invk n(\xbar) = P_{B,n}$ such that 
the length of $\wt{z}$ and $\xbar$ is the same.
\end{enumerate}

In the method definition $A.m(\xbar) = P_{A,m}$, the variables $\xbar$ occurring in $P_{A,m}$ are called \emph{bound}. The other variables are said \emph{free}. 
Let $\var{\cdot}$ be the function returning the set of variables in a process. Then
the free variables of $P_{A,m}$ are $\var{P_{A,m}}\setminus \xbar$; these
variables will be addressed by $\free{A.m}$.
Let $\fresh{\cdot}$ be a function taking a tuple of variables and returning a 
tuple of the same length of variables that are fresh.

{\bf say something about free names and give example}

The operational semantics of the actor language is defined over states $\State$ 
that are sets 
of terms $A \triangleright (P,q)$ where $q$ is a queue of terms $m(\xbar)$ 
such that different terms $A \triangleright (P,q)$ and $B \triangleright (P',q')$ have
$A \neq B$. In the following, for notational convenience, we will always
omit the standard curly brackets in the set notation of states.
The empty queue will be denoted with $\varepsilon$.

The rules defining the transition relation $\State \lred{} \State'$
are as follows:

\[
\begin{array}{c}
 A \triangleright (\tau \prefix P, q) \; \lred{} \; A \triangleright (P,q) 
\\
\\
 A \triangleright (A \invk m(\xbar) \prefix P, q) \; \lred{} \; A \triangleright (P, q \cdot  m(\xbar)) 
\\
\\
 A \triangleright (B \invk m(\xbar) \prefix P, q) , B \triangleright (P', q') 
 \; \lred{} \;
A \triangleright (P,q), B \triangleright (P',q' \cdot m(\xbar)) 
\\
\\
\bigfract{A \prefix m(\wt{x}) = P 
	\quad 
 	\wt{y} = \free{A.m} 
	\quad
	\wt{u} = \fresh{\wt{y}}
	}{
	A \triangleright (\pinull, m(\wt{z}) \cdot q) 
	\; \lred{} \; 
	A \triangleright(P\subst{\wt{u}}{\wt{y}} 
	\subst{\wt{z}}{\wt{x}},q)
	}
\\ 
\\
\bigfract{A \triangleright (P, q) , \State 
	\; \lred{} \; 
	A \triangleright(P', q') , \State'
	}{
	A \triangleright(P + Q, q) , \State' \; \lred{} \; A \triangleright(P', q') , 
	\State'
	}
\qquad
\bigfract{A \triangleright (P, q) , \State 
	\; \lred{} \; 
	A \triangleright(P', q') , \State'
	}{
	A \triangleright(Q + P, q) , \State' \; \lred{} \; A \triangleright(P', q') , 
	\State'
	}
\end{array}
\]
Given a program, with main process $A \invk m(\wt{u}) \prefix \pinull$,
the initial state of the program is the set \[
A \triangleright (\pinull, m(\wt{u})), B_1 \triangleright (\pinull, \varepsilon),
\cdots , B_\ell \triangleright (\pinull, \varepsilon)
\]
where $A$, $B_1$, $\cdots$, $B_\ell$ are the actor names occurring in the program.
An immediate consequence is that if $({\cal S}, \lred{})$ is the transition system of a program (we are omitting the initial state in the notation) and $A \triangleright (P, q)$ is an element of some state in
${\cal S}$, then $P$ is a subterm of the body of some method of $A$. This means that
the number of $P$ in an element $A \triangleright (P, q)$ are \emph{finitely many}.